\documentclass[reqno,11pt]{amsart}

\usepackage[T1]{fontenc}
\usepackage{lmodern}
\usepackage{amsmath,amssymb,amsthm,mathtools}
\usepackage[margin=1in]{geometry}
\usepackage{enumitem}
\usepackage[protrusion=true,expansion=false]{microtype}
\usepackage[hidelinks]{hyperref}

\newcommand{\R}{\mathbb R}
\newcommand{\dd}{\,\mathrm d}
\newcommand{\Ric}{\operatorname{Ric}}
\newcommand{\BK}{\mathcal M_{\mathrm{BK}}}
\newcommand{\Mcan}{\mathcal M_{\mathrm{can}}}

\numberwithin{equation}{section}

\theoremstyle{plain}
\newtheorem{theorem}{Theorem}[section]
\newtheorem{proposition}[theorem]{Proposition}
\newtheorem{lemma}[theorem]{Lemma}
\newtheorem{corollary}[theorem]{Corollary}
\newtheorem{maintheorem}{Theorem}

\theoremstyle{definition}
\newtheorem{definition}[theorem]{Definition}

\theoremstyle{remark}
\newtheorem{remark}[theorem]{Remark}

\title[$C^0$-inextendibility of Birmingham-Kottler spacetimes]
{Warped Spacelike Singularities and the $C^0$-Inextendibility of Birmingham-Kottler Spacetimes}

\author{Bobby Eka Gunara}
\address{Theoretical Physics Laboratory, Theoretical High Energy Physics Research Division,
Faculty of Mathematics and Natural Sciences, Institut Teknologi Bandung,
Jl.\ Ganesha no.\ 10, Bandung 40132, Indonesia}
\email{bobby@itb.ac.id}

\subjclass[2020]{83C75 (Primary); 53C50, 83C57 (Secondary)}
\keywords{$C^0$-inextendibility, Birmingham-Kottler spacetime,
warped product, spacelike singularity, continuous Lorentzian metric,
timelike geodesic}
\date{}

\hypersetup{
  pdftitle={Warped Spacelike Singularities and the C0-Inextendibility of Canonical One-Horizon Birmingham-Kottler Spacetimes},
  pdfauthor={Bobby Eka Gunara},
  pdfsubject={Low-regularity Lorentzian geometry and black-hole spacetimes},
  pdfkeywords={C0-inextendibility, Birmingham-Kottler spacetime, warped product, spacelike singularity}
}

\begin{document}
\raggedbottom

\begin{abstract}
We establish a local obstruction to continuous Lorentzian extensions at a
class of warped spacelike singularities.  The criterion is expressed through
two integrability conditions, a monotonicity condition on the relative warp
factors, and divergence of the longitudinal factor; it does not require any
symmetry of the closed fiber.  The main geometric step is a radial compression
of terminal causal traces, which replaces the rotational deformation available
in spherical symmetry.  The compression yields a compact chronological
separator in an adapted boundary chart.  Longitudinal translations then
produce radial-slice distances that diverge intrinsically while remaining
uniformly controlled in the extension chart.  Next, we construct the canonical
one-horizon Birmingham-Kottler spacetime in global Kruskal coordinates and
classify all finite proper-time ends of its timelike geodesics.  Every
boundary-approaching finite maximizer supplied by a putative extension is
thereby forced to the singular end.  The local obstruction and the geodesic
classification imply $C^0$-inextendibility for the one-horizon
Birmingham-Kottler family with nonpositive cosmological constant and every
closed connected fiber satisfying
$\operatorname{Ric}_{\gamma_\Sigma}=(n-2)k\gamma_\Sigma$, without assumptions
of homogeneity, orientability, or simple connectivity.
\end{abstract}

\maketitle

\tableofcontents

\section{Introduction}\label{sec:introduction}

A smooth spacetime is $C^0$-inextendible if it cannot be embedded as a proper
open subset of a spacetime with a continuous Lorentzian metric.  At this
regularity curvature is unavailable: even the blow-up of a scalar invariant
does not, by itself, rule out a continuous extension.  Proofs of
$C^0$-inextendibility must instead use features of the metric and its causal
geometry that remain meaningful without derivatives.  Sbierski's treatment of
maximal Schwarzschild spacetime is the basic example
\cite{SbierskiSchw,SbierskiProof}.  Further criteria use causal divergence and
one-connectedness, or timelike geodesic completeness together with global
hyperbolicity \cite{GL,GLS}.

The purpose of this paper is to isolate a local obstruction at a warped
spacelike singularity and to combine it with a global analysis of the
one-horizon Birmingham-Kottler family
\cite{Kottler,Birmingham,Tangherlini}.  The local result is proved first and
then applied to the canonical global spacetime.

Throughout the paper, a closed manifold is understood to be compact and
without boundary.

\subsection{The local obstruction}

Let $(\Sigma,\gamma_\Sigma)$ be an arbitrary closed connected Riemannian
manifold, and consider
\begin{equation}\label{eq:intro-warped-metric}
 M_R=(0,R)_r\times\R_t\times\Sigma,
 \qquad
 g=-a(r)^2\dd r^2+b(r)^2\dd t^2+c(r)^2\gamma_\Sigma,
\end{equation}
where $a,b,c$ are smooth and positive.  We orient time so that $r$ decreases
along future-directed causal curves and put
\begin{equation}\label{eq:intro-AB}
 A(r)=\int_0^r\frac{a(s)}{b(s)}\dd s,
 \qquad
 B(r)=\int_0^r\frac{a(s)}{c(s)}\dd s.
\end{equation}

The first main result is the local warped-singularity obstruction.  It is
purely metric and uses no field equation.

\begin{maintheorem}[Warped spacelike singularities]
\label{thm:main-local}
Suppose that there is $R_0\in(0,R)$ for which the following conditions hold.
The values $A(R_0)$ and $B(R_0)$ are finite, the quotient $b/c$ is
nonincreasing on $(0,R_0)$, and
$b(r)\to\infty$ as $r\searrow0$.  No $C^0$ extension of
\eqref{eq:intro-warped-metric} contains a future boundary point approached by
a future-directed timelike curve along which $r\to0$.  The analogous statement
holds after reversing the time orientation.
\end{maintheorem}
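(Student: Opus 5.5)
We argue by contradiction, following Sbierski's Schwarzschild strategy as outlined in the introduction. Suppose $\iota:M_R\hookrightarrow(\tilde M,\tilde g)$ is a $C^0$ extension and $p\in\partial\iota(M_R)$ is a future boundary point reached by a future-directed timelike curve $\sigma$ along which $r\to0$. First we fix a cylindrical chart $\varphi:U\to(-\varepsilon_0,\varepsilon_0)\times(-\varepsilon_1,\varepsilon_1)^{d}$ around $p$, with $\tilde g$ close to Minkowski. In this chart $\partial_{x_0}$ is uniformly timelike, and there is a Lipschitz graph $x_0=f(\underline x)$ such that points above the graph lie outside $\iota(M_R)$ and points below lie inside. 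This is the standard structure lemma for future boundaries.

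The first substantive step is the radial compression of terminal causal traces. It uses the finiteness of $A(R_0)$ and $B(R_0)$: a null curve with $r\searrow0$ changes $t$ by at most $A(R_0)$ and moves in $\Sigma$ by $\gamma_\Sigma$-length at most $B(R_0)$. For each point $q=(r_q,t_q,\theta_q)$ near $\sigma$ with small $r_q$, the causal future $J^+(q)\cap\{r<r_q\}$ is therefore confined to a set $\{|t-t_q|\le A(r_q),\ d_\Sigma(\theta,\theta_q)\le B(r_q)\}$ that shrinks as $r_q\to0$. The monotonicity of $b/c$ is what makes this a compression toward a single radial ray without using any isometry of $\Sigma$. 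Compare the dilation $(r,t,\theta)\mapsto(r,t_q+\lambda(t-t_q),\cdot)$ with the flow contracting $\Sigma$-displacements: because $b/c$ is nonincreasing, the $t$-directional budget dominates and timelike curves stay timelike under a suitable radial reparametrisation. This gives the following. If $\sigma(s)$ lies in the chart and $\sigma(s')$, for $s'>s$, is reachable, then a compact set $K=\{r=r_1\}\cap\{|t-t_\ast|\le T,\ \theta\in\Sigma\}$ chronologically separates. Every timelike curve from $\iota(\sigma(s))$ to points near $p$ passing into $\{r<r_1\}$ meets $I^+$ of a point of $K$, and $\iota(K)$ is a compact subset of the chart.

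The second step is the longitudinal translation argument. The translations $\tau_h:t\mapsto t+h$ are isometries of $M_R$. Consider points $q_h=(r_1,t_\ast+h,\theta_0)$ for $h$ in a bounded range. The compact separator, together with the local graph structure, ensures that for suitable $h$ these points stay in $\iota^{-1}(U)$ with chart coordinates in a fixed compact set. The chart then yields a uniform bound $d_{\tilde g}$-like, say Euclidean in $\varphi$, on their separation. Intrinsically, however, on the slice $\{r=r_1\}$ the induced metric is $b(r_1)^2dt^2+c(r_1)^2\gamma_\Sigma$. We then send $r_1\to0$ along $\sigma$. The monotonicity of $b/c$ together with $b\to\infty$ gives slice distance at least $b(r_1)|h|\to\infty$ for fixed $t$-separation $|h|$ bounded below. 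Using causality, we show that points of $\{r=r_1\}$ in the chart whose $t$-values differ by a fixed amount have chart-distance bounded above uniformly in $r_1$. We also show that the spacelike slice's induced length in the chart is bounded by a constant times chart length, since $\tilde g$ is close to Minkowski and the slice is achronal, hence a Lipschitz graph with constant less than $1$. This contradicts divergence. The time-reversed statement follows by applying the argument to the time-dual.

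\textbf{Main obstacle.} The hard step is the compression lemma. It must show, without symmetry of $\Sigma$, that the part of the $r=r_1$ slice seen by the chart contains $t$-translates over a definite $t$-interval independent of $r_1$. I would first try to prove that $\iota^{-1}(I^-(p,U))\cap\{r=r_1\}$ contains a product $[t_\ast-\delta,t_\ast+\delta]\times B_\Sigma$. This would use the estimate that the past light cone of a point at $r\approx0$ covers a $t$-interval of width about $A(r_1)$ at level $r_1$. It is not yet clear that this width is bounded below uniformly in $r_1$, because $A(r_1)\to0$. If it is not, one has to combine it with the convergence of $\sigma$ to $p$ and the continuity of the graph $f$ to get enough room.
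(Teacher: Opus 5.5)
Your outline has the right overall shape: a graph chart at $p$, a compression built from $B$ and the monotonicity of $b/c$, $t$-translations, and the slice lower bound $2\lambda b(r_j)$ played against a uniform chart bound. The step that carries the proof is missing, however, and the route you suggest for it cannot work. What must be shown is that for one fixed $\lambda>0$ and \emph{every} sufficiently small level $r_j$, the two points $T_{\pm\lambda}\eta_{s_j}$ of $S_j=\{r=r_j\}$ have coordinate images in one fixed compact subset of the chart, below the graph. A product $[t_*-\delta,t_*+\delta]\times B_\Sigma$ inside $\iota^{-1}(I^-(p;U))\cap\{r=r_1\}$ with $\delta$ independent of $r_1$ does not exist. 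If $\iota(z)\ll p$ in the chart, openness of chronological futures gives $\iota(z)\ll\iota(\eta_s)$ for some late axis point. The connecting past-directed curve starts below the achronal graph and therefore stays in $\iota(M_R)$. Integrating $|t'|\le a/b$ along it and along the terminal axis gives $|t(z)-t_*|\le A(r(z))\to0$. So the translated points must approach \emph{other} boundary points, and continuity of the graph gives no control over them. Indeed, $T_\xi$ is not an isometry of the extension, $\iota\circ T_\xi\to\iota$ only uniformly on compact subsets of $M_R$, and the distance between $T_{-\xi}\eta_s$ and $T_\xi\eta_s$ inside their common slice is $2|\xi|\,b(r(\eta_s))\to\infty$. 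Your separator $K=\{r=r_1\}\times[t_*-T,t_*+T]\times\Sigma$ does not help either. It contains the whole fiber, whereas the points of $\{r=r_1\}$ in the past of the late axis lie within $\Sigma$-distance $B(r_1)$ of $\omega_*$, so nothing places $\iota(K)$ in the chart. It also lives on a fixed level $r_1$ that you later send to zero.

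The paper supplies the missing mechanism as follows.
\begin{itemize}
\item The compression $B(\phi_\varepsilon(r))=\sigma_\varepsilon(B(r)-B(\varepsilon))$ is a pure reparametrization in $r$. There is no dilation in $t$ and no contracting flow on $\Sigma$, which a general fiber does not possess. It shows $(t_*,\varepsilon,\omega_*)\in I^+(\eta(R_0))$, and hence yields $\rho>0$, independent of $s$, such that every $z\ll\eta_s$ with $r(z)<\rho$ already lies in $I^+(\eta_{s^+})$.
\item Consequently the set $\mathcal E$ of points of $\bigl(\bigcup_{s^+<s<0}I^-(\eta_s)\bigr)\cap I^+(\eta_{s^-})$ outside $I^+(\eta_{s^+})$ has $r\ge\rho$ and bounded $t$, so its closure $\mathcal K$ is compact. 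By causal simplicity and push-up, $\mathcal K$ meets every past-directed timelike curve from a late axis point into $I^-(\eta_{s^-})$.
\item A diamond identity, proved with the fixed-endpoint timelike homotopies of Lemma~\ref{cen:lem:small-interior}(ii) and the graph lemma, places $\varphi(\iota(\mathcal K))$ in the interior $L$ of a compact cone lens. Compactness then gives $\varphi(\iota(T_\xi\mathcal K))\subset L$ for $|\xi|\le\lambda$.
\item Suppose the coordinate image of a translated tail $T_\xi\eta_s$, $s_0\le s<0$, left $L$. A past timelike path along $\partial\overline L$ from the exit point to the lower vertex would pull back and, after applying $T_{-\xi}$, give a curve from $\eta_s$ into $I^-(\eta_{s^-})$. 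It would therefore meet $\varphi(\iota(T_\xi\mathcal K))\subset L$, which is impossible on $\partial\overline L$.
\end{itemize}
This uniform containment $\varphi(\iota(T_{\pm\lambda}\eta_s))\in L$ for all late $s$ is exactly your ``main obstacle''. Only with it does the bound $d_{S_j}(p_j,q_j)\le C_0$ follow, from the description of $S_j$ as a uniformly sloped graph over a fixed convex spatial domain; here vertical curves from a bottom level meet $S_j$ exactly once. Note also that the lower bound $d_{S_j}\ge2\lambda b(r_j)$ uses only the product form of the slice metric; the monotonicity of $b/c$ is needed only in the compression.
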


The geometric meaning of the assumptions is visible in the causal estimate
\begin{equation}\label{eq:intro-causal-budget}
 \left(\frac{b}{a}\right)^2|t'|^2
 +\left(\frac{c}{a}\right)^2|\omega'|_{\gamma_\Sigma}^2\le1.
\end{equation}
The functions $A$ and $B$ control, respectively, the variation of $t$ and the
length of the fiber projection.  Both components therefore converge as
$r\searrow0$.  The divergence of $b$ supplies the final contradiction: on the
complete hypersurface $\{r=\rho\}$, points whose $t$-coordinates differ by
$2\lambda$ have distance at least $2\lambda b(\rho)$.

The new point is the localization of this divergence for a fiber without
isometries.  For $0<\varepsilon<r$ and $0<\vartheta<1$, define
$\phi_\varepsilon$ by
\begin{equation}\label{eq:intro-compression}
 B(\phi_\varepsilon(r))
 =\vartheta\bigl(B(r)-B(\varepsilon)\bigr).
\end{equation}
Differentiation gives
\[
 \phi_\varepsilon'(r)
 =\vartheta\frac{c(\phi_\varepsilon(r))/a(\phi_\varepsilon(r))}
                         {c(r)/a(r)}.
\]
The angular contribution to \eqref{eq:intro-causal-budget} is multiplied by
$\vartheta^2$.  The time contribution is multiplied by at most the same
factor, since $\phi_\varepsilon(r)\le r$ and $b/c$ is nonincreasing.  This
compression moves the limiting trace from $r=0$ to a positive radial level
through uniformly timelike curves.  It produces a compact separator in an
adapted boundary chart.  Translations in the $t$-direction then give two points
on each of a sequence of small radial levels whose chart distance is uniformly
bounded, whereas their intrinsic distance tends to infinity.

\subsection{The global application}

Let $n\ge3$, $m>0$, and $\Lambda\le0$, and set
\begin{equation}\label{eq:intro-f}
 f(r)=k-\frac{2m}{r^{n-2}}-
       \frac{2\Lambda}{n(n-1)}r^2,
 \qquad k\in\{-1,0,1\},
\end{equation}
where $k=1$ if $\Lambda=0$.  If the closed connected manifold
$(\Sigma^{n-1},\gamma_\Sigma)$ satisfies
\begin{equation}\label{eq:intro-Einstein-fiber}
 \Ric_{\gamma_\Sigma}=(n-2)k\gamma_\Sigma,
\end{equation}
then the block metric
\begin{equation}\label{eq:intro-BK-block}
 g=-f(r)\dd t^2+f(r)^{-1}\dd r^2+r^2\gamma_\Sigma
\end{equation}
solves $\Ric_g=2\Lambda g/(n-1)$.  The function $f$ is strictly increasing and
has one positive simple zero.  We denote by $\BK$ the canonical Kruskal
extension through this horizon.  The preceding local result is then applied
to obtain the second main theorem.

\begin{maintheorem}[Canonical one-horizon Birmingham-Kottler spacetimes]
\label{thm:main-global}
The spacetime $\BK$ is $C^0$-inextendible.  The conclusion holds for every
closed connected Einstein fiber satisfying
\eqref{eq:intro-Einstein-fiber}; the fiber need not be homogeneous, orientable,
or simply connected.
\end{maintheorem}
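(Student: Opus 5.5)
We argue by contradiction: suppose $\iota\colon\BK\hookrightarrow(\tilde M,\tilde g)$ is a $C^0$ extension with $\tilde g$ continuous and $\iota(\BK)\neq\tilde M$. The first step invokes the standard low-regularity machinery (as in Sbierski and Galloway--Ling): there is a point $\tilde p\in\partial\iota(\BK)$ and a timelike curve $\tilde\sigma$ in $\tilde M$ that starts inside $\iota(\BK)$, leaves it exactly at $\tilde p$, and is future (or, after reversing time, past) directed. Inside a small chart around $\tilde p$ we can arrange a region that is globally hyperbolic for a slightly wider cone metric. Using limit-curve arguments and maximizing causal curves for continuous metrics, we then replace $\tilde\sigma$ by a curve whose portion in $\BK$ is a future-directed timelike geodesic. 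This geodesic is inextendible in $\BK$, has finite proper time, and its image converges to $\tilde p$. This is the ``boundary-approaching finite maximizer'' of the abstract. Since it is smooth inside $\BK$, the geodesic equations of $\BK$ apply to it.

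The second step is the global classification of finite proper-time ends of timelike geodesics in $\BK$. In Kruskal coordinates $(U,V)$ times $\Sigma$, the metric has the form $-F(r)\,\dd U\,\dd V+r^2\gamma_\Sigma$. The region is bounded only by the singular hyperbolae $r=0$ (future and past) and by conformal infinity. The fiber momentum gives a conserved quantity: $r^2|\dot\omega|_{\gamma_\Sigma}=L$ constant after reparametrizing the fiber geodesic. Together with the energy $E$ from the Killing field $\partial_t$, this reduces the motion to the radial equation $\dot r^2=E^2-f(r)(1+L^2/r^2)$ in each block, where $f$ is the function of \eqref{eq:intro-f}. Because $f$ is increasing with one simple zero and $\Lambda\le0$, the following hold. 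In the exterior, $r\to\infty$ needs infinite proper time when $\Lambda=0$, and when $\Lambda<0$ timelike geodesics cannot reach infinity. Horizon crossings are regular in Kruskal coordinates. Inside, $r$ is a time function that decreases strictly. Hence every finite proper-time future end has $r\to0$ in the black hole region, and symmetrically in the white hole region for past ends. The delicate points are the bifurcation sphere and geodesics with $E=0$, which must be handled directly in Kruskal coordinates. Fiber compactness ensures that no end escapes along $\Sigma$.

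The third step localizes the situation to Theorem~\ref{thm:main-local}. In the black hole interior, set $r$ as the time coordinate. Then $g=-f^{-1}\dd r^2+(-f)\dd t^2+r^2\gamma_\Sigma$ has the form \eqref{eq:intro-warped-metric} with $a=(-f)^{-1/2}$, $b=(-f)^{1/2}$, $c=r$. Near $r=0$ we have $-f\sim 2m r^{2-n}$, so $a/b=1/(-f)\sim r^{n-2}/(2m)$ is integrable. Also $a/c\sim r^{(n-2)/2-1}/\sqrt{2m}$ is integrable since $n\ge3$. The function $b\to\infty$. Finally, $b/c=\sqrt{-f}/r$ is nonincreasing on a small interval $(0,R_0)$, because $-f/r^2=2mr^{-n}-k/r^2+2\Lambda/(n(n-1))$ is decreasing for small $r$. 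Theorem~\ref{thm:main-local} then forbids the boundary point $\tilde p$ reached by our geodesic, and the time-reversed version covers the white hole. This gives the contradiction.

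I expect the main obstacle to be step one together with its interface with step two. We must guarantee that the curve supplied by the extension is a genuine $\BK$-geodesic with finite proper time. This requires care with maximizers for merely continuous metrics, and with showing that the limit really approaches $\tilde p$ rather than stalling.
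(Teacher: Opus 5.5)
Your proposal follows the paper's proof step for step: first a finite-length, boundary-approaching timelike maximizer (the paper cites Minguzzi--Suhr, which needs no global hyperbolicity; this matters because $\BK$ is not globally hyperbolic when $\Lambda<0$), then the Kruskal and first-integral classification forcing $r\to0$ in a dynamic quadrant, and finally Theorem~\ref{thm:main-local} with $a=(-f)^{-1/2}$, $b=(-f)^{1/2}$, $c=r$, whose hypotheses you check correctly. The points you leave open are settled in the paper by strict monotonicity of $U$ and $V$ along timelike curves, which puts the tail in one quadrant, and, when $r\to r_h$, by showing that $t-r_*$ or $t+r_*$ converges if $E\neq0$ and that $|V/U|$ is constant if $E=0$; also, the maximizer's endpoint need not be your original $\tilde p$, which is harmless.
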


Completeness criteria do not apply to $\BK$, since timelike geodesics reach the
spacelike singularity in finite proper time.  Instead, a theorem of Minguzzi
and Suhr \cite{MinguzziSuhr} shows that a continuous extension would produce a
finite positive-length causal maximizer approaching its boundary.  In the
smooth part of the original spacetime this maximizer, parametrized by proper
time, is a timelike geodesic.

The Kruskal construction and the geodesic equations reduce the possible ends
of this curve to an explicit alternative.  The static Killing field and the
warped-product connection give constants $E\in\mathbb R$ and $J\ge0$ with
\begin{equation}\label{eq:intro-radial-energy}
 \dot r^2=E^2-f(r)\left(1+\frac{J^2}{r^2}\right).
\end{equation}
Both Kruskal null coordinates are strictly monotone along a timelike curve, so
its tail lies in one quadrant.  Equation \eqref{eq:intro-radial-energy} then
shows that a finite geodesic either converges to a point of the Kruskal
spacetime, possibly on the horizon, or enters a dynamic block and satisfies
$r\to0$.  A boundary-approaching maximizer leaves every compact subset, hence
only the second alternative can occur.  In the dynamic block, the coefficients
in Theorem~\ref{thm:main-local} are
\[
 a=(-f)^{-1/2},\qquad b=(-f)^{1/2},\qquad c=r.
\]
Their precise asymptotics and the required monotonicity are verified in
Section~\ref{sec:BK-global}, where Theorem~\ref{thm:main-local} completes the proof.

\subsection{Relation to earlier work}

Sbierski introduced the spacelike diameter and used it to exclude a continuous
extension of maximal Schwarzschild spacetime through $r=0$
\cite{SbierskiSchw}.  The argument below has the same metric origin, but the
radial compression \eqref{eq:intro-compression} replaces the rotations used in
the spherically symmetric setting.  Mosani has recently treated a broad class
of warped-product black holes, including Birmingham metrics at the central
singularity, under assumptions that include a homogeneous orientable fiber
\cite{Mosani}.  Here the local theorem requires neither hypothesis.  The
one-horizon Kruskal and geodesic analysis also yields inextendibility of the
entire canonical spacetime, rather than only the exclusion of a boundary point
at the central singularity.

The global step is different from the timelike-completeness obstruction of
Galloway, Ling, and Sbierski \cite{GLS}.  Radial timelike geodesics in the
present spacetime are incomplete.  What is used instead is that every finite
timelike geodesic escaping compact subsets must reach the singular end to
which the local theorem applies.

\subsection{Organization of the paper}

In Section~\ref{sec:boundary} we collect the low-regularity boundary results and
extracts a finite timelike geodesic from a putative extension. Then, we prove Theorem~\ref{thm:main-local} in
Section~\ref{cen:sec:center}.  The global Kruskal
construction, the Einstein equation, the geodesic first integrals, and the
proof of Theorem~\ref{thm:main-global} are given in Section~\ref{sec:BK-global}. The final
section records the conclusions and some directions not covered by the
one-horizon argument.

\section{Continuous extensions and finite maximizing geodesics}
\label{sec:boundary}

\subsection{Definitions and local coordinates}

\begin{definition}[$C^0$ extension]\label{def:C0-spacetime}
Let $(M,g)$ be a smooth, connected, time-oriented spacetime.  A
\emph{$C^0$ extension} of $(M,g)$ is a connected smooth manifold
$\widetilde M$ of the same dimension, equipped with a continuous Lorentzian
metric $\widetilde g$ and a continuous time orientation, together with a
time-orientation-preserving $C^1$ embedding
\[
 \iota:M\longrightarrow \widetilde M
\]
such that
\[
 \iota^*\widetilde g=g,
 \qquad
 \iota(M)\ne\widetilde M.
\]
Here a continuous time orientation means a continuous timelike vector field;
the future cone is the component containing that field.  Since $\iota$ is an
embedding between manifolds of the same dimension, $\iota(M)$ is open.  It is
a nonempty proper open subset of the connected manifold $\widetilde M$, and
therefore
\[
 \partial\iota(M)=\overline{\iota(M)}\setminus\iota(M)\ne\varnothing.
\]
The spacetime $(M,g)$ is \emph{$C^0$-inextendible} if it has no such
extension.
\end{definition}

The $C^1$ regularity of the embedding in
Definition~\ref{def:C0-spacetime} is the standard choice in the
low-regularity extension problem and is precisely what is needed to define
the pullback metric.  The ambient metric $\widetilde g$ itself is not assumed
to be differentiable.

There is no loss in requiring the extension to be time-oriented.  If an
ambient extension is not time-orientable, let
$\pi:\widehat M\to\widetilde M$ be its time-orientation double cover and equip
$\widehat M$ with $\pi^*\widetilde g$.  Pulling this cover back by $\iota$
gives the time-orientation cover of $M$.  The prescribed time orientation of
$M$ selects one of its two global sections and hence a lift
$\widehat\iota:M\to\widehat M$ satisfying
$\pi\circ\widehat\iota=\iota$.  This lift is a $C^1$ isometric embedding and
its image is open.  Let $\widehat M_0$ be the connected component containing
$\widehat\iota(M)$.  Every component of a covering space over the connected
base $\widetilde M$ maps onto $\widetilde M$.  If
$\widehat\iota(M)=\widehat M_0$, then
$\iota(M)=\pi(\widehat M_0)=\widetilde M$, contrary to the properness of the
original extension.  Thus
$\widehat\iota:M\to(\widehat M_0,\pi^*\widetilde g)$ is a proper
time-oriented extension of the kind in Definition~\ref{def:C0-spacetime}.

A causal curve for a continuous metric will mean a locally Lipschitz curve
whose tangent is nonzero, future causal almost everywhere.  A timelike curve
is piecewise $C^1$, with future timelike one-sided tangents at its break
points.  Its Lorentzian length is
\[
 L_g(\gamma)=\int\sqrt{-g(\dot\gamma,\dot\gamma)}\,ds.
\]
Background on causality for continuous metrics may be found in
\cite{ChruscielGrant,Saemann}.

\begin{definition}[Future and past boundary]\label{def:boundary-plus}
Let $\iota:M\to\widetilde M$ be a $C^0$ extension.  A point
$p\in\partial\iota(M)$ belongs to the \emph{future boundary}
$\partial^+\iota(M)$ if there is a future-directed timelike curve
$\widetilde\gamma:[0,1]\to\widetilde M$ such that
\[
 \widetilde\gamma([0,1))\subset\iota(M),
 \qquad
 \widetilde\gamma(1)=p.
\]
The past boundary $\partial^-\iota(M)$ is defined by reversing the time
orientation.
\end{definition}

The following elementary consequence of metric continuity will be used
repeatedly.  We choose its constants once and for all for the later cone and
length estimates.

\begin{lemma}[A coordinate box with uniform cone bounds]\label{lem:chart}
Let $(\widetilde M,\widetilde g)$ be a time-oriented $C^0$ spacetime of
dimension $N$, and let $p\in\widetilde M$.  There are $\rho_0>0$ and a chart
\[
 \varphi=(x^0,\ldots,x^{N-1}):\widetilde U\longrightarrow
 R_{\rho_0}:=(-\rho_0,\rho_0)^N,
 \qquad \varphi(p)=0,
\]
with the following properties.  Write
$X=X^0\partial_0+\mathbf X$ and use the Euclidean norm of the coordinate
components.
\begin{enumerate}[label=\textup{(C\arabic*)},leftmargin=2.7em]
\item\label{chart-C1}
Every nonzero $\widetilde g$-causal vector satisfies $X^0\ne0$.  It is
future-directed precisely when $X^0>0$, and in that case
\[
 |\mathbf X|\le\frac32X^0.
\]
\item\label{chart-C2}
If $X^0>0$ and $|\mathbf X|\le\frac12X^0$, then $X$ is future-directed
timelike.
\item\label{chart-C3}
Every future-directed causal vector satisfies
\[
 \sqrt{-\widetilde g(X,X)}\le\frac32X^0.
\]
\end{enumerate}
Consequently, if $0<\rho<\rho_0$ and $\sigma$ is a future-directed causal
curve contained in $\varphi^{-1}(\overline{R_\rho})$, then $x^0\circ\sigma$
is strictly increasing and
\begin{equation}\label{eq:controlled-box-length}
 L_{\widetilde g}(\sigma)
 \le \frac32\bigl(x^0(\sigma_{\rm end})-x^0(\sigma_{\rm start})\bigr)
 \le 3\rho.
\end{equation}
In particular, every such coordinate box has finite timelike diameter.
\end{lemma}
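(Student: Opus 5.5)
We begin with an arbitrary smooth chart $\psi=(y^0,\ldots,y^{N-1})$ around $p$ with $\psi(p)=0$, and use the time orientation to fix the linear algebra at $p$. Let $T$ be the continuous timelike vector field giving the time orientation. At $p$ the form $\widetilde g_p$ has signature $(-,+,\ldots,+)$. Choose a $\widetilde g_p$-orthonormal basis $e_0,\ldots,e_{N-1}$ of $T_p\widetilde M$ with $e_0$ future-directed; one may take $e_0=T_p/\sqrt{-\widetilde g(T_p,T_p)}$. After composing $\psi$ with the constant linear map sending this basis to the standard basis, we obtain a chart $\varphi=(x^0,\ldots,x^{N-1})$ with $\varphi(p)=0$ in which $\widetilde g_{\mu\nu}(0)=\eta_{\mu\nu}=\mathrm{diag}(-1,1,\ldots,1)$, and $T^0(0)>0$.

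Continuity of the components $\widetilde g_{\mu\nu}$ and of $T$ then gives all three cone bounds by perturbation. For $\eta$ the causal cone is $|\mathbf X|\le X^0$ in absolute value. Set $\delta:=\sup_{\mu\nu}|\widetilde g_{\mu\nu}(x)-\eta_{\mu\nu}|$, and normalize $X^0=\pm1$ by homogeneity. Then
\[
 \widetilde g(X,X)\ge -(X^0)^2+|\mathbf X|^2-N\delta\bigl(|X^0|+|\mathbf X|\bigr)^2 .
\]
A vector with $X^0=0$ and $\mathbf X\ne0$ is therefore spacelike once $N\delta<1$. For a causal vector, $|\mathbf X|^2-1\le N\delta(1+|\mathbf X|)^2$, which forces $|\mathbf X|\le 3/2$ when $\delta$ is small. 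Similarly,
\[
 \widetilde g(X,X)\le -1+|\mathbf X|^2+N\delta(1+|\mathbf X|)^2<0
 \qquad\text{for } |\mathbf X|\le\tfrac12,
\]
once $\delta$ is small. This gives \ref{chart-C2} up to the choice of direction. For \ref{chart-C3}, the bound $-\widetilde g(X,X)\le 1+N\delta(5/2)^2\le 9/4$ holds on causal vectors with $X^0=1$.

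Next we fix the future direction. The causal cone minus zero has two components, $\{X^0>0\}$ and $\{X^0<0\}$. Since $T^0>0$ near $0$ by continuity, and $T$ is timelike, $T$ lies in $\{X^0>0\}$, so that component is the future cone. Now choose $\rho_0$ so small that $\delta<\delta_0$ on $[-\rho_0,\rho_0]^N$ and $T^0>0$ there, and set $\widetilde U=\varphi^{-1}(R_{\rho_0})$. The only delicate step is making these small-$\delta$ inequalities hold uniformly over all directions, and homogeneity together with the explicit quadratic bound above handles that.

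For the consequence, let $\sigma$ be a causal curve in the box. It is locally Lipschitz with $\dot\sigma$ future causal and nonzero a.e. By \ref{chart-C1}, $(x^0\circ\sigma)'>0$ a.e., and $x^0\circ\sigma$ is absolutely continuous, so it is strictly increasing. Integrating \ref{chart-C3} gives
\[
 L_{\widetilde g}(\sigma)\le\tfrac32\int (x^0\circ\sigma)'=\tfrac32\bigl(x^0(\sigma_{\rm end})-x^0(\sigma_{\rm start})\bigr),
\]
and this increment is at most $2\rho$, which yields the bound $3\rho$. Finite timelike diameter follows immediately.
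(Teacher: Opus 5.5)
Your proof is correct and follows essentially the same route as the paper: normalize $\widetilde g_p$ to $\eta$ by a linear change of chart, get the three cone estimates from a uniform bound on $\widetilde g-\eta$ over a small box using homogeneity, and integrate (C3) along the curve. The only difference is how you fix the future cone: you use continuity of $T^0$ together with the fact that the nonzero causal vectors split into the sign sets of $X^0$, whereas the paper shows that $\partial_0$ is future-directed on the connected box and that $\widetilde g(X,\partial_0)<0$ whenever $X$ is causal with $X^0>0$.
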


\begin{proof}
Start with a smooth coordinate chart $\psi:U_0\to O\subset\mathbb R^N$ about
$p$.  After a linear change of coordinates, the matrix of $\widetilde g_p$ is
$\eta=\operatorname{diag}(-1,1,\ldots,1)$, and $\partial_0|_p$ is
future-directed.  Choose $\rho_0>0$ so that
$\overline{R_{2\rho_0}}\subset O$ and so that on
$\overline{R_{2\rho_0}}$
\begin{equation}\label{eq:metric-entry-bound}
 |\widetilde g_{\mu\nu}-\eta_{\mu\nu}|\le\frac1{4N}
 \qquad(0\le\mu,\nu\le N-1).
\end{equation}
For the statement of the lemma, take
$\widetilde U=\psi^{-1}(R_{\rho_0})$ and
$\varphi=\psi|_{\widetilde U}$.
For coordinate vectors $X,Y$, Cauchy-Schwarz gives
\begin{equation}\label{eq:metric-bilinear-bound}
 |(\widetilde g-\eta)(X,Y)|
 \le\frac1{4N}\Bigl(\sum_\mu|X^\mu|\Bigr)
                     \Bigl(\sum_\nu|Y^\nu|\Bigr)
 \le\frac14|X|\,|Y|.
\end{equation}

If $X$ is causal, then
\[
 0\ge\widetilde g(X,X)
 \ge |\mathbf X|^2-(X^0)^2-\frac14\bigl((X^0)^2+|\mathbf X|^2\bigr).
\]
It follows that
\[
 \frac34|\mathbf X|^2\le\frac54(X^0)^2,
 \qquad
 |\mathbf X|\le\sqrt{\frac53}\,|X^0|<\frac32|X^0|.
\]
In particular, a nonzero causal vector has $X^0\ne0$.  Conversely, if
$X^0>0$ and $|\mathbf X|\le X^0/2$, then
\[
 \widetilde g(X,X)
 \le -\frac34(X^0)^2+\frac54|\mathbf X|^2
 \le -\frac7{16}(X^0)^2<0.
\]
Furthermore, for any causal $X$,
\[
 -\widetilde g(X,X)
 \le \frac54(X^0)^2-\frac34|\mathbf X|^2
 \le\frac54(X^0)^2.
\]
This already gives the slightly weaker estimate stated in \ref{chart-C3}.

The vector field $\partial_0$ is timelike throughout the box by
\eqref{eq:metric-entry-bound}.  Since the box is connected and
$\partial_0|_p$ is future-directed, $\partial_0$ is future-directed everywhere.
If $X$ is causal and $X^0>0$, then the preceding cone bound and
\eqref{eq:metric-bilinear-bound} give
\[
 \widetilde g(X,\partial_0)
 \le -X^0+\frac14|X|
 \le -X^0+\frac14\bigl(X^0+|\mathbf X|\bigr)
 \le-\frac38X^0<0.
\]
Thus $X$ belongs to the same time cone as $\partial_0$.  Applying the same
argument to $-X$ proves the converse in \ref{chart-C1} and also settles the
orientation assertion in \ref{chart-C2}.

Finally, along a future causal curve, $\dot x^0>0$ almost everywhere.
Integrating \ref{chart-C3} yields
\[
 L_{\widetilde g}(\sigma)
 \le\frac32\int\dot x^0\,ds
 =\frac32\bigl(x^0(\sigma_{\rm end})-x^0(\sigma_{\rm start})\bigr),
\]
and \eqref{eq:controlled-box-length} follows from
$|x^0|\le\rho$.
\end{proof}

\subsection{A timelike curve ending at the extension boundary}

\begin{lemma}[First exit]\label{lem:first-exit}
Let $\sigma:[0,1]\to\widetilde M$ be future-directed timelike, with
$\sigma(0)\in\iota(M)$ and $\sigma(1)\notin\iota(M)$.  Then there is
$a\in(0,1]$ such that
\[
 \sigma([0,a))\subset\iota(M),
 \qquad
 \sigma(a)\in\partial^+\iota(M).
\]
The time-reversed statement also holds.
\end{lemma}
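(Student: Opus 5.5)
The plan is to take $a$ to be the first exit time and check it against Definition~\ref{def:boundary-plus}. Set
\[
 a=\inf\{s\in[0,1]:\sigma(s)\notin\iota(M)\}.
\]
The set in the infimum contains $1$, so it is nonempty. Since $\iota(M)$ is open, its complement is closed, and the set is the preimage of a closed set under the continuous map $\sigma$. It is therefore closed in $[0,1]$, and the infimum is attained: $\sigma(a)\notin\iota(M)$. Because $\sigma(0)\in\iota(M)$ and $\iota(M)$ is open, $\sigma$ stays in $\iota(M)$ on some interval $[0,\delta)$. Hence $a\ge\delta>0$, so $a\in(0,1]$. By minimality of $a$ we get $\sigma([0,a))\subset\iota(M)$.

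Next, $\sigma(a)=\lim_{s\nearrow a}\sigma(s)$ is a limit of points of $\iota(M)$, so it lies in $\overline{\iota(M)}$. Since it is not in $\iota(M)$, it lies in $\partial\iota(M)=\overline{\iota(M)}\setminus\iota(M)$.

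To show $\sigma(a)\in\partial^+\iota(M)$, reparametrize the restriction by $\widetilde\gamma(u)=\sigma(au)$ for $u\in[0,1]$. This is an affine, orientation-preserving change of parameter with positive factor $a$. It keeps the curve piecewise $C^1$, and the one-sided tangents are multiplied by $a>0$, so they remain future timelike. Hence $\widetilde\gamma$ is a future-directed timelike curve. It satisfies $\widetilde\gamma([0,1))=\sigma([0,a))\subset\iota(M)$ and $\widetilde\gamma(1)=\sigma(a)$, which is exactly the condition in Definition~\ref{def:boundary-plus}. One small point needs care: a break point of $\sigma$ may sit at $s=a$. In that case only the left one-sided tangent at the endpoint is relevant, and it is future timelike by assumption, so the definition is still met.

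For the time-reversed statement, apply the same argument to the reversed curve $s\mapsto\sigma(1-s)$, which is future-directed timelike for the reversed time orientation. This gives $\sigma(a)\in\partial^-\iota(M)$ for the corresponding first exit. Equivalently, one can take a supremum of exit times for a past-directed curve.

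There is no real obstacle here. The only steps needing attention are that the infimum is attained, which uses openness of $\iota(M)$, and that $a>0$.
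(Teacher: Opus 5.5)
Your proof is correct and is the same argument as the paper's: $a$ is the minimum of the closed nonempty set $\{s:\sigma(s)\notin\iota(M)\}$, openness of $\iota(M)$ gives $a>0$, and continuity places $\sigma(a)$ in $\overline{\iota(M)}\setminus\iota(M)$, so the restricted curve (affinely rescaled in your version) satisfies Definition~\ref{def:boundary-plus}. For the time-reversed statement no reparametrization $s\mapsto\sigma(1-s)$ is needed, and as you phrase it that reversed curve starts outside $\iota(M)$; instead run the identical argument on the given past-directed curve, which is future-directed for the reversed time orientation.
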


\begin{proof}
The set $A=\{s\in[0,1]:\sigma(s)\notin\iota(M)\}$ is nonempty and closed.
Let $a=\min A$.  Openness of $\iota(M)$ and the assumption on $\sigma(0)$
give $a>0$.  By the definition of $a$, the curve lies in $\iota(M)$ on
$[0,a)$, and continuity gives
$\sigma(a)\in\overline{\iota(M)}\setminus\iota(M)$.  The restricted curve is
the one required in Definition~\ref{def:boundary-plus}.
\end{proof}

\begin{proposition}[Existence of a boundary-ending timelike curve]
\label{thm:boundary-curve}
For every $C^0$ extension,
\[
 \partial^+\iota(M)\cup\partial^-\iota(M)\ne\varnothing.
\]
Thus, after reversing the time orientations of both spacetimes if necessary,
there is a future-directed timelike curve in $\widetilde M$ whose restriction
to $[0,1)$ lies in $\iota(M)$ and whose final point lies in
$\partial\iota(M)$.
\end{proposition}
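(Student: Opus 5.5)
Since $\partial\iota(M)\ne\varnothing$ by Definition~\ref{def:C0-spacetime}, I pick a point $p\in\partial\iota(M)$ and work in the chart $\varphi:\widetilde U\to R_{\rho_0}$ of Lemma~\ref{lem:chart} centred at $p$. The goal is to build a future- or past-directed timelike curve that starts inside $\iota(M)$ and ends outside it. Lemma~\ref{lem:first-exit}, or its time reversal, then produces a point of $\partial^+\iota(M)$ or $\partial^-\iota(M)$.

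The construction goes as follows. Fix $\rho<\rho_0$ and let $q_\pm=\varphi^{-1}(\pm\rho/2,0,\dots,0)$. The coordinate segment $s\mapsto(s,0,\dots,0)$ has tangent $\partial_0$, so by \ref{chart-C2} it is a future-directed timelike curve from $q_-$ through $p$ to $q_+$. Since $p\in\overline{\iota(M)}$, there are points $x\in\iota(M)$ arbitrarily close to $p$ in the chart. Choose such an $x$ with coordinates $(x^0,\mathbf x)$ satisfying $|x^0|,|\mathbf x|<\rho/10$. The straight segments from $\varphi(x)$ to $(\pm\rho/2,0,\dots,0)$ have direction $X$ with $|X^0|\ge\rho/2-\rho/10$ and $|\mathbf X|\le\rho/10$. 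So $|\mathbf X|\le\frac12|X^0|$, and by \ref{chart-C2} (applied to $\pm X$) the segment from $x$ to $q_+$ is future timelike and the segment from $q_-$ to $x$ is future timelike. Thus $q_+\in I^+(x)$ and $q_-\in I^-(x)$ via coordinate segments inside the box.

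Next I split into cases on where $q_\pm$ lie. If $q_+\notin\iota(M)$, the segment from $x$ to $q_+$ is a future-directed timelike curve from a point of $\iota(M)$ to a point outside, and Lemma~\ref{lem:first-exit} gives a point of $\partial^+\iota(M)$. If $q_-\notin\iota(M)$, the time-reversed lemma applied to the segment from $q_-$ to $x$ gives a point of $\partial^-\iota(M)$. It remains to handle the case $q_\pm\in\iota(M)$. Then the vertical segment from $q_-$ to $p$ is future timelike, starts in $\iota(M)$, and ends at $p\notin\iota(M)$ (recall $\iota(M)$ is open and $p$ is in its boundary, so $p\notin\iota(M)$). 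Lemma~\ref{lem:first-exit} again yields a point of $\partial^+\iota(M)$. In fact this last argument alone suffices whenever $q_-\in\iota(M)$, so the case split reduces to whether $q_-\in\iota(M)$.

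Finally, the reparametrization to $[0,1]$ and the reversal of both time orientations in the past case are immediate. Time reversal swaps $\partial^+$ and $\partial^-$ and preserves the property of being an extension. The only delicate point is ensuring the segments are genuinely timelike with the correct orientation, which is exactly what the uniform cone bounds \ref{chart-C1}--\ref{chart-C2} supply; I expect no real obstacle beyond this bookkeeping.
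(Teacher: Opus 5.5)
Your argument is correct, but it is organized differently from the paper's. The paper argues by contradiction. If both $\partial^+\iota(M)$ and $\partial^-\iota(M)$ were empty, Lemma~\ref{lem:first-exit} applied in both time directions would give a no-exit property. That forces $\varphi(\iota(M)\cap\widetilde U)$ to be a union of full vertical columns $(-\rho_0,\rho_0)\times B$ over an open set $B$ with $0\in\overline B\setminus B$. A slanted timelike segment from the column over some $\mathbf a\in B$ to a point over $\mathbf b\notin B$ then leaves $\iota(M)$, which is the contradiction.

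You argue directly, with a dichotomy on the single point $q_-$. If $q_-\in\iota(M)$, the vertical segment from $q_-$ to $p$ leaves $\iota(M)$ and gives a future boundary point. If $q_-\notin\iota(M)$, the reversed segment from a nearby $x\in\iota(M)$ down to $q_-$ gives a past boundary point. As you noticed, your $q_+$ case is redundant.

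Your route is shorter and needs neither the saturation property nor openness of the projected set $B$. It also tells you which kind of boundary point you obtain and where it sits: on the chart segment below $p$, or on the segment from $x$ to $q_-$. The paper's route yields, in passing, the picture of $\iota(M)$ near a boundary point as vertical columns. For this proposition, though, nothing is lost by your direct argument. Both proofs rest only on \ref{chart-C2} and Lemma~\ref{lem:first-exit}.
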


\begin{proof}
Assume, to the contrary, that both the future and past boundaries are empty.
Lemma \ref{lem:first-exit}, applied in both time directions, then gives a
no-exit property: every timelike curve that starts in $\iota(M)$ remains
there.

Choose $p\in\partial\iota(M)$ and a chart as in Lemma~\ref{lem:chart}, with
$\varphi(p)=0$.  Write $\pi(x^0,\mathbf x)=\mathbf x$ and define
\[
 B:=\pi\bigl(\varphi(\iota(M)\cap\widetilde U)\bigr)
 \subset(-\rho_0,\rho_0)^{N-1}.
\]
The set $B$ is open.  The vertical lines in the chart are timelike, and the
no-exit property applied in both orientations shows that
\begin{equation}\label{eq:vertical-saturation}
 \varphi(\iota(M)\cap\widetilde U)=(-\rho_0,\rho_0)\times B.
\end{equation}
Since $p\notin\iota(M)$ but $p\in\overline{\iota(M)}$, we have
$0\notin B$ and $0\in\overline B$.  Choose
\[
 \mathbf a\in B,
 \qquad
 \mathbf b\notin B,
 \qquad
 |\mathbf a|,|\mathbf b|<\rho_0/8.
\]
Consider the segment
\[
 s\longmapsto
 \varphi^{-1}\!\left(-\frac{\rho_0}{2}e_0+s\rho_0e_0
      +(1-s)\mathbf a+s\mathbf b\right),
 \qquad 0\le s\le1,
\]
whose coordinate tangent is $(\rho_0,\mathbf b-\mathbf a)$.  Moreover,
\[
 |\mathbf b-\mathbf a|<\frac{\rho_0}{4}.
\]
Hence Lemma~\ref{lem:chart}\ref{chart-C2} shows that the segment is future
timelike.
Its initial point lies in $\iota(M)$, while its final point does not, by
\eqref{eq:vertical-saturation}.  Lemma \ref{lem:first-exit} now produces a
future boundary point, a contradiction.
\end{proof}

\subsection{A finite boundary-approaching timelike geodesic}

It is useful to separate the smooth regularity statement from the
low-regularity existence theorem.

\begin{lemma}[Smooth causal maximizers]\label{lem:smooth-maximizer}
Let $(M,g)$ be smooth.  A locally length-maximizing causal curve in $M$ is,
up to an orientation-preserving reparametrization, a causal geodesic of one
fixed causal character.  If its length is positive, it is a timelike geodesic.
\end{lemma}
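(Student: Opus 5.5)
The plan is to localize and use the classical first-variation argument in a convex normal neighborhood of the smooth metric $g$. Let $\gamma:[0,T]\to M$ be a locally length-maximizing causal curve. Locally maximizing means that every point $s_0$ has a neighborhood $I$ such that $\gamma|_I$ maximizes Lorentzian length among causal curves with the same endpoints. Covering $[0,T]$ by finitely many such intervals, it suffices to treat a short subarc.

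We may take this subarc to lie inside a geodesically convex neighborhood $W$ of $\gamma(s_0)$. We may also shrink $W$ so that it is causally simple in the usual sense (e.g.\ contained in a normal neighborhood in which the exponential map is a diffeomorphism). In $W$ we have the standard smooth fact from Gauss's lemma. If $q\in J^+_W(p)$, the radial geodesic from $p$ to $q$ is, up to reparametrization, the unique causal curve in $W$ from $p$ to $q$ of maximal length. Moreover, any other causal curve in $W$ from $p$ to $q$ that is not a reparametrization of it has strictly smaller length when $q\in I^+_W(p)$. When $q\in J^+\setminus I^+$, every causal curve from $p$ to $q$ in $W$ is a reparametrization of the null geodesic. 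I will quote this from O'Neill or Beem--Ehrlich--Easley rather than reprove it.

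The key steps, in order, are as follows.

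\emph{Step 1.} For $s_1<s_2$ in the small interval, set $p=\gamma(s_1)$ and $q=\gamma(s_2)$. If $q\in I^+_W(p)$, the maximizing property together with the strict inequality forces $\gamma|_{[s_1,s_2]}$ to be a reparametrization of the timelike radial geodesic. Here a Lipschitz causal curve is compared with the geodesic via the a.e.\ Gauss lemma estimate. If instead $q\in J^+\setminus I^+$, the rigidity statement forces $\gamma$ to be a null pregeodesic on that subinterval.

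\emph{Step 2.} Show that the causal character is fixed. Suppose some subarc $[s_1,s_2]$ is timelike, so $q\in I^+(p)$. Then for every $s_1\le u<v\le s_2$ we have $\gamma(v)\in I^+(\gamma(u))$, since a timelike geodesic segment connects them. Overlapping neighborhoods then propagate the timelike character. Conversely, if a subarc is null, every point of it is reached only by null curves, and the character propagates again. By connectedness of $[0,T]$, the set of parameters near which $\gamma$ is timelike is open and closed. Openness holds by definition. For closedness, note that a limit point inside a convex neighborhood containing a timelike subarc lies on a single radial geodesic through that subarc, so the geodesic is timelike there as well.

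\emph{Step 3.} Glue the local geodesic pieces. Overlapping reparametrizations of geodesics agree as geodesics by uniqueness, so $\gamma$ is globally a reparametrization of a single geodesic. Orientation preservation follows because both curves are future-directed. In the timelike case we reparametrize by proper time.

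\emph{Step 4.} If $L_g(\gamma)>0$, the curve cannot be null throughout, since null curves have zero length. The fixed character is therefore timelike, giving a timelike geodesic.

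The main obstacle is the low-regularity comparison in Step 1. Here $\gamma$ is only locally Lipschitz with a.e.\ causal tangent, not piecewise $C^1$. I would handle it as follows. Write $\gamma$ in normal coordinates about $p$ and use the position vector field $P$ together with the function $Q=g(P,P)$. Along a causal Lipschitz curve in $I^+(p)$, the a.e.\ reverse Cauchy--Schwarz inequality gives $\frac{d}{ds}\sqrt{-Q}\ge\sqrt{-g(\dot\gamma,\dot\gamma)}$. Equality a.e.\ holds iff $\dot\gamma$ is a.e.\ parallel to $P$, which integrates to a radial reparametrization.

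The null-boundary case uses that $Q\circ\gamma\le0$ is nonincreasing. Hence $Q\equiv0$ forces $\dot\gamma\parallel P$ a.e., which again gives a radial null pregeodesic.
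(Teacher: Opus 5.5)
Your proposal is correct and follows essentially the same route as the paper: localize in a convex normal neighborhood, use the Gauss-lemma comparison with equality rigidity for timelike-related endpoints and rigidity for null-related endpoints, glue the local pieces by geodesic uniqueness, and use positive length to force the timelike character. The differences are minor. For the null case the paper invokes the push-up lemma where you use monotonicity of $Q=g(P,P)$. The paper also glues first and reads off the fixed causal character from the constancy of $g(\dot\gamma,\dot\gamma)$ along the affinely parametrized geodesic, which makes your open--closed Step 2 unnecessary. Your explicit treatment of Lipschitz curves via $Q$ supplies a regularity detail that the paper leaves to its citation of O'Neill.
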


\begin{proof}
Here locally length-maximizing means that every parameter value has a
neighborhood on which the corresponding subarc realizes the Lorentzian
distance between its endpoints.  Let $\gamma:I\to M$ have this property and
fix $s_0\in I$.
Choose a convex normal neighborhood $V$ of $\gamma(s_0)$ and a compact
parameter interval $[a,c]\Subset I$, containing $s_0$ in its interior, so
small that $\gamma([a,c])\subset V$ and that the restricted curve is
maximizing between its endpoints.  Let $v\in T_{\gamma(a)}M$ be determined by
\[
 \exp_{\gamma(a)}v=\gamma(c).
\]
The radial geodesic $u\mapsto\exp_{\gamma(a)}(uv)$ is causal.  The Lorentzian
Gauss lemma in a convex normal neighborhood \cite[Chapters~5 and~10]{ONeill}
gives
\[
 L_g(\beta)\le L_g\bigl(u\mapsto\exp_{\gamma(a)}(uv)\bigr)
\]
for every future causal curve $\beta$ in $V$ with the same endpoints.  In the
timelike case equality holds only for an increasing reparametrization of the
radial geodesic.  In the null case, the two endpoints are joined in $V$ by a
unique radial null geodesic.  The push-up lemma shows that a causal curve with
any different image would make the endpoints chronologically related, which
is impossible for null-related points in a convex normal neighborhood.
Since $\gamma|_{[a,c]}$ is maximizing, it is an increasing reparametrization
of this radial geodesic.

The same construction works near every parameter value.  On an overlap, the
two geodesic arcs share a nontrivial subarc and therefore agree by uniqueness
for the geodesic equation.  Together they define a single unparametrized
causal geodesic.  Its causal character is constant, since
$g(\dot\gamma,\dot\gamma)$ is constant in an affine parametrization.  If that
constant vanished, every compact subarc, and hence the whole curve, would have
zero length.  Positive length therefore forces the geodesic to be timelike.
\end{proof}

\begin{lemma}[Minguzzi-Suhr boundary maximizer]
\label{lem:MS-boundary-maximizer}
Let $(M,g)$ and $(\widetilde M,\widetilde g)$ be time-oriented $C^0$
spacetimes, and let $\iota:M\to\widetilde M$ be a $C^1$
time-orientation-preserving isometric embedding with
$\partial\iota(M)\ne\varnothing$.  If
$\partial^+\iota(M)\ne\varnothing$, then for every
$q\in\partial^+\iota(M)$ and every neighborhood $U$ of $q$ in
$\widetilde M$ there is a future-directed causal curve
\[
 \sigma:[0,1]\longrightarrow U
\]
which is locally length-maximizing and satisfies
\begin{equation}\label{eq:MS-maximizer-properties}
 \sigma([0,1))\subset\iota(M),\qquad
 \sigma(1)=p\in\partial\iota(M)\cap U,\qquad
 0<L_{\widetilde g}(\sigma)<\infty.
\end{equation}
The endpoint $p$ is not required to equal the initially chosen point $q$.
\end{lemma}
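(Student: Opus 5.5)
The plan is to derive this lemma from the Minguzzi--Suhr theorem \cite{MinguzziSuhr} and not to build the maximizer by hand. The first step is to localize. Fix $q\in\partial^+\iota(M)$ and a neighborhood $U$ of $q$. Shrink $U$ to a coordinate box $\varphi^{-1}(R_\rho)$ centred at $q$, as in Lemma~\ref{lem:chart}. By the definition of $\partial^+\iota(M)$ there is a future-directed timelike curve $\widetilde\gamma$ with $\widetilde\gamma([0,1))\subset\iota(M)$ and $\widetilde\gamma(1)=q$. Truncating it, we may assume it lies in the box. Put $x_0=\widetilde\gamma(s_0)\in\iota(M)$ for some $s_0$ close to $1$.

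The second step is to work inside the box, where the Lorentzian distance is bounded: by \eqref{eq:controlled-box-length} every causal curve in $\varphi^{-1}(\overline{R_\rho})$ has length at most $3\rho$. We then maximize length over future causal curves that start at $x_0$, stay in $\iota(M)\cap\varphi^{-1}(\overline{R_{\rho'}})$ for some $\rho'<\rho$, and end on the compact set $\overline{\partial\iota(M)}\cap\varphi^{-1}(\overline{R_{\rho'}})$. The admissible family is nonempty because $\widetilde\gamma|_{[s_0,1]}$ belongs to it, and that curve has positive length. The supremum is therefore positive and at most $3\rho'$.

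The third step is to extract a maximizer. In the chart every causal curve is a Lipschitz graph over $x^0$ with slope at most $3/2$ by \ref{chart-C1}, so Arzel\`a--Ascoli gives a uniformly convergent maximizing sequence. Upper semicontinuity of length for continuous metrics (see \cite{ChruscielGrant,Saemann}) makes the limit $\sigma$ a causal curve of maximal length, and this length is positive and finite.

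The main obstacle, and the real content of \cite{MinguzziSuhr}, is to ensure two things about the limit. First, it must not touch $\partial\iota(M)$ before its final point. Second, it must be locally maximizing in $\widetilde M$, not merely within the constrained class. For the first point, I would argue as follows. Let $\sigma(s_1)$ be the first parameter at which $\sigma$ meets the boundary. Then the truncation $\sigma|_{[0,s_1]}$ is itself admissible. If $s_1<1$, this truncation is at least as long, and strictly longer if the endpoint is chosen to be a latest boundary point in a suitable sense, so we may simply redefine $p=\sigma(s_1)$. For local maximality, any subarc $\sigma|_{[a,c]}$ with $c<1$ lies in the open set $\iota(M)$. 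A longer competitor between $\sigma(a)$ and $\sigma(c)$ inside a small neighborhood contained in $\iota(M)$ could be spliced into $\sigma$ to beat the supremum. Near $s=1$ one uses the same splicing argument, but the competitor may now leave $\iota(M)$; it is cut at its first boundary point, and Lemma~\ref{lem:first-exit} is used to keep the spliced curve admissible.

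Since this last delicate point is exactly what \cite{MinguzziSuhr} establishes for $C^0$ metrics, I would cite their result for it and verify only that their hypotheses hold: $\widetilde g$ is continuous, $\iota$ is a $C^1$ isometric embedding, and $\partial^+\iota(M)\ne\varnothing$. The fact that $p$ may differ from $q$ is then explained by the truncation step above.
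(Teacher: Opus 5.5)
Your fallback of citing Minguzzi--Suhr is the paper's route. However, the argument you give before it does not work, and the citation is made with the wrong list of hypotheses.

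\textbf{The variational construction.} The admissible class consists of curves lying in $\iota(M)$ up to a terminal boundary point, and this class is not closed under uniform limits. The limit $\sigma$ of a maximizing sequence may therefore meet $\partial\iota(M)$ at some $s_1<1$, in which case $\sigma$ is not admissible.

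Truncation does not repair this. Since $L(\sigma|_{[0,s_1]})=L(\sigma)-L(\sigma|_{[s_1,1]})\le L(\sigma)$, the truncated curve is at most as long as $\sigma$, not at least. It need not realize the supremum. It may even have zero length: nothing prevents $\sigma$ from reaching the boundary along a null piece and acquiring all its length afterwards, which destroys the positivity the lemma requires.

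The endpoint splicing argument has the same defect. A longer competitor ending at $p$ may pass through $\widetilde M\setminus\iota(M)$ before reaching $p$. Cutting it at its first boundary point gives an admissible curve to a different boundary point, with no length comparison to the supremum. So local maximality in $\widetilde M$ at $s=1$ does not follow. This is exactly the hard content of the Minguzzi--Suhr result. Your explanation of why $p$ may differ from $q$ also rests on the faulty truncation; in the paper it is simply part of the cited conclusion.

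\textbf{The citation.} The paper's proof consists of the hypothesis check you skip. Minguzzi--Suhr's Theorem~2.3 is stated for $C^0$ proper Lorentz--Finsler spaces and their extensions, not for continuous Lorentzian metrics. So ``$\widetilde g$ continuous, $\iota$ a $C^1$ isometric embedding, $\partial^+\iota(M)\ne\varnothing$'' is not what has to be checked. One must verify the following.
\begin{enumerate}[label=\textup{(\alph*)},leftmargin=2.3em]
\item The future causal cones $C_x$ are closed, sharp and convex, have nonempty interior, and depend continuously on $x$.
\item $F_x(v)=\sqrt{-g_x(v,v)}$ on $C_x$ is continuous, positively homogeneous, positive on the interior and zero on the boundary.
\item $F_x$ is concave on each cone. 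This comes from the reverse Cauchy--Schwarz inequality $-g_x(v,w)\ge F_x(v)F_x(w)$, which gives $F_x(v+w)\ge F_x(v)+F_x(w)$.
\item $\iota(M)$ is open and proper, and the isometric embedding preserves the cones and $F$. Hence $(\widetilde M,\widetilde F)$ is an extension of $(M,F)$ in their sense.
\end{enumerate}
With this in place, the theorem applied at $q$ in $U$ yields the lemma directly. The localization, Arzel\`a--Ascoli and truncation steps should be dropped rather than patched.
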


\begin{proof}
For a continuous time-oriented Lorentzian metric, set
\begin{align*}
 C_x&=\{v\in T_xM\setminus\{0\}:v\text{ is future causal}\},\\
 F_x(v)&=\sqrt{-g_x(v,v)}\qquad(v\in C_x).
\end{align*}
The cones $C_x$ are closed, sharp, and convex, have nonempty interior, and
depend continuously on $x$.  The function $F:C\to[0,\infty)$ is continuous,
positively homogeneous, and concave on each cone.  Indeed, the reverse
Cauchy-Schwarz inequality for future causal vectors gives
\[
 -g_x(v,w)\geq F_x(v)F_x(w),
\]
and hence $F_x(v+w)\geq F_x(v)+F_x(w)$; positive homogeneity then gives
concavity.  Moreover, $F$ is positive in the interior and vanishes precisely
on the boundary.  Thus $(M,F)$ is a $C^0$ proper Lorentz-Finsler space in the
terminology of Minguzzi-Suhr.  The same is true of
$(\widetilde M,\widetilde F)$.  Since a $C^1$ embedding between manifolds of
equal dimension has open image, $\iota(M)$ is an open proper submanifold, and
the isometric identity preserves its cones and fundamental function.
Consequently $(\widetilde M,\widetilde F)$ is an extension of $(M,F)$ in the
sense of \cite[Section~2]{MinguzziSuhr}.  The hypotheses of
\cite[Theorem~2.3]{MinguzziSuhr} are therefore satisfied.  Applied at $q$ in
the neighborhood $U$, that theorem yields a causal maximizer of finite
positive length whose half-open image lies in $\iota(M)$ and whose endpoint
belongs to $\partial\iota(M)\cap U$.  In the terminology of that paper, such a
maximizer is a continuous causal curve that is locally length-maximizing.
\end{proof}

\begin{lemma}[Finite timelike geodesic approaching the boundary]
\label{lem:boundary-maximizer}
Let $(M,g)$ be a smooth, connected, time-oriented spacetime.  If $(M,g)$ has
a $C^0$ extension in the sense of Definition~\ref{def:C0-spacetime}, then,
after reversing both time orientations if necessary, there exist
$b\in(0,\infty)$, $p\in\partial\iota(M)$, and a future-directed unit-speed
timelike geodesic
\[
 \eta:[0,b)\longrightarrow M
\]
such that
\[
 \iota(\eta(\tau))\longrightarrow p
 \qquad(\tau\nearrow b).
\]
Moreover, $\eta$ eventually leaves every compact subset of $M$.
\end{lemma}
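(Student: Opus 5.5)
The plan is to chain the three preceding results: Proposition~\ref{thm:boundary-curve} produces a boundary point, Lemma~\ref{lem:MS-boundary-maximizer} produces a maximizer ending there, and Lemma~\ref{lem:smooth-maximizer} upgrades that maximizer to a geodesic.

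First, by Proposition~\ref{thm:boundary-curve}, after reversing the time orientations of both $M$ and $\widetilde M$ if necessary, $\partial^+\iota(M)\ne\varnothing$. Reversing both orientations keeps $\iota$ time-orientation preserving and keeps Definition~\ref{def:C0-spacetime} intact, so we may assume this case. Pick $q\in\partial^+\iota(M)$ and let $U=\widetilde U$ be a coordinate box around $q$ as in Lemma~\ref{lem:chart}. Lemma~\ref{lem:MS-boundary-maximizer} then gives a future-directed locally maximizing causal curve $\sigma:[0,1]\to U$ with $\sigma([0,1))\subset\iota(M)$, endpoint $p\in\partial\iota(M)$, and $0<L(\sigma)<\infty$.

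Second, transfer $\sigma$ into $M$. Since $\iota$ is an isometric $C^1$ embedding with open image, every subarc $\iota^{-1}\circ\sigma|_{[0,s]}$ with $s<1$ is locally maximizing in $M$. Indeed, local maximality in $\widetilde M$ can be tested on small subarcs whose endpoints lie in $\iota(M)$. Competitor curves in $M$ push forward to competitors in $\widetilde M$ of the same length, so $d_g\le d_{\widetilde g}$ on the relevant points, and the small subarcs remain maximizing in $M$.

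A subtlety is positivity of length near the end. The total length is positive, but a final piece might have zero length. To handle this, note that the lengths of the subarcs over $[0,s]$ increase to $L(\sigma)>0$, so the arc over $[0,s_0]$ has positive length for some $s_0<1$. Lemma~\ref{lem:smooth-maximizer} applied to $\iota^{-1}\circ\sigma|_{[0,1)}$ then says it is a geodesic of one fixed causal character. Since some compact subarc has positive length, the geodesic is timelike. Reparametrize it by proper time to get $\eta:[0,b)\to M$ with $b=L_g(\iota^{-1}\circ\sigma|_{[0,1)})=L(\sigma)\in(0,\infty)$. Finiteness of $b$ follows from $L(\sigma)<\infty$, or directly from \eqref{eq:controlled-box-length}. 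Then $\iota(\eta(\tau))\to p$ as $\tau\nearrow b$, because $\sigma(s)\to p$ and the reparametrization is an increasing homeomorphism onto $[0,1)$; this uses that the proper time along the timelike geodesic is strictly increasing.

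Third, $\eta$ leaves every compact $K\subset M$. Suppose instead that $\eta(\tau_j)\in K$ along a sequence $\tau_j\nearrow b$. Passing to a subsequence, $\eta(\tau_j)\to x\in K$, so $\iota(\eta(\tau_j))\to\iota(x)\in\iota(M)$. This contradicts convergence to $p\notin\iota(M)$ in the Hausdorff space $\widetilde M$. Hence for each $K$ the curve eventually stays outside $K$.

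The main obstacle is the second step. It requires checking carefully that local maximality in $\widetilde M$, in the Minguzzi-Suhr sense for Lipschitz causal curves, gives local maximality within $M$ in the sense required by Lemma~\ref{lem:smooth-maximizer}, and that zero-length degeneration does not occur. I would handle the latter by the global argument above: the geodesic has constant causal character, so positive total length excludes the null case.
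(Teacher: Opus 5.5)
Your proposal is correct and follows the paper's proof step for step. You use Proposition~\ref{thm:boundary-curve} for a future boundary point and Lemma~\ref{lem:MS-boundary-maximizer} for the maximizer. You then transfer local maximality into $M$; your comparison $d_g\le d_{\widetilde g}$ on $\iota(M)$ is exactly the justification the paper leaves implicit behind ``because $\iota(M)$ is open''. Next you apply Lemma~\ref{lem:smooth-maximizer} with positive length to get a timelike geodesic, reparametrize by proper time with $b=L(\sigma)$, and finish with the same Hausdorff argument for leaving compact sets.

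Your extra concern about a zero-length final piece is harmless but unnecessary. The constant causal character in Lemma~\ref{lem:smooth-maximizer} already turns $L(\sigma)>0$ into timelikeness of the whole curve.
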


\begin{proof}
By Proposition~\ref{thm:boundary-curve}, after a simultaneous reversal of the
time orientations we have $\partial^+\iota(M)\ne\varnothing$.  Fix
$q\in\partial^+\iota(M)$ and a coordinate neighborhood $U$ of $q$.
Apply Lemma~\ref{lem:MS-boundary-maximizer} and denote the resulting curve by
$\sigma$ and its endpoint by $p$.

Set $\gamma=\iota^{-1}\circ\sigma|_{[0,1)}$.  Because $\iota(M)$ is open, the
local maximizing property of $\sigma$ passes to the smooth spacetime $(M,g)$.
Lemma~\ref{lem:smooth-maximizer}, together with the positivity in
\eqref{eq:MS-maximizer-properties}, shows that $\gamma$ is an unparametrized
timelike geodesic.  Its proper-time parametrization is a future-directed
unit-speed timelike geodesic $\eta:[0,b)\to M$, where
\[
 b=L_g(\gamma)=L_{\widetilde g}(\sigma)\in(0,\infty).
\]
Proper-time reparametrization preserves the limit in
\eqref{eq:MS-maximizer-properties}, hence
$\iota(\eta(\tau))\to p$ as $\tau\nearrow b$.

Suppose, finally, that $\eta$ does not eventually leave a compact set
$K\Subset M$.  Then there are $\tau_j\nearrow b$ and, after passing to a
subsequence, a point $x\in K$ such that $\eta(\tau_j)\to x$.  Continuity of
$\iota$ gives $\iota(\eta(\tau_j))\to\iota(x)$.  The same sequence converges
to $p$, and the Hausdorff property of $\widetilde M$ gives
$p=\iota(x)\in\iota(M)$, contrary to
$p\in\partial\iota(M)$.
\end{proof}

\begin{remark}\label{rem:no-GH-maximizer}
Lemma~\ref{lem:boundary-maximizer} does not require global hyperbolicity; that
hypothesis enters only in the boundary-graph statement below.  Accordingly,
the global argument may select the geodesic before classifying its possible
ends.
\end{remark}

\subsection{The boundary as a local achronal graph}

Global hyperbolicity enters the boundary argument through the following more
precise description of the future boundary.

\begin{lemma}[A graph chart at the future boundary]
\label{lem:future-boundary-graph}
Let $(M,g)$ be a smooth globally hyperbolic spacetime of dimension $N$, let
$\iota:M\to(\widetilde M,\widetilde g)$ be a $C^0$ extension, and let
$p\in\partial^+\iota(M)$.  Given $\delta>0$, there are
$\epsilon_0,\epsilon_1>0$ and a chart
\[
 \varphi:\widetilde U\longrightarrow
 (-\epsilon_0,\epsilon_0)\times
 (-\epsilon_1,\epsilon_1)^{N-1},
 \qquad \varphi(p)=0,
\]
such that
\begin{enumerate}[label=\textup{(\roman*)},leftmargin=2.3em]
\item
$|\widetilde g_{\mu\nu}-\eta_{\mu\nu}|<\delta$ throughout the chart;
\item
$(-\epsilon_0,0)\times\{0\}$ is contained in
$\varphi(\iota(M)\cap\widetilde U)$;
\item
there is a Lipschitz function
\[
 h:(-\epsilon_1,\epsilon_1)^{N-1}
 \longrightarrow(-\epsilon_0,\epsilon_0),
 \qquad h(0)=0,
\]
for which
\begin{align}
 \{(x^0,\mathbf x):-\epsilon_0<x^0<h(\mathbf x)\}
 &\subset\varphi(\iota(M)\cap\widetilde U),
 \label{eq:boundary-subgraph}\\
 \{(h(\mathbf x),\mathbf x):
       \mathbf x\in(-\epsilon_1,\epsilon_1)^{N-1}\}
 &\subset\varphi(\partial^+\iota(M)\cap\widetilde U).
 \label{eq:boundary-graph}
\end{align}
The graph in \eqref{eq:boundary-graph} is achronal relative to
$\widetilde U$.
\end{enumerate}
Consequently, a past-directed timelike curve which starts strictly below the
graph cannot cross the graph while it remains in $\widetilde U$; throughout
that part of its domain it remains in the subgraph
\eqref{eq:boundary-subgraph}, and hence in $\iota(M)$.
\end{lemma}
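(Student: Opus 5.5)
The plan is to follow Sbierski's construction of the future boundary as a Lipschitz graph, with global hyperbolicity of $(M,g)$ entering only through compactness of causal diamonds. First fix a future-directed timelike curve $\widetilde\gamma:[0,1]\to\widetilde M$ with $\widetilde\gamma([0,1))\subset\iota(M)$ and $\widetilde\gamma(1)=p$. Then choose a chart about $p$ as in Lemma~\ref{lem:chart}, after a linear change making $\widetilde g_p=\eta$ with $\partial_0$ future-directed. Shrinking it gives (i) for the prescribed $\delta$, and also gives cone bounds of the type \ref{chart-C1}--\ref{chart-C2} with slopes as close to $1$ as we like.

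Property (ii) needs a little adjustment, since the tail of $\widetilde\gamma$ need not run along the $x^0$-axis. Pick $s_0$ close to $1$ and set $q=\widetilde\gamma(s_0)$, so that $\varphi(q)=(-\tau,\mathbf y)$ with $|\mathbf y|$ small relative to $\tau$. Apply the affine coordinate change $x^0\mapsto x^0$, $\mathbf x\mapsto\mathbf x+(x^0/\tau)\mathbf y$. This keeps the metric close to $\eta$ and places $q$ on the axis. It then suffices to show that the whole segment $(-\epsilon_0,0)\times\{0\}$ lies in $\iota(M)$. That follows from the key step below: the segment lies in $I^-(p,\widetilde U)$, and its initial part is in $\iota(M)$ because it is a small timelike perturbation of $\widetilde\gamma$ near $q$.

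\textbf{Key step (uses global hyperbolicity).} The claim is that for $\epsilon_0,\epsilon_1$ small and $x\in\iota(M)\cap\widetilde U$ with $x\ll p$ in $\widetilde U$, every past-directed timelike curve from $x$ inside $\widetilde U$ stays in $\iota(M)$. Suppose not. Lemma~\ref{lem:first-exit} (time-reversed) gives a past boundary point $z$ reached by a past-directed timelike curve $\beta$ from $x$ with $\beta$ in $\iota(M)$ before $z$. Since $x\ll p$ and $\widetilde\gamma$ approaches $p$ timelike, openness of chronological futures in the chart (from the cone bounds) gives some $\widetilde\gamma(s_1)\in I^+(x)$, and we can choose the connecting curve inside $\iota(M)\cap\widetilde U$. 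Then every point of $\beta$ before $z$ lies in $J^-_M(\iota^{-1}\widetilde\gamma(s_1))\cap J^+_M(\iota^{-1}\beta(s))$ for $s$ close to the exit parameter. Fix an intermediate point $w$ of $\beta$ close to $z$. The diamond $J^+_M(w')\cap J^-_M(\iota^{-1}\widetilde\gamma(s_1))$, for $w'$ further along $\beta$, contains the portion of $\beta$ accumulating only at $z\notin\iota(M)$. That portion is not contained in any compact subset of $M$, contradicting global hyperbolicity. I expect this to be the main obstacle. The delicate points are making sure the connecting curves stay inside $\iota(M)$ (this is why the chart is kept small and the chronology taken relative to $\widetilde U$), and making the noncompactness argument precise via Hausdorffness, exactly as in the last paragraph of the proof of Lemma~\ref{lem:boundary-maximizer}.

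\textbf{Defining $h$ and checking (iii).} Set
\[
 h(\mathbf x)=\sup\{s\in(-\epsilon_0,\epsilon_0):(t,\mathbf x)\in\varphi(\iota(M)\cap\widetilde U)\text{ for all }t\in(-\epsilon_0,s)\}.
\]
By the key step, the set of $(x^0,\mathbf x)$ with $-\epsilon_0<x^0<h(\mathbf x)$ is exactly the set of points of $\iota(M)$ in the region below $p$, which gives \eqref{eq:boundary-subgraph}.

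For $\epsilon_1\ll\epsilon_0$, each vertical line starting at $x^0=-\epsilon_0/2$ begins in $\iota(M)$ by openness and the cone bounds. Each such line also leaves $\iota(M)$ before $x^0=\epsilon_0$: otherwise, tilted timelike segments from that line would reach $p$ while staying in $\iota(M)$, by the key step run in reverse together with the choice $\epsilon_1\ll\epsilon_0$. Hence $h$ takes values in $(-\epsilon_0,\epsilon_0)$, and $h(0)=0$ by (ii) and $p\notin\iota(M)$.

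Each graph point is reached by its vertical line from below, so it lies in $\partial^+\iota(M)$, which is \eqref{eq:boundary-graph}.

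For the Lipschitz bound, suppose $|h(\mathbf x)-h(\mathbf y)|>2|\mathbf x-\mathbf y|$. Then by \ref{chart-C2} one graph point lies in the open timelike future of a point strictly below the other graph point. Applying the key step to that point shows the upper graph point is in $\iota(M)$, a contradiction. The same argument gives achronality relative to $\widetilde U$.

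The final consequence is then immediate. A past-directed timelike curve starting below the graph has strictly decreasing $x^0$ by \ref{chart-C1}. It can meet the graph only at a point chronologically related to its start, which is excluded by achronality combined with the key step. So while it remains in $\widetilde U$ it stays in the subgraph \eqref{eq:boundary-subgraph}, and hence in $\iota(M)$.
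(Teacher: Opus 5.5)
Your proposal has a genuine gap, and it sits in your key step, which is exactly where global hyperbolicity has to enter. As written, that step produces no contradiction. The curve $\beta$ runs from $x$ into the past boundary point $z$. For fixed $w'$ on $\beta$, the diamond $J^+_M(w')\cap J^-_M(\iota^{-1}\widetilde\gamma(s_1))$ contains only the arc of $\beta$ between $w'$ and $x$, because the points of $\beta$ nearer to $z$ lie in the past of $w'$ and $M$ is causal. That arc is compact and stays away from $z$. A past-inextendible timelike curve running into $\partial^-\iota(M)$ is perfectly compatible with global hyperbolicity. To trap it in a compact diamond you need a fixed point of $M$ in its causal past, and you never produce one.

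Your assertion that the curve from $x$ to $\widetilde\gamma(s_1)$ ``can be chosen inside $\iota(M)\cap\widetilde U$'' is also unproved, and it is essentially what has to be shown. Chart-timelike curves between points of $\iota(M)$ can leave $\iota(M)$ and re-enter it. For example, take $\widetilde M=(\mathbb R/\mathbb Z)_t\times\mathbb R^{N-1}$ and the globally hyperbolic slab $M=(0,1)\times\mathbb R^{N-1}$. Every box about $p=(0,\mathbf 0)$ then contains points of $\iota(M)$ above the boundary that are in the chart future of the axis but not in its $M$-future.

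The later steps inherit the problem. Your key step only controls past-directed curves from points below $p$. It says nothing about curves from a point above $p$, which is what ``run in reverse'' needs to make the columns exit before $x^0=\epsilon_0$. Nor does it put a graph point lying to the future of a subgraph point into $\iota(M)$, as your Lipschitz step claims. With your shear, (ii) also depends on the key step; the paper avoids this by straightening $\widetilde\gamma$ onto the axis.

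What is missing is a compactness argument in which the fixed endpoint lies to the future and the moving endpoint runs through a region already known to lie in $\iota(I^+_M(q_0))$. One way to set it up:
\begin{itemize}
\item Straighten $\widetilde\gamma$ onto the axis and fix $q_0=\iota^{-1}\widetilde\gamma(s_{00})$ below the box.
\item Take $\epsilon_1\ll\epsilon_0$ so that a bottom slab and the axis lie in the open set $W=\iota(I^+_M(q_0))$.
\item Suppose some column stays in $\iota(M)$ up to a point $y$ with $y^0\ge\epsilon_0/2$. Then $\hat y=\iota^{-1}(y)\in I^+_M(q_0)$.
\item Consider the past-directed chart-timelike segments from $y$ to points $c(\lambda)$ that move from the foot of that column across the slab and then up the axis.
\item The set of $\lambda$ whose segment lies in $\iota(M)$ contains the vertical one and is open.
\item It is also closed, because each such segment lies in $\iota\bigl(J^-_M(\hat y)\cap J^+_M(q_0)\bigr)$, which is compact and hence closed in $\widetilde M$.
\item Letting $c(\lambda)\to p$ then forces $p\in\iota(M)$, a contradiction.
\end{itemize}
The same sweep gives the Lipschitz bound if you start just below the higher of two graph points and end by moving up the other column.

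The paper does not reprove this mechanism. It imports it from Sbierski's Proposition~2.2, in the form of columns of $\iota(I^+(S;M))$ together with the Cauchy property. For the final assertion, the paper uses the vertical segment from $z$ up to $\widehat z=(h(\mathbf z),\mathbf z)$, so that $q\ll z\ll\widehat z$ contradicts achronality. That is the precise form your ``achronality combined with the key step'' should take.
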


\begin{proof}
Let $\widetilde\gamma$ be a future-directed timelike curve in the extension
which approaches $p$ from $\iota(M)$.  On a final $C^1$ segment, straighten
$\widetilde\gamma$ to the $x^0$-axis, choose at $p$ an orthonormal basis whose
future timelike vector is tangent to that axis, and shrink the chart.  The
continuity of $\widetilde g$ gives \textup{(i)} and \textup{(ii)}, and also
makes $\partial_0$ future timelike throughout the smaller box.

We now apply \cite[Proposition~2.2]{SbierskiProof}.  Its assumptions are
exactly those available here: $(M,g)$ is globally hyperbolic,
$\iota:M\to\widetilde M$ is a $C^0$ extension, and
$p\in\partial^+\iota(M)$.  The integer denoted by $d$ in that proposition is
$N-1$ in the present notation.  The proposition supplies positive
$\epsilon_0,\epsilon_1$ and a Lipschitz height function $h$ for which
\eqref{eq:boundary-subgraph} and \eqref{eq:boundary-graph} hold and whose graph
is achronal in the coordinate rectangle.  Because $p$ has coordinate zero,
$h(0)=0$.  The proof of the cited result constructs $h$ as the upper endpoint
of the vertical columns of $\iota(I^+(S;M))$ for a Cauchy hypersurface $S$.
Global hyperbolicity prevents a column from reaching the top of the box: the
cone comparison would otherwise yield a past-inextendible timelike curve
contained in $I^+(S;M)$, contradicting the Cauchy property.  The same cone
comparison gives the Lipschitz bound, while a timelike relation between two
graph points would place the later one inside the corresponding column.  This
explains the three conclusions of the cited proposition and fixes the ambient
chronological relation used here.

For the final assertion, let $z$ lie below the graph and suppose that a
past-directed timelike curve from $z$ first meets the graph at $q$.  Put
$\widehat z=(h(\mathbf z),\mathbf z)$.  We may carry out the construction with
a smaller perturbation constant than the prescribed $\delta$; the vertical
segment from $z$ to $\widehat z$ is then future timelike.  Both endpoints have
the same spatial coordinate, and every intermediate height lies between
$z^0$ and $h(\mathbf z)$, so this segment remains inside the coordinate box
and in the closed subgraph.  The segment of the given curve from $q$ to $z$
is future timelike.  Thus
$q\ll z\ll\widehat z$, contradicting
achronality because $q$ and $\widehat z$ both lie on the graph.  Therefore the
curve stays below the graph, where
\eqref{eq:boundary-subgraph} places it in $\iota(M)$.
\end{proof}

\section{Warped spacelike singularities}\label{cen:sec:center}

\subsection{Setting and main result}

Fix a warped product of the form
\begin{equation}\label{cen:eq:general-metric}
 M_R=(0,R)_r\times\mathbb R_t\times\Sigma,
 \qquad
 g=-a(r)^2\,dr^2+b(r)^2\,dt^2+c(r)^2\gamma_\Sigma,
\end{equation}
where $(\Sigma,\gamma_\Sigma)$ is a closed connected Riemannian manifold and
$a,b,c\in C^\infty((0,R))$ are positive.  We use the time orientation for
which $r$ decreases toward the future, and write $d=\dim M_R$.  Set
\begin{equation}\label{cen:eq:AB-def}
 A(r)=\int_0^r\frac{a(s)}{b(s)}\,ds,
 \qquad
 B(r)=\int_0^r\frac{a(s)}{c(s)}\,ds.
\end{equation}
Assume that
\begin{equation}\label{cen:eq:abc-hypotheses}
 \begin{split}
 &A(R)<\infty,\qquad B(R)<\infty,\\
 &\frac{b(r)}{c(r)}\ \hbox{is nonincreasing as a function of }r,
 \qquad \lim_{r\searrow0}b(r)=+\infty.
 \end{split}
\end{equation}
In particular, $A(r),B(r)\to0$ as $r\searrow0$.  These assumptions remain
valid after decreasing $R$.

\begin{theorem}[Inextendibility at a warped spacelike singularity]
\label{cen:thm:warped-center}
Let $(\Sigma,\gamma_\Sigma)$ be a closed connected Riemannian manifold, and
let $a,b,c$ be smooth and positive on $(0,R)$.  Suppose that for some
$R_0\in(0,R)$,
\[
 \int_0^{R_0}\frac{a(r)}{b(r)}\,dr<\infty,
 \qquad
 \int_0^{R_0}\frac{a(r)}{c(r)}\,dr<\infty,
\]
the quotient $b/c$ is nonincreasing on $(0,R_0)$, and
$b(r)\to\infty$ as $r\searrow0$.  No $C^0$ extension of $(M_R,g)$ contains a
future boundary point approached by a future-directed timelike curve along
which $r\to0$.  With the opposite time orientation, no past boundary point
can be approached by a past-directed timelike curve along which $r\to0$.
\end{theorem}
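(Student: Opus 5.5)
We argue by contradiction, following the outline sketched in the introduction. Suppose $\iota:M_R\to\widetilde M$ is a $C^0$ extension and $p\in\partial^+\iota(M_R)$ is reached by a future-directed timelike curve $\widetilde\gamma$ with $r\to0$ along it. After shrinking $R$ to $R_0$ we may assume \eqref{cen:eq:abc-hypotheses}. Note that $M_R$ is globally hyperbolic: the slices $\{r=\rho\}$ are Cauchy, because $r$ is a time function and $\Sigma$ is compact. Hence Lemma~\ref{lem:future-boundary-graph} applies at $p$ and gives a graph chart $\varphi$ with small perturbation constant $\delta$.

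\textbf{Step 1: limiting trace.} Write $\widetilde\gamma=(r,t,\omega)$ in the original coordinates. The causal estimate \eqref{eq:intro-causal-budget} gives $|dt/dr|\le a/b$ and $|d\omega/dr|_{\gamma_\Sigma}\le a/c$. Finiteness of $A$ and $B$ then shows that $t\to t_*$ and $\omega\to\omega_*$ as $r\to0$. In fact, for $r>0$ we have
\[
|t-t_*|\le A(r),\qquad d_\Sigma(\omega,\omega_*)\le B(r).
\]

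\textbf{Step 2: compression and separator.} Fix $\vartheta\in(0,1)$ and $\varepsilon>0$, and define $\phi_\varepsilon$ by \eqref{eq:intro-compression}. Consider the map
\[
\Phi_\varepsilon:(r,t,\omega)\mapsto(\phi_\varepsilon(r),t,\omega).
\]
The better choice is to keep $t$ and $\omega$ unchanged and compress only the radial parametrization. Since $\phi_\varepsilon'=\vartheta\,(c/a)(\phi_\varepsilon)/(c/a)(r)$, composing a causal curve with $\Phi_\varepsilon$ multiplies the normalized angular speed by $\vartheta$. The time speed is multiplied by at most $\vartheta$, because $\phi_\varepsilon(r)\le r$ and $b/c$ is nonincreasing. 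The image of a causal curve is therefore uniformly timelike, with budget at most $\vartheta^2<1$. Applying this to the tail of $\widetilde\gamma$ on $r\in(\varepsilon,r_1]$ gives a timelike curve running from a compact region near $\widetilde\gamma$ down to the level $\phi_\varepsilon$ of that region. The plan is to show that, in the graph chart, there is a fixed compact set $K\subset\iota(M_R)\cap\widetilde U$ (a small piece of a timelike tube around a point $\widetilde\gamma(s_0)$), all of whose points lie in the timelike past of points arbitrarily close to $p$, with the $r$-values on $K$ bounded below by some $r_K>0$. Conversely, the Step~1 estimates show that every point with $r<\rho$ and $(t,\omega)$ within $(A(\rho),B(\rho))$ of $(t_*,\omega_*)$ lies in the chart, close to $p$, for $\rho$ small. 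This step extracts the compact chronological separator in the adapted chart, and I expect it to be the main obstacle. The difficulty is to verify that the uniform timelike margin $1-\vartheta^2$ survives the passage to the $C^0$ chart. It also requires controlling, via the achronal graph and the final assertion of Lemma~\ref{lem:future-boundary-graph}, that past-directed curves from near $p$ stay in $\iota(M_R)$. I would first try to obtain this by perturbing the compressed curves in the $t$ and $\omega$ directions using the slack $1-\vartheta^2$, thereby producing an open set of timelike curves.

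\textbf{Step 3: translations and contradiction.} Fix $\lambda>0$ smaller than the $t$-slack produced in Step~2. For small $\rho$, consider the points
\[
q_\pm(\rho)=(\rho,\,t_*\pm\lambda,\,\omega_*).
\]
Both are reached by timelike curves of the type constructed in Step~2, and both converge in the chart to boundary points near $p$ (translates of the trace). Hence the chart distance between $\iota(q_+)$ and $\iota(q_-)$ is uniformly bounded. More precisely, they can be joined within the box by a spacelike curve lying in the future of $K$ and below the graph. Projecting that curve to the slice $\{r=\rho\}$ along the past-directed timelike directions (vertical lines of the chart), which stays in $\iota(M_R)$ by the final assertion of Lemma~\ref{lem:future-boundary-graph}, yields a curve in $\{r=\rho\}$ whose induced length is bounded by a constant times the chart length, using the cone bounds of Lemma~\ref{lem:chart}. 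This bound is uniform in $\rho$. On the other hand, the intrinsic distance in the complete Riemannian slice $\{r=\rho\}$, with metric $b(\rho)^2dt^2+c(\rho)^2\gamma_\Sigma$, is at least $2\lambda b(\rho)$, which tends to $\infty$ as $\rho\to0$. This is the contradiction. The time-reversed statement follows by reversing the orientation throughout.
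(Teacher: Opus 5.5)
There is a genuine gap, and it sits exactly where you locate the main obstacle: nothing in your argument places the translated points in the extension chart. In Step~3 you assert that $\iota(q_\pm(\rho))$, with $q_\pm(\rho)=(\rho,t_*\pm\lambda,\omega_*)$, lie in $\widetilde U$ near $p$ with uniformly bounded chart distance. But $T_{\pm\lambda}$ is only an isometry of $M_R$, and nothing says it acts continuously near $p$. Worse, for fixed $\lambda$ and $\rho\to0$ these points are not in the chronological past of any axis point, since $z\ll\eta_s$ forces $|t(z)-t_*|\le A(r(z))$ by \eqref{cen:eq:past-terminal-cap-bounds}. So neither Lemma~\ref{lem:future-boundary-graph} nor a diamond argument says anything about them. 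Your Step~2 claim, that all points with $r<\rho$ and $(t,\omega)$ within $(A(\rho),B(\rho))$ of $(t_*,\omega_*)$ lie in the chart near $p$, is the same unproved assertion.

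The paper obtains this localization from a chain you would need to supply.
(a) The compression $r\mapsto(t(\phi_\varepsilon(r)),r,\omega(\phi_\varepsilon(r)))$ on $[\varepsilon,R_1]$, smoothed at its Lipschitz endpoint, gives $(t_*,\varepsilon,\omega_*)\in I^+(\eta(R_0))$. Hence there is a $\rho>0$, uniform in $s$, such that $z\ll\eta_s$ and $r(z)<\rho$ imply $z\in I^+(\eta_{s^+})$.
(b) Therefore the set $\mathcal E$ of \eqref{cen:eq:separator} lies in $\{r\ge\rho\}$, so $\mathcal K$ is compact \emph{in $M_R$}. Causal simplicity and push-up show that $\mathcal K$ meets every past-directed timelike curve from $\eta_s$, $s>s^+$, into $I^-(\eta_{s^-})$.
(c) The diamond identity \eqref{cen:eq:diamond-identity} puts $\iota(\mathcal K)$ into $\varphi^{-1}(\overline Q)\subset\varphi^{-1}(L)$. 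Its proof needs the fixed-endpoint timelike homotopies of Lemma~\ref{cen:lem:small-interior} and the lens $L\Subset\widetilde U$.
(d) Compactness of $\mathcal K$ and continuity of $(\xi,z)\mapsto\iota(T_\xi z)$ give $\iota(T_\xi\mathcal K)\subset\varphi^{-1}(L)$ for $|\xi|\le\lambda$. An exit argument along $\partial\overline L$, combined with the separating property, then traps $\iota(T_\xi\eta_s)$ in $L$, which is \eqref{cen:eq:translated-tube}.
Only after this are the points $T_{\pm\lambda}\eta_{s_j}$ in the chart. At that stage your slice-projection bound, which is essentially the paper's uniform-Lipschitz graph argument, finishes the proof.

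Step~2 also needs repair even as a plan. Your map $\Phi_\varepsilon(r,t,\omega)=(\phi_\varepsilon(r),t,\omega)$ goes the wrong way. Along the image curve the normalized angular speed is the old one divided by $\vartheta$, not multiplied by it. The time term grows by at least the same factor, because $(b/c)(\phi_\varepsilon(r))\ge(b/c)(r)$. So the image need not even be causal, and it still terminates at $r=0$. What gains the factor $\vartheta$, and moves the terminal trace $(t_*,\omega_*)$ up to the level $r=\varepsilon$, is precomposition of the trace with $\phi_\varepsilon$, as in (a).

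Likewise, the compact set you propose, a small timelike tube around $\widetilde\gamma(s_0)$, is trivially compact with $r$ bounded below, but it separates nothing. Translating it therefore gives no control of the translated terminal curves. The compression is needed precisely to show that the \emph{separating} set $\mathcal K$ avoids small $r$. Finally, $\lambda$ comes from the compactness of $\mathcal K$, not from the slack $1-\vartheta^2$.
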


The proof occupies the remainder of this section.  We first control causal
curves and timelike homotopies, then compress terminal traces away from
$r=0$, and finally localize the divergence of the radial slices in a boundary
chart.

\subsection{Causal geometry and timelike homotopies}

\begin{lemma}[Causal estimates and terminal limits]\label{cen:lem:terminal-variation}
Assume \eqref{cen:eq:abc-hypotheses}.  Every nonconstant future-directed
causal curve has strictly decreasing $r$.  If such a curve is written, with
the opposite orientation of its parameter, as
\[
 \sigma(r)=(t(r),r,\omega(r)),\qquad r\in(r_1,r_2),
\]
then, almost everywhere,
\begin{equation}\label{cen:eq:causal-budget}
 \frac{b(r)^2}{a(r)^2}|t'(r)|^2
 +\frac{c(r)^2}{a(r)^2}|\omega'(r)|_{\gamma_\Sigma}^2\leq1.
\end{equation}
The inequality is strict for a timelike curve.  Consequently, if $r_1=0$,
then $t$ and $\omega$ have limits $t_*\in\mathbb R$ and $\omega_*\in\Sigma$
at zero, and
\begin{equation}\label{cen:eq:terminal-bounds}
 |t(r)-t_*|\leq A(r),
 \qquad
 d_{\gamma_\Sigma}(\omega(r),\omega_*)\leq B(r).
\end{equation}
\end{lemma}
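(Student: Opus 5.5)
The plan is to work pointwise with the metric \eqref{cen:eq:general-metric} and then integrate. Let $\sigma$ be a nonconstant future-directed causal curve, locally Lipschitz, with tangent $\dot\sigma=(\dot t,\dot r,\dot\omega)$ defined almost everywhere. Causality gives
\[
 a^2\dot r^2\ge b^2\dot t^2+c^2|\dot\omega|_{\gamma_\Sigma}^2 .
\]
If $\dot r=0$ at such a point, then $\dot t=0$ and $\dot\omega=0$, so the tangent vanishes, which the definition of causal curve excludes. Hence $\dot r\ne0$ almost everywhere. The vector $-\partial_r$ is future timelike by the choice of time orientation. A causal vector lies in the future cone exactly when its $g$-product with $-\partial_r$ is negative, and this product equals $a^2\dot r$. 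So $\dot r<0$ almost everywhere. Since $r\circ\sigma$ is locally Lipschitz, it is strictly decreasing.

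Next I reparametrize by $r$. Strict monotonicity makes $r\circ\sigma$ a homeomorphism onto $(r_1,r_2)$. To obtain the a.e.\ inequality cleanly, I would first reparametrize $\sigma$ by a Riemannian arclength (for instance $dr^2+dt^2+\gamma_\Sigma$). The claim is then that the inverse map $r\mapsto s$ is locally absolutely continuous. The reason is that $dr/ds<0$ almost everywhere, so the monotone Lipschitz function $r(s)$ maps null sets to null sets and has an absolutely continuous inverse. This is where I expect the only real care to be needed, namely checking that $t(r)$ and $\omega(r)$ are locally absolutely continuous. Once that is in place, the chain rule gives $t'=\dot t/\dot r$ and $\omega'=\dot\omega/\dot r$ almost everywhere. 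Dividing the pointwise inequality by $a^2\dot r^2$ yields \eqref{cen:eq:causal-budget}. For a timelike curve the inequality is strict wherever the tangent is timelike, which holds everywhere away from the finitely many breaks.

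For the terminal limits, assume $r_1=0$. Inequality \eqref{cen:eq:causal-budget} gives $|t'|\le a/b$ and $|\omega'|_{\gamma_\Sigma}\le a/c$ almost everywhere. For $0<\rho<r<R$ this implies
\[
 |t(r)-t(\rho)|\le A(r)-A(\rho),
 \qquad
 d_{\gamma_\Sigma}(\omega(r),\omega(\rho))\le B(r)-B(\rho).
\]
The second bound holds because the $\gamma_\Sigma$-length of $\omega|_{[\rho,r]}$ is at most $\int_\rho^r a/c$. By \eqref{cen:eq:abc-hypotheses}, $A$ and $B$ are finite and tend to $0$. Thus $t(\rho)$ is Cauchy as $\rho\searrow0$, and so is $\omega(\rho)$ in the complete metric space $(\Sigma,d_{\gamma_\Sigma})$, which is compact since $\Sigma$ is closed. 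The limits $t_*$ and $\omega_*$ therefore exist. Letting $\rho\to0$ in the two displayed inequalities gives \eqref{cen:eq:terminal-bounds}.
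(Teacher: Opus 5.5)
Your proof is correct and follows the paper's argument: your identity $g(\dot\sigma,-\partial_r)=a^2\dot r$ is the paper's observation that $dr$ is a timelike covector with a fixed sign on the future cone, and dividing the causal inequality by $a^2\dot r^2$ and then integrating the component bounds $a/b$ and $a/c$ is exactly what the paper does. Your extra care about reparametrizing by $r$ improves on the paper's one-line treatment, with one correction. The property that makes the inverse $s(r)$ absolutely continuous is not that $r(s)$ sends null sets to null sets. It is that $r(s)$ sends sets of positive measure to sets of positive measure, because $|r(F)|=\int_F|\dot r|$ and $\dot r\ne0$ almost everywhere.
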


\begin{proof}
The covector $dr$ is timelike, since
$g^{-1}(dr,dr)=-a(r)^{-2}<0$.  Hence a nonzero causal vector cannot be tangent to a
level set of $r$.  The continuous function $X\mapsto dr(X)$ is therefore
nonzero on each selected future causal cone and has a fixed sign there.  The
chosen time orientation makes this sign negative.  Thus
$dr(\dot\sigma)<0$ almost everywhere along every future-directed causal
curve $\sigma$; since $r\circ\sigma$ is absolutely continuous, it is strictly
decreasing.

With $r$ as parameter, the causal inequality is
\[
 -a(r)^2+b(r)^2t'(r)^2
       +c(r)^2|\omega'(r)|_{\gamma_\Sigma}^2\leq0.
\]
Division by $a(r)^2$ gives \eqref{cen:eq:causal-budget}.  In
particular,
\begin{equation}\label{cen:eq:component-budgets}
 |t'(r)|\leq \frac{a(r)}{b(r)},
 \qquad
 |\omega'(r)|_{\gamma_\Sigma}\leq\frac{a(r)}{c(r)}.
\end{equation}
Both right-hand sides are integrable at zero.  Hence $t$ is Cauchy and the
fiber trace has finite length.  Completeness of the closed fiber gives the
limit of $\omega$, while integration of \eqref{cen:eq:component-budgets}
gives \eqref{cen:eq:terminal-bounds}.

\end{proof}

\begin{lemma}[Geometry of the radial level sets]\label{cen:lem:spacelike-levels}
Assume \eqref{cen:eq:abc-hypotheses}.  Every level
$S_\rho=\{r=\rho\}$ is spacelike, with complete induced metric
\begin{equation}\label{cen:eq:slice-metric}
 h_\rho=b(\rho)^2\,dt^2+c(\rho)^2\gamma_\Sigma.
\end{equation}
\end{lemma}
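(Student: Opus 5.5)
The plan is to compute the pullback directly and then prove completeness through a product argument. The level $S_\rho$ is the embedded submanifold $\{\rho\}\times\R_t\times\Sigma$, and it is parametrized by $(t,\omega)$. On it, $dr$ restricts to zero, so pulling back \eqref{cen:eq:general-metric} kills the term $-a(\rho)^2dr^2$ and gives exactly \eqref{cen:eq:slice-metric}.

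Spacelikeness follows immediately from this formula. Since $b(\rho)>0$, $c(\rho)>0$ and $\gamma_\Sigma$ is Riemannian, $h_\rho(X,X)>0$ for every nonzero tangent vector $X$. One can also argue via Lemma~\ref{cen:lem:terminal-variation}: the normal covector $dr$ is timelike, because $g^{-1}(dr,dr)=-a^{-2}<0$, and this forces the level set to be spacelike.

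For completeness, note that $h_\rho$ is a Riemannian product of $(\R,b(\rho)^2dt^2)$ with $(\Sigma,c(\rho)^2\gamma_\Sigma)$, where both scale factors are constants. The first factor is isometric to the Euclidean line via $s=b(\rho)t$, so it is complete. The second is a closed Riemannian manifold, hence complete by Hopf--Rinow. A product of complete metric spaces is complete. To make this concrete, I would use the distance comparison
\[
\max\bigl(b(\rho)|t_1-t_2|,\;c(\rho)d_{\gamma_\Sigma}(\omega_1,\omega_2)\bigr)\le d_{h_\rho}\le b(\rho)|t_1-t_2|+c(\rho)d_{\gamma_\Sigma}(\omega_1,\omega_2).
\]
The lower bound holds because any curve joining the two points has length at least the length of each projection, since the metric is a sum of nonnegative pieces. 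The upper bound comes from concatenating two curves, one in each factor. With this comparison in hand, a Cauchy sequence for $h_\rho$ is Cauchy in each factor, so it converges. Alternatively, closed bounded sets are compact because they sit inside $[-T,T]\times\Sigma$.

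This lower bound is the estimate used later, namely that points with $t$-separation $2\lambda$ are at $h_\rho$-distance at least $2\lambda b(\rho)$, so I would record it explicitly. No step here is a real obstacle. The only care needed is to justify the projection inequality for merely piecewise $C^1$ curves, and pointwise domination of the integrand handles that.
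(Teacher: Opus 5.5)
Your proposal is correct and follows the paper's proof: restrict $g$ to $dr=0$ to obtain $h_\rho$, read off positive definiteness, and deduce completeness from the Riemannian product of a rescaled Euclidean line with a rescaled closed manifold. Your two-sided distance comparison simply spells out the product-completeness step. Its lower bound $d_{h_\rho}\ge b(\rho)|t_1-t_2|$ is the estimate the paper uses later for the contradiction on radial level sets.
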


\begin{proof}
The restriction of $g$ to $dr=0$ is \eqref{cen:eq:slice-metric} and is
positive definite.  It is complete because it is the Riemannian product of a
constant multiple of the Euclidean line and a constant multiple of the closed
manifold $\Sigma$.
\end{proof}

\begin{lemma}[Global hyperbolicity and timelike homotopy near the singularity]\label{cen:lem:small-interior}
Let $(\Sigma,\gamma_\Sigma)$ be closed and connected, and assume
\eqref{cen:eq:abc-hypotheses}.  After decreasing $R$, the spacetime
$(M_R,g)$ has the following properties.
\begin{enumerate}[label=\textup{(\roman*)},leftmargin=2.2em]
\item $-r$ is a temporal function, each $S_\rho$ is a Cauchy hypersurface, and
      $M_R$ is globally hyperbolic;
\item any two future-directed timelike curves with the same endpoints are
      homotopic with fixed endpoints through piecewise-$C^1$
      future-directed timelike curves.
\end{enumerate}
No hypothesis on the fundamental group, orientation, or isometry group of
$\Sigma$ is required.
\end{lemma}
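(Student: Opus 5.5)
For (i), we use $\tau=-r$ as the candidate temporal function.  By Lemma~\ref{cen:lem:terminal-variation}, $\tau$ strictly increases along every nonconstant future causal curve.  Its gradient is timelike, since $g^{-1}(dr,dr)=-a^{-2}<0$, so $\tau$ is a smooth temporal function.

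To show that each $S_\rho$ is Cauchy, fix an inextendible causal curve.  We must check that $r$ takes every value in $(0,R)$ along it.  We parametrize by $r$ on a maximal interval $(r_1,r_2)$ and suppose $r_2<R$.  On $[r_2-\epsilon,r_2]$ the budget \eqref{cen:eq:component-budgets} bounds $|t'|$ and $|\omega'|$ by continuous functions.  Hence $t$ has a limit, and $\omega$ has a limit because $\Sigma$ is compact and complete.  The curve would then extend to $r=r_2$, contradicting inextendibility.  The same argument applies at $r_1>0$.  Thus every inextendible causal curve meets each $S_\rho$ exactly once.  Since $-r$ is strictly monotone along causal curves, $S_\rho$ is achronal, and therefore $S_\rho$ is Cauchy.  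Global hyperbolicity then follows from Geroch's theorem, or in Bernal--S\'anchez form.  This part needs no decrease of $R$.  We decrease $R$ only to stay inside the range $(0,R_0)$ where the hypotheses hold.

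For (ii), let $\sigma_0,\sigma_1$ be future timelike curves from $x$ to $y$.  After reparametrization, both are graphs over the common $r$-interval $[r(y),r(x)]$:
\[
 \sigma_i(r)=(t_i(r),r,\omega_i(r)),
\]
with strict inequality in \eqref{cen:eq:causal-budget}.  A naive straight-line interpolation in the fiber is unavailable, because $\Sigma$ is not simply connected in general and $\omega_0,\omega_1$ need not be homotopic rel endpoints in $\Sigma$.

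The key idea is to use the radial direction to retract both curves into a neighborhood of a single curve.  Let $\beta$ be the $r$-coordinate curve $r\mapsto(t(x),r,\omega(x))$, which is timelike, running from $x$ down to $r(y)$.  Consider the family of curves that first follow $\sigma_i$ from $x$ down to level $r=s$, and then follow a timelike curve from $\sigma_i(s)$ to $y$.  This does not obviously make the two curves homotopic.

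Instead, the plan is to reduce to the universal cover.  The product $(0,R)\times\R\times\widetilde\Sigma$ with the lifted metric is a warped product over a simply connected fiber.  In it, fixed-endpoint timelike homotopy holds by straight interpolation in a chart of the form $(r,\text{graph})$, provided the two lifts have common endpoints.  The fiber projections $\omega_0$ and $\omega_1$ have $\gamma_\Sigma$-lengths at most $\int_{r(y)}^{r(x)}a/c\le B(R)$.  Decreasing $R$ makes $B(R)$ smaller than half the injectivity radius of the closed manifold $\Sigma$.  Then both fiber traces lie in a single convex geodesic ball around $\omega(x)$, so they are homotopic rel endpoints via the geodesic convex combination
\[
 \omega_s(r)=\exp_{\omega_0(r)}\bigl(s\exp^{-1}_{\omega_0(r)}\omega_1(r)\bigr),
\]
and we set $t_s=(1-s)t_0+st_1$.

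The main obstacle is verifying that the interpolated curves remain timelike, that is, that
\[
 \frac{b^2}{a^2}|t_s'|^2+\frac{c^2}{a^2}|\omega_s'|^2<1.
\]
The time part is convex in $s$.  The fiber part is not exactly convex on a curved ball, because the derivative of $\exp$ has Jacobi-field distortion.  The plan is to handle this by rescaling.  Replace $\gamma_\Sigma$ locally by normal coordinates in which the metric is $(1+O(\text{dist}^2))$-close to Euclidean.  Since the distance is at most $2B(R)$, which is small, the distortion factor is at most $1+\delta$.

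Strictness in \eqref{cen:eq:causal-budget} holds only pointwise, so we cannot simply absorb $1+\delta$.  We handle this by first compressing radially, in the spirit of \eqref{eq:intro-compression}.  Alternatively, we first homotope each $\sigma_i$ through timelike curves to one with budget at most $1-\eta$ away from the endpoints, and push the remaining endpoint errors into short controlled segments near $x$ and $y$.  If this technical step resists, the fallback is the general fact that in a globally hyperbolic spacetime, timelike curves between fixed endpoints that are homotopic as curves are timelike-homotopic, which can be proved via the Lorentzian distance structure.  Combined with the small-ball homotopy above, this fallback reduces (ii) to (i).
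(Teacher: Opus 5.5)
Your part (i) is correct and is essentially the paper's argument. Part (ii), however, has a genuine gap. The geodesic convex combination $\omega_s(r)=\exp_{\omega_0(r)}\bigl(s\exp^{-1}_{\omega_0(r)}\omega_1(r)\bigr)$ does not preserve the causal inequality, and no choice of $R$ repairs this. Here is an explicit failure on the round $\mathbb S^2$. Let $\omega_0,\omega_1$ have latitudes $\pm\phi(r)$ and a common longitude $\lambda(r)$, with $\phi=0$ at the endpoints. Then $\omega_{1/2}$ runs along the equator with speed $|\lambda'|$, whereas $|\omega_0'|=|\omega_1'|=\cos\phi\,|\lambda'|$ wherever $\phi'=0$. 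With $t_0=t_1$ constant and both curves nearly null there, the midpoint curve is spacelike.

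The relative excess is of order $KL^2$, with $L=d_{\gamma_\Sigma}(\omega_0(r),\omega_1(r))$. It would have to be absorbed by the timelike margin of the given pair. That margin is positive on the compact $r$-interval but has no lower bound once $R$ has been fixed. So the obstruction is not pointwise strictness but non-uniformity over all pairs of curves. Your proposed remedies (radial compression, pushing errors into endpoint segments) are neither carried out nor aimed at this.

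Your fallback is also false: homotopic timelike curves need not be timelike-homotopic, even in a simply connected globally hyperbolic spacetime. Take $\R\times\R^2$ with metric $-d\tau^2+h$, where $h$ is complete and has a tall conformal bump between $p$ and $q$. Timelike curves from $(0,p)$ to $(T,q)$ project to paths of $h$-length less than $T$. Curves passing on either side of the bump are homotopic, but every homotopy between them sweeps across the bump through paths of length greater than $T$. For the same reason, passing to the universal cover does not help.

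The missing idea is to avoid interpolating between two different fiber curves at all. The paper takes $B(R)$ below a strong convexity radius; half the injectivity radius alone does not give convex balls. It then deforms each curve separately, keeping $t_i$ fixed. For $u\in[r_1,r_0]$ it replaces $\omega_i|_{[r_1,u]}$ by the radial geodesic from $\omega_i(r_1)$ to $\omega_i(u)$, parametrized proportionally to the accumulated arclength $\ell_i$. The new speed is $d_{\gamma_\Sigma}(\omega_i(r_1),\omega_i(u))\,|\omega_i'|/\ell_i(u)\le|\omega_i'|$, so the causal budget never increases and no margin is required.

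At $u=r_0$ both traces have the form $\zeta(\theta_i(r))$, where $\zeta$ is the unit-speed minimizing geodesic joining the common fiber endpoints and each $\theta_i$ is nondecreasing. Since $\bigl|\tfrac{d}{dr}\zeta(\theta)\bigr|=|\theta'|$, linear interpolation of $(t,\theta)$ is exactly convex. Convexity of $x\mapsto x^2$ then keeps every intermediate curve timelike.
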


\begin{proof}
By Lemma~\ref{cen:lem:terminal-variation}, $r$ is strictly decreasing on
every future-directed causal curve.  Let $c$ be future-inextendible and suppose
$r(c)$ has a positive limit $r_\infty$.  On the compact radial interval
$[r_\infty,r(c(0))]$, the two right-hand sides in
\eqref{cen:eq:component-budgets} are bounded.  More explicitly, if
$s_1<s_2$ are late parameter values, the causal inequality in an arbitrary
locally Lipschitz parametrization gives
\begin{align*}
 |t(c(s_2))-t(c(s_1))|
 &\leq\int_{s_1}^{s_2}\frac{a(r)}{b(r)}|\dot r|\,ds,\\
 L_{\gamma_\Sigma}(\omega|_{[s_1,s_2]})
 &\leq\int_{s_1}^{s_2}\frac{a(r)}{c(r)}|\dot r|\,ds.
\end{align*}
Since $r$ is absolutely continuous and monotone, the one-dimensional
change-of-variables formula turns these integrals into integrals over
$[r(c(s_2)),r(c(s_1))]$.  They tend to zero as $s_1,s_2$ approach the future
end.  Thus $t$ is Cauchy and the fiber trace has a limit.  The curve converges
to a point of $M_R$ and can be continued by a short radial timelike segment,
a contradiction.  Hence $r\to0$ at the future end.  The time-reversed argument
shows that $r\to R$ at the past end.  Every causal curve which is inextendible
in both parameter directions therefore meets each $S_\rho$ exactly once.
Thus the $S_\rho$ are Cauchy hypersurfaces, and $M_R$ is globally hyperbolic.

For assertion \textup{(ii)}, let $\operatorname{conv}(\Sigma)>0$ be smaller
than a uniform strong-convexity
radius of $\Sigma$, and decrease $R$ until
\begin{equation}\label{cen:eq:B-convexity-radius}
 B(R)<\operatorname{conv}(\Sigma).
\end{equation}
Let $c_1,c_2$ be timelike curves with common endpoints.  Parametrize both by
increasing $r$, opposite to their time orientation, and write
\[
 c_i(r)=(t_i(r),r,\omega_i(r)),
 \qquad r\in[r_1,r_0],\quad i=1,2.
\]
The length of each fiber trace is less than
$B(r_0)-B(r_1)<\operatorname{conv}(\Sigma)$.  Since the two traces have the
same endpoint at $r_1$, their images lie in a single strongly convex ball.

For $u\in[r_1,r_0]$, define the accumulated length
\[
 \ell_i(u)=\int_{r_1}^{u}|\omega_i'(s)|_{\gamma_\Sigma}\,ds
\]
and, for $r\leq u$, put
\[
 \qquad
 q_i(u,r)=
 \begin{cases}
 \ell_i(r)/\ell_i(u),&\ell_i(u)>0,\\[0.4ex]
 0,&\ell_i(u)=0.
 \end{cases}
\]
The logarithm
$v_i(u)=\exp_{\omega_i(r_1)}^{-1}(\omega_i(u))$ is well defined and continuous
because the entire trace lies in the chosen strongly convex ball.  Define
\[
 \widehat\omega_{i,u}(r)
 =
 \begin{cases}
 \exp_{\omega_i(r_1)}\bigl(q_i(u,r)v_i(u)\bigr),&r\leq u,\\
 \omega_i(r),&r\geq u.
 \end{cases}
\]
At the moving splice $r=u$, the two expressions agree.  For
$\ell_i(u)>0$ and almost every $r<u$, the radial geodesic in the first line is
parametrized proportionally to arclength, and hence
\begin{align*}
 |\widehat\omega_{i,u}'(r)|_{\gamma_\Sigma}
 &=\frac{d_{\gamma_\Sigma}(\omega_i(r_1),\omega_i(u))}{\ell_i(u)}
       |\omega_i'(r)|_{\gamma_\Sigma}
 \leq |\omega_i'(r)|_{\gamma_\Sigma}.
\end{align*}
If $\ell_i(u)=0$, the original trace is constant on $[r_1,u]$, so both sides
vanish there.

We record why this construction is a genuine homotopy.  The map
$(u,r)\mapsto\widehat\omega_{i,u}(r)$ is continuous away from
$\{\ell_i(u)=0\}$ by the preceding formula and is continuous across the
moving splice because both branches converge to $\omega_i(u)$.  If
$\ell_i(u_j)\to0$, then
\[
 d_{\gamma_\Sigma}
 \bigl(\widehat\omega_{i,u_j}(r),\omega_i(r_1)\bigr)
 \leq d_{\gamma_\Sigma}(\omega_i(r_1),\omega_i(u_j))
 \leq\ell_i(u_j)
\]
for $r\leq u_j$; this gives continuity at the remaining parameter values.
For fixed $u$, the curve is piecewise $C^1$: the only new break occurs at
$r=u$, and the accumulated-length function is piecewise $C^1$ on the original
finite subdivision.  The speed estimate above shows that replacing
$\omega_i$ by $\widehat\omega_{i,u}$ while keeping $t_i$ fixed cannot increase
the left-hand side of the strict causal inequality.  Both one-sided tangents
at the new break are therefore timelike and have the same time orientation.
Thus $u\mapsto(t_i,r,\widehat\omega_{i,u})$ is a fixed-endpoint timelike
homotopy.  At $u=r_1$ it is the original curve, while at $u=r_0$ its fiber
trace runs along the unique minimizing geodesic between the common fiber
endpoints.

After applying this construction to both curves, write their fiber components
as $\zeta(\theta_i(r))$, where $\zeta:[0,\ell]\to\Sigma$ is the unit-speed
minimizing geodesic between the common endpoints.  If $\ell=0$, take
$\theta_i\equiv0$.  For $\lambda\in[0,1]$ set
\[
 t_\lambda=(1-\lambda)t_1+\lambda t_2,
 \qquad
 \theta_\lambda=(1-\lambda)\theta_1+\lambda\theta_2.
\]
The functions $\theta_i$ are piecewise $C^1$, take the values $0$ and $\ell$
at the two endpoints, and are nondecreasing.  Hence the interpolation fixes
the endpoints and remains on $\zeta$.  Convexity of $x\mapsto x^2$ gives, at
every parameter value away from the common finite subdivision,
\begin{align*}
 &-a^2+b^2(t_\lambda')^2+c^2(\theta_\lambda')^2\\
 &\quad\leq(1-\lambda)
   \{-a^2+b^2(t_1')^2+c^2(\theta_1')^2\}
 +\lambda
   \{-a^2+b^2(t_2')^2+c^2(\theta_2')^2\}<0.
\end{align*}
The same inequality holds for every one-sided tangent at a break point.
Consequently these curves form the required piecewise-$C^1$ timelike
homotopy.  Reversing the parameter restores their future orientation.
\end{proof}

\subsection{Compression of terminal traces}

The compressed traces used below need not be differentiable at their terminal
endpoint.  We therefore record a quantitative smoothing lemma.

\begin{lemma}[Smoothing uniformly timelike Lipschitz curves]
\label{cen:lem:Lipschitz-smoothing}
Let $(N,g_N)$ be a smooth time-oriented spacetime, let $q$ be an auxiliary
Riemannian metric, and let $T$ be a smooth future-directed timelike vector
field.  Suppose that $c:[a,b]\to N$ is Lipschitz, has compact image, and that
for some $\vartheta>0$ its almost-everywhere velocity satisfies
\begin{equation}\label{cen:eq:uniform-timelike-Lipschitz}
 g_N(\dot c,\dot c)\leq-\vartheta|\dot c|_q^2,
 \qquad
 g_N(T,\dot c)\leq-\vartheta|\dot c|_q,
 \qquad
 |\dot c|_q\geq\vartheta.
\end{equation}
Then $c$ can be replaced, with both endpoints fixed, by a piecewise-$C^1$
future-directed timelike curve.  The replacement may be chosen arbitrarily
close to $c$ in the compact-open topology.
\end{lemma}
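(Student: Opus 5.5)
The plan is to approximate $c$ by piecewise-linear curves in local charts, where the uniform bounds \eqref{cen:eq:uniform-timelike-Lipschitz} make averaged velocities inherit the timelike character. The point is that, pointwise in a chart, the set of vectors $v$ with $g_N(v,v)\le-\vartheta|v|_q^2$, $g_N(T,v)\le-\vartheta|v|_q$ and $|v|_q\ge\vartheta$ is only approximately convex. Its future-cone part, however, is contained in a closed convex cone lying strictly inside the future timelike cone, uniformly over a compact set.

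\emph{Step 1 (uniform convex subcone).} First cover the compact image $c([a,b])$ by finitely many coordinate charts. In each chart, fix a frozen constant-coefficient metric $g_0$ and a frozen $T_0$ at a base point. By continuity of $g_N$, $T$ and $q$, on a sufficiently small chart every vector $v$ satisfying the first two inequalities of \eqref{cen:eq:uniform-timelike-Lipschitz} at any point lies in the convex cone
\[
K_0=\{v:\ g_0(v,v)\le-\tfrac{\vartheta}{2}|v|_0^2,\ g_0(T_0,v)<0\}.
\]
This set is convex, being one nappe of a round cone for the flat metric $g_0$. Shrinking the chart further, every nonzero $v\in K_0$ satisfies $g_N(v,v)<0$ and $g_N(T,v)<0$ at every point of the chart. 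This uses only the continuity of $g_N$ and a compactness argument on the unit sphere of $|\cdot|_0$.

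\emph{Step 2 (piecewise-linear replacement).} By Lebesgue's number lemma and the Lipschitz bound, choose a subdivision $a=s_0<\dots<s_k=b$ such that each $c([s_{j-1},s_j])$ lies in one chart. Refining the subdivision makes the mesh as small as desired. On each subinterval, replace $c$ by the coordinate straight segment joining $c(s_{j-1})$ to $c(s_j)$. Its constant velocity is
\[
\frac{1}{s_j-s_{j-1}}\int_{s_{j-1}}^{s_j}\dot c\,ds,
\]
an average of vectors lying in the convex cone $K_0$, so it lies in $K_0$. It is nonzero because $g_0(T_0,\cdot)\le-\tfrac{\vartheta}{2}|\cdot|_0$ on $K_0$ after adjusting constants, so integrating gives a strictly negative $g_0(T_0,\cdot)$. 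By Step~1 the segment is future timelike for $g_N$ at every point of the chart. It stays inside the chart, since the chart box can be taken convex in coordinates. The concatenation is piecewise $C^1$, fixes the endpoints, and has future timelike one-sided tangents at the breaks. Its $C^0$ distance from $c$ is at most of order $\mathrm{Lip}(c)\cdot\mathrm{mesh}$, which gives closeness in the compact-open topology.

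\emph{Main obstacle.} The delicate step is Step~1: the pointwise inequalities hold for $g_N$ at varying points, while convexity is available only for a frozen metric. The role of the uniform constant $\vartheta$ is exactly to leave a margin that absorbs the metric oscillation over a small chart, so that the cone is strictly timelike and stable under averaging. The third condition $|\dot c|_q\ge\vartheta$ is not essential for this argument: it only rules out degenerate constant pieces, and the averaged-vector nonvanishing already follows from the $T$-inequality.
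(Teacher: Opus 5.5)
Your proof is correct and follows essentially the same route as the paper. Both arguments use small convex coordinate balls on which every velocity of $c$ lies in one fixed convex cone that is future timelike at every point of the ball. Both replace each subarc by the coordinate chord, whose displacement $\int\dot c\,ds$ stays in that cone. Both show each chord is nonzero using a positive linear functional (your $-g_0(T_0,\cdot)$, the paper's $\ell_x$) together with the speed bound.

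The remaining differences are minor. You read the cone off directly from the margin $\vartheta$, whereas the paper obtains it from compactness of the closure of the normalized velocities. Note that your $K_0$ is not a round cone of $g_0$ itself but a nappe of the form $g_0+\tfrac{\vartheta}{2}|\cdot|_0^2$; it is convex because $\sqrt{-g_0(v,v)}$ is concave on the future cone. Also $0\notin K_0$, so the third hypothesis is what places almost every velocity in $K_0$ and makes every chord nonzero. Your closing remark that the $T$-inequality alone gives nonvanishing therefore holds only on subintervals where $c$ is nonconstant.
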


\begin{proof}
Let $K=c([a,b])$.  At every differentiability point of $c$, normalize
$\dot c$ to have $q$-norm one, and let $\mathcal V$ be the closure of these
normalized vectors in $TN|_K$.  The first two inequalities in
\eqref{cen:eq:uniform-timelike-Lipschitz} are preserved under passage to the
closure.  Since the unit sphere bundle over $K$ is compact, $\mathcal V$ is a
compact subset of the future timelike cone and has positive distance, in any
bundle metric, from the null cone.

Fix $x\in K$.  Choose a coordinate ball $W_x$ with convex coordinate image.
After shrinking $W_x$, continuity of $g_N$, $T$, and $q$ gives two constant
closed convex cones in the coordinate vector space,
\begin{equation}\label{cen:eq:fixed-smoothing-cones}
 \mathcal C_x^{\rm n}\setminus\{0\}
 \subset\operatorname{Int}\mathcal C_x^{\rm w},
\end{equation}
with the following properties: every vector of $\mathcal V$ based in
$\overline{W_x}$ has its coordinate representative in
$\mathcal C_x^{\rm n}$, and every nonzero vector of
$\mathcal C_x^{\rm w}$ is future timelike for $g_N$ at every point of
$\overline{W_x}$.  To see this explicitly, choose at $x$ a linear functional
$\ell_x$ positive on the future causal cone.  The set of unit directions in
$\mathcal V$ near $x$ is compact, lies in the open half-space
$\{\ell_x>0\}$, and is uniformly timelike.  Its closed convex hull is still
contained in that half-space and in the timelike cone at $x$.  A small conical
neighborhood of this hull gives $\mathcal C_x^{\rm n}$; a slightly larger one
gives $\mathcal C_x^{\rm w}$.  Shrinking $W_x$ preserves both inclusions at
every base point.  In particular, there is $\delta_x>0$ such that
\begin{equation}\label{cen:eq:positive-smoothing-covector}
 \ell_x(v)\geq\delta_x|v|_q
 \quad\text{for every coordinate velocity of $c$ based in }W_x.
\end{equation}

Choose finitely many such balls $W_1,\ldots,W_m$ covering $K$.  A Lebesgue
number for this finite cover, together with uniform continuity of $c$, gives a
partition
\[
 a=s_0<s_1<\cdots<s_N=b
\]
so fine that each subarc $c([s_j,s_{j+1}])$ is contained in one ball $W_{k(j)}$.
In the coordinates of that ball, replace the subarc by the straight segment
joining its endpoints.  Its displacement is
\begin{equation}\label{cen:eq:polygonal-displacement}
 c(s_{j+1})-c(s_j)
 =\int_{s_j}^{s_{j+1}}\dot c(s)\,ds.
\end{equation}
The integral belongs to $\mathcal C_{k(j)}^{\rm n}$ because this cone is
closed and convex.  It is nonzero: by
\eqref{cen:eq:positive-smoothing-covector} and the lower speed bound in
\eqref{cen:eq:uniform-timelike-Lipschitz},
\[
 \ell_{k(j)}\bigl(c(s_{j+1})-c(s_j)\bigr)
 \geq\delta_{k(j)}\int_{s_j}^{s_{j+1}}|\dot c(s)|_q\,ds
 \geq\delta_{k(j)}\vartheta(s_{j+1}-s_j)>0.
\]
The coordinate segment stays in the convex ball and, by
\eqref{cen:eq:fixed-smoothing-cones}, is future timelike throughout.  The
concatenation is a piecewise-$C^1$ future-directed timelike curve with exactly
the original endpoints.  Finally, as the mesh tends to zero, uniform
continuity of $c$ and the convexity of the coordinate balls show that every
segment remains uniformly close to the subarc it replaces.  Thus the
polygonal curves converge uniformly to $c$, which proves the last assertion.
\end{proof}

Choosing the compression map in the $B$-coordinate moves the limiting spatial
trace from $r=0$ to positive radius while controlling the angular term in the
causal inequality without any symmetry of the fiber.

\begin{lemma}[Moving the terminal trace to positive radius]\label{cen:lem:trace-advancement}
Assume \eqref{cen:eq:abc-hypotheses}.  Let $\eta$ be a $C^1$
future-directed timelike curve in $M_R$ with $r\circ\eta\to0$, and write
\[
 \eta(r)=(t(r),r,\omega(r)),\qquad 0<r\leq R_0.
\]
Let $(t_*,\omega_*)$ be its terminal limit.  After decreasing $R_0$ if
necessary, for every sufficiently small $\varepsilon>0$ one has
\begin{equation}\label{cen:eq:center-point-future}
 (t_*,\varepsilon,\omega_*)\in I^+(\eta(R_0);M_R).
\end{equation}
Consequently, there are $\varepsilon_0,\delta>0$ such that
\begin{equation}\label{cen:eq:terminal-cap}
 \left\{(t,r,\omega):0<r\leq\varepsilon_0,
 |t-t_*|<\delta,
 d_{\gamma_\Sigma}(\omega,\omega_*)<\delta\right\}
 \subset I^+(\eta(R_0);M_R).
\end{equation}
There is also $\rho>0$ with the following uniform property: if
$z\ll\eta(r_s)$ for some $0<r_s<R_0$ and $r(z)<\rho$, then
\begin{equation}\label{cen:eq:terminal-uniform-past}
 z\in I^+(\eta(R_0);M_R).
\end{equation}
\end{lemma}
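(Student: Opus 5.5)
The plan is to build, for each small $\varepsilon$, an explicit uniformly timelike Lipschitz curve that ends at $(t_*,\varepsilon,\omega_*)$ by composing $\eta$ with the radial compression of \eqref{eq:intro-compression}. This curve is then smoothed with Lemma~\ref{cen:lem:Lipschitz-smoothing} and joined to $\eta(R_0)$ using openness of $I^+$. The two remaining assertions, \eqref{cen:eq:terminal-cap} and \eqref{cen:eq:terminal-uniform-past}, should follow from openness together with the terminal bounds \eqref{cen:eq:terminal-bounds}.

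\emph{Step 1 (compressed curve).} Fix $\vartheta\in(0,1)$ and $r_1<R_0$. For $\varepsilon<r_1$, define $\psi_\varepsilon:[\varepsilon,r_1]\to[0,r_1)$ by
\[
 B(\psi_\varepsilon(r))=\vartheta\bigl(B(r)-B(\varepsilon)\bigr),
\]
so that $\psi_\varepsilon(\varepsilon)=0$ and $\psi_\varepsilon(r)\le r$. Put
\[
 c_\varepsilon(r)=\bigl(t(\psi_\varepsilon(r)),\,r,\,\omega(\psi_\varepsilon(r))\bigr),
\]
with the conventions $t(0)=t_*$ and $\omega(0)=\omega_*$. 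Differentiating gives $\psi_\varepsilon'=\vartheta\,(a/c)(r)/(a/c)(\psi_\varepsilon)$. The angular term of \eqref{cen:eq:causal-budget} for $c_\varepsilon$ at $r$ is then exactly $\vartheta^2$ times the angular term of $\eta$ at $\psi_\varepsilon(r)$. The time term equals $\vartheta^2\bigl((b/c)(r)/(b/c)(\psi_\varepsilon)\bigr)^2$ times the time term of $\eta$ at $\psi_\varepsilon(r)$. Since $\psi_\varepsilon\le r$ and $b/c$ is nonincreasing, this factor is at most $\vartheta^2$. Hence the budget of $c_\varepsilon$ is at most $\vartheta^2<1$.

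On the compact interval $[\varepsilon,r_1]$ the functions $a,b,c$ are bounded above and below. The component budgets \eqref{cen:eq:component-budgets}, multiplied by $\vartheta$, therefore show that $c_\varepsilon$ is Lipschitz, including at the endpoint where $\psi_\varepsilon=0$. Its $r$-component has speed one, which gives the lower speed bound. The budget $\le\vartheta^2$ yields the two uniform timelike inequalities in \eqref{cen:eq:uniform-timelike-Lipschitz}, with $T=-a^{-1}\partial_r$. Lemma~\ref{cen:lem:Lipschitz-smoothing} then replaces the reversed curve by a piecewise-$C^1$ future timelike curve from $c_\varepsilon(r_1)$ to $(t_*,\varepsilon,\omega_*)$.

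\emph{Step 2 (joining to $\eta(R_0)$).} The set $I^+(\eta(R_0))$ is open and contains $\eta(r_1)$. As $\varepsilon\to0$ and $\vartheta\to1$ we have $\psi_\varepsilon(r_1)\to r_1$, hence $c_\varepsilon(r_1)\to\eta(r_1)$. Fix $\vartheta$ close to $1$. Then for all sufficiently small $\varepsilon$ we get $c_\varepsilon(r_1)\in I^+(\eta(R_0))$, and concatenation proves \eqref{cen:eq:center-point-future}. I expect the main technical care to lie in Steps 1 and 2, namely in verifying the budget bound and the Lipschitz, uniformly timelike hypotheses at the terminal endpoint, where $t'$ and $\omega'$ may blow up. The factor $\psi_\varepsilon'$, which tends to $0$ there in the relevant sense, is what controls this, and it must be checked against \eqref{cen:eq:component-budgets} rather than against pointwise derivatives of $\eta$.

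\emph{Step 3 (the cap and the uniform past).} Fix one admissible $\varepsilon_0$. By openness, some product neighbourhood
\[
 \{|r-\varepsilon_0|<\kappa,\ |t-t_*|<\delta,\ d_{\gamma_\Sigma}(\omega,\omega_*)<\delta\}
\]
lies in $I^+(\eta(R_0))$. Any point with $r\le\varepsilon_0$ and the same $(t,\omega)$ bounds is reached from $(t,\varepsilon_0+\kappa/2,\omega)$ by the radial curve $r\mapsto(t,r,\omega)$, which is timelike. This gives \eqref{cen:eq:terminal-cap}.

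For \eqref{cen:eq:terminal-uniform-past}, take $\rho\le\varepsilon_0$ with $2A(\rho)<\delta$ and $2B(\rho)<\delta$. If $z\ll\eta(r_s)$ and $r(z)<\rho$, then Lemma~\ref{cen:lem:terminal-variation} gives $r_s<r(z)<\rho$. Integrating \eqref{cen:eq:component-budgets} along a timelike curve from $z$ to $\eta(r_s)$ bounds the $t$- and $\omega$-displacements by $A(\rho)$ and $B(\rho)$, respectively. By \eqref{cen:eq:terminal-bounds}, $\eta(r_s)$ is within $A(\rho)$ in $t$ and within $B(\rho)$ in $\omega$ of $(t_*,\omega_*)$. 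Hence $z$ lies in the cap and therefore in $I^+(\eta(R_0))$.
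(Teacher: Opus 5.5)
Your proposal is correct and follows the paper's proof. It uses the same compression in the $B$-coordinate, with monotonicity of $b/c$ controlling the time term, then Lemma~\ref{cen:lem:Lipschitz-smoothing}, openness plus radial curves for \eqref{cen:eq:terminal-cap}, and the terminal bounds \eqref{cen:eq:terminal-bounds} for \eqref{cen:eq:terminal-uniform-past}. The only differences are simplifications. You fix $\vartheta$ near $1$ and join $\eta(R_0)$ to $c_\varepsilon(r_1)$ through openness of $I^+(\eta(R_0))$ at $\eta(r_1)$, where the paper uses $\sigma_\varepsilon\to1$ and an affinely extended reparametrization on $[R_1,R_0]$; and you read the Lipschitz bound off the compressed budget on $[\varepsilon,r_1]$ rather than the coordinate $u=B(r)$. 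Your aside that $\psi_\varepsilon'\to0$ at $r=\varepsilon$ is not true in general (it blows up when $a/c\to0$ at the singularity), but nothing in your argument depends on it.
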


\begin{proof}
Fix $R_1\in(0,R_0)$.  For
$0<\varepsilon<R_1$, put
\begin{equation}\label{cen:eq:compression-constant}
 \beta_\varepsilon=\frac{B(\varepsilon)}{B(R_1)},
 \qquad
 \sigma_\varepsilon=1-\sqrt{\beta_\varepsilon}.
\end{equation}
Then $0<\sigma_\varepsilon<1$ and
$1-\sigma_\varepsilon^2=2\sqrt{\beta_\varepsilon}-\beta_\varepsilon>0$.
Define $\phi_\varepsilon:[\varepsilon,R_1]\to[0,R_1)$ by
\begin{equation}\label{cen:eq:compression-map}
 B(\phi_\varepsilon(r))
 =\sigma_\varepsilon\bigl(B(r)-B(\varepsilon)\bigr).
\end{equation}
Since $B$ is strictly increasing, $\phi_\varepsilon$ is continuous on the
closed interval and smooth away from $r=\varepsilon$; moreover,
\[
 \phi_\varepsilon(\varepsilon)=0,
 \qquad 0\leq\phi_\varepsilon(r)<r.
\]
For $r>\varepsilon$, differentiation gives
\begin{equation}\label{cen:eq:compression-derivative}
 \phi_\varepsilon'(r)
 =\sigma_\varepsilon
 \frac{a(r)c(\phi_\varepsilon(r))}
      {c(r)a(\phi_\varepsilon(r))}.
\end{equation}

Extend $t$ and $\omega$ continuously to zero.  Consider
\begin{equation}\label{cen:eq:compressed-curve}
 \widehat\eta_\varepsilon(r)
 =\bigl(t(\phi_\varepsilon(r)),r,
        \omega(\phi_\varepsilon(r))\bigr),
 \qquad \varepsilon\leq r\leq R_1.
\end{equation}
At every point at which the derivatives exist,
\begin{align}
 &\frac{c(r)}{a(r)}\,\phi_\varepsilon'(r)
       |\omega'(\phi_\varepsilon(r))|
 \notag\\
 &\hspace{4em}=\sigma_\varepsilon
       \frac{c(\phi_\varepsilon(r))}{a(\phi_\varepsilon(r))}
       |\omega'(\phi_\varepsilon(r))|,
 \label{cen:eq:angular-budget-preserved}\\
 &\frac{b(r)}{a(r)}\phi_\varepsilon'(r)
       |t'(\phi_\varepsilon(r))|
 \notag\\
 &\hspace{4em}=\sigma_\varepsilon
       \frac{(b/c)(r)}{(b/c)(\phi_\varepsilon(r))}
       \frac{b(\phi_\varepsilon(r))}{a(\phi_\varepsilon(r))}
       |t'(\phi_\varepsilon(r))|
 \notag\\
 &\hspace{4em}\leq\sigma_\varepsilon
       \frac{b(\phi_\varepsilon(r))}{a(\phi_\varepsilon(r))}
       |t'(\phi_\varepsilon(r))|.
 \label{cen:eq:time-budget-improved}
\end{align}
Here we used $\phi_\varepsilon(r)\leq r$ and the assumed monotonicity of
$b/c$.
Together with the strict form of \eqref{cen:eq:causal-budget}, these estimates
give
\begin{equation}\label{cen:eq:compressed-strict-budget}
 \begin{split}
 &\frac{b(r)^2}{a(r)^2}
       \left|\frac d{dr}t(\phi_\varepsilon(r))\right|^2
 +\frac{c(r)^2}{a(r)^2}\left|\frac d{dr}
       \omega(\phi_\varepsilon(r))\right|^2\\
 &\hspace{10em}<\sigma_\varepsilon^2<1
 \end{split}
\end{equation}
almost everywhere.

Regularity at $r=\varepsilon$ is most easily seen in the coordinate $u=B(r)$.
The fiber trace is $1$-Lipschitz in this coordinate, because its metric
derivative is
\[
 \left|\frac{d\omega}{du}\right|
 =\frac{c(r)}{a(r)}|\omega'(r)|\leq1.
\]
This metric estimate gives the required manifold-valued Lipschitz statement.
Indeed, since $\omega(r)\to\omega_*$, a terminal part of the trace lies in a
normal coordinate ball about $\omega_*$.  On a smaller concentric ball the
Euclidean and Riemannian norms are uniformly equivalent, so the coordinate
representative of $\omega\circ B^{-1}$ is Lipschitz at $u=0$.  Away from that
terminal ball the trace is $C^1$ on a compact interval and hence locally
Lipschitz; a finite subdivision gives the assertion on the whole range.
The function $t\circ B^{-1}$ is also Lipschitz near $u=0$, since
\begin{equation}\label{cen:eq:t-in-B-coordinate}
 \left|\frac d{du}(t\circ B^{-1})(u)\right|
 \leq \frac{c(r)}{b(r)}
\end{equation}
and the right-hand side is bounded on $(0,R_1]$: monotonicity of $b/c$ gives
$c(r)/b(r)\leq c(R_1)/b(R_1)$.  The
right-hand side of \eqref{cen:eq:compression-map} is a smooth Lipschitz
function of $r$ on $[\varepsilon,R_1]$.  It follows that both nonradial
components of \eqref{cen:eq:compressed-curve} are Lipschitz, including at
$r=\varepsilon$.  This proves that $\widehat\eta_\varepsilon$ is a Lipschitz
curve; no differentiability of $\phi_\varepsilon$ at its left endpoint is
needed.

Smoothing also requires a uniform cone margin.  The image of
\eqref{cen:eq:compressed-curve} is compact.  Choose
$0<\varepsilon_-<\varepsilon$ and $R_1<R_+<R$, fix an auxiliary Riemannian
metric $q$, and take a compact neighborhood of the image contained in
\[
 \{\,\varepsilon_-\leq r\leq R_+\,\}.
\]
The velocity
$v$ of the increasing-$r$ parametrization has $dr(v)=1$.  Hence there are
constants $0<m_\varepsilon\leq M_\varepsilon<\infty$ such that
\[
 m_\varepsilon\leq|v|_q\leq M_\varepsilon
\]
almost everywhere; the upper bound follows also directly from
\eqref{cen:eq:compressed-strict-budget}.  Let
$a_{\min}=\min_{[\varepsilon,R_1]}a>0$.  Then
\begin{equation}\label{cen:eq:compressed-metric-margin}
 g(v,v)\leq-a(r)^2(1-\sigma_\varepsilon^2)
 \leq-a_{\min}^2(1-\sigma_\varepsilon^2).
\end{equation}
After reversing the parameter, $-v$ is future-directed.  The closure of the
set of normalized vectors $-v/|v|_q$ is therefore a compact subset of the
future timelike cone.  For a fixed smooth future timelike field $T$, compactness
and \eqref{cen:eq:compressed-metric-margin} give a number
$\vartheta_\varepsilon>0$ for which all three inequalities in
\eqref{cen:eq:uniform-timelike-Lipschitz} hold.  Lemma
\ref{cen:lem:Lipschitz-smoothing} supplies a fixed-endpoint
piecewise-$C^1$ future timelike curve from
$(t(\phi_\varepsilon(R_1)),R_1,
\omega(\phi_\varepsilon(R_1)))$ to
$(t_*,\varepsilon,\omega_*)$.

On $[R_1,R_0]$ extend $\phi_\varepsilon$ affinely by imposing
$\phi_\varepsilon(R_0)=R_0$.  Since
$\sigma_\varepsilon\to1$ and $B(\varepsilon)\to0$, one has
\[
 \phi_\varepsilon(R_1)\to R_1,
 \qquad
 \phi_\varepsilon'|_{(R_1,R_0)}\to1.
\]
Indeed, on this interval the affine slope is
\[
 \frac{R_0-\phi_\varepsilon(R_1)}{R_0-R_1},
\]
which tends to one, and the uniform norm of
$\phi_\varepsilon-\operatorname{id}$ tends to zero.  Since $t$ and $\omega$
are $C^1$ on a compact neighborhood of $[R_1,R_0]$, the corresponding
compositions and their one-sided derivatives converge uniformly to those of
the original curve.  Thus the extended maps converge to the identity in the
piecewise-$C^1$ topology.  The original curve has a positive timelike margin
on the compact interval $[R_1,R_0]$.  Consequently, for all sufficiently
small $\varepsilon$, the curve
\[
 r\longmapsto
 \bigl(t(\phi_\varepsilon(r)),r,
       \omega(\phi_\varepsilon(r))\bigr),
 \qquad R_1\leq r\leq R_0,
\]
is timelike.  It joins $\eta(R_0)$ to the initial point of the smoothed inner
curve.  With decreasing $r$ as parameter direction, the outer piece has the
same future orientation as $\eta$, while the smoothing lemma was applied to
$-v$ and hence gives the same orientation on the inner piece.  Their
one-sided tangents at the junction lie in the future timelike cone.  The
concatenation is therefore future timelike and proves
\eqref{cen:eq:center-point-future}.

Fix one such $\varepsilon_0$.  Openness of chronological futures gives
$\delta>0$ such that
\[
 (t,\varepsilon_0,\omega)\in I^+(\eta(R_0))
\]
whenever $|t-t_*|<\delta$ and
$d_{\gamma_\Sigma}(\omega,\omega_*)<\delta$.  The radial curve with constant
$(t,\omega)$ and decreasing $r$ is future timelike.  Appending it proves
\eqref{cen:eq:terminal-cap}.

For the uniform assertion, suppose $z\ll\eta(r_s)$.  Concatenate a timelike
curve from $z$ to
$\eta(r_s)$ with the terminal part of $\eta$.  Applying
\eqref{cen:eq:terminal-bounds} to this concatenation yields
\begin{equation}\label{cen:eq:past-terminal-cap-bounds}
 |t(z)-t_*|\leq A(r(z)),
 \qquad
 d_{\gamma_\Sigma}(\omega(z),\omega_*)\leq B(r(z)).
\end{equation}
Choose $\rho\leq\varepsilon_0$ so that $A(\rho),B(\rho)<\delta$.
Then $r(z)<\rho$ places $z$ in the set on the left-hand side of
\eqref{cen:eq:terminal-cap}, proving \eqref{cen:eq:terminal-uniform-past}.
The choice of $\rho$ is independent of $r_s$.
\end{proof}

\subsection{Boundary localization}

The boundary chart will also use the following elementary Euclidean cone
geometry.
For $a\in(0,1)$ define
\[
\overline C_a^\pm
 =\{X\neq0:\ \pm X^0/|X|\geq a\}\cup\{0\},
 \qquad
 C_a^\pm=\operatorname{Int}\overline C_a^\pm,
 \qquad \kappa(a)=\sqrt{a^{-2}-1}.
\]
Thus $X\in\overline C_a^+$ exactly when
$X^0\geq0$ and $|\mathbf X|\leq\kappa(a)X^0$.

\begin{lemma}[A cone lens in the boundary chart]\label{cen:lem:cone-lens}
Suppose that, in a rectangular coordinate chart, every future causal vector
belongs to $\overline C_{a_0}^+$, where $0<a_0<a_1<1$.  Put
$\kappa_i=\kappa(a_i)$.  For $\tau>0$ set
\[
 y^-=(-\tau,0),
 \qquad
 a_\tau=\frac{\kappa_0}{\kappa_1}\tau,
\]
\[
 x^-=(-\tau-a_\tau,0),
 \qquad x^+=(a_\tau,0),
\]
and
\begin{equation}\label{cen:eq:cone-lens}
 \overline L=
 (x^-+\overline C_{a_1}^+)
 \cap(x^++\overline C_{a_1}^-),
 \qquad L=\operatorname{Int}\overline L.
\end{equation}
If $\tau$ is small enough that $\overline L$ is contained in the chart, then
\begin{equation}\label{cen:eq:fixed-diamond}
 \overline Q\Subset L,
 \qquad
 Q=I^-(0)\cap I^+(y^-),
\end{equation}
where both chronological relations are taken inside the rectangular chart and
$\overline Q$ denotes closure relative to that chart.  For the stated small
choice of $\tau$, this is also the ordinary Euclidean closure.
For $a_0=5/8$ and $a_1=6/7$, the generator vectors of the lower and upper
conical faces lie, respectively, in $\overline C_{6/7}^+$ and
$\overline C_{6/7}^-$.
\end{lemma}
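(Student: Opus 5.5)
The plan is to trap $Q$ inside an explicit Euclidean diamond built from the wide cone $\overline C_{a_0}^\pm$, and then check by elementary inequalities that this diamond sits strictly inside the narrower-cone lens $L$. The gain comes from two facts: $a_0<a_1$ gives $\kappa_0>\kappa_1$, and the offset $a_\tau=\kappa_0\tau/\kappa_1$ is chosen to exactly compensate for the narrowing of the cones.

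\emph{Step 1 (cone containment of $Q$).} Let $z\in Q$. There is a future timelike curve inside the rectangular chart from $z$ to $0$. Its velocity lies almost everywhere in $\overline C_{a_0}^+$ by hypothesis, and the displacement $0-z$ is the integral of this velocity. Since $\overline C_{a_0}^+$ is a closed convex cone, $-z\in\overline C_{a_0}^+$. Likewise $z-y^-\in\overline C_{a_0}^+$. Hence
\[
 Q\subset D:=\bigl(y^-+\overline C_{a_0}^+\bigr)\cap\bigl(-\overline C_{a_0}^+\bigr).
\]
In coordinates, $D$ is the set of points with $-\tau\le z^0\le0$ and $|\mathbf z|\le\kappa_0\min(z^0+\tau,\,-z^0)$. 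It is Euclidean-compact. Because the chart is rectangular, hence convex, and $D$ is closed, the closure of $Q$ relative to the chart coincides with its Euclidean closure (for the small $\tau$ with $\overline L$ in the chart) and lies in $D$.

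\emph{Step 2 (strict inclusion $D\subset L$).} The interior $L$ is the set where
\[
 |\mathbf z|<\kappa_1\bigl(z^0+\tau+a_\tau\bigr),
 \qquad
 |\mathbf z|<\kappa_1\bigl(a_\tau-z^0\bigr),
\]
and $\kappa_1a_\tau=\kappa_0\tau$. Take $z\in D$.

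For the upper face, $|\mathbf z|\le\kappa_0(-z^0)\le\kappa_0\tau\le\kappa_0\tau+\kappa_1(-z^0)$. Equality throughout would force both $z^0=0$ and $z^0=-\tau$, which is impossible. (When $z^0=0$ one already has $\mathbf z=0$ and $0<\kappa_0\tau$.)

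For the lower face, $|\mathbf z|\le\kappa_0(z^0+\tau)\le\kappa_0\tau\le\kappa_0\tau+\kappa_1(z^0+\tau)$. Equality would again require $z^0=0$ and $z^0+\tau=0$ simultaneously.

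Both defining inequalities of $L$ therefore hold strictly on $D$. Since $D$ is compact and the two inequalities are continuous strict inequalities, $D$ is a compact subset of the open set $L$. This gives $\overline Q\subset D\Subset L$.

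\emph{Step 3 (the numerical remark).} The generators of the lower face of $x^-+\overline C_{a_1}^+$ are the vectors with $X^0/|X|=a_1$, and those of the upper face satisfy $-X^0/|X|=a_1$. For $a_1=6/7$ they lie in $\overline C_{6/7}^\pm$ by definition. One also records $\kappa(5/8)=\sqrt{39}/5>\sqrt{13}/6=\kappa(6/7)$, so the hypothesis $a_0<a_1$ is satisfied.

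The only delicate point is Step 1. It requires the chronological relations to be taken inside the chart and the cone bound to hold at every point of the chart, so that integrating the velocity is legitimate. This is exactly why the statement assumes $\overline L$ (and hence $D$) lies in the chart.
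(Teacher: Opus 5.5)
Your proof is correct and follows the paper's argument: you trap $Q$ in the diamond $D$ by integrating velocities in the closed convex cone $\overline C_{a_0}^+$, then use $\kappa_1a_\tau=\kappa_0\tau$ to verify both lens inequalities strictly on $D$. The paper instead obtains an explicit margin $\kappa_0\tau/2$ from $|\mathbf z|\le\kappa_0\min\{u+\tau,-u\}\le\kappa_0\tau/2$, where you use an equality-case argument plus compactness. One small correction: relative and Euclidean closures agree because $D\subset L\subset\overline L\subset$ chart, not because the rectangle is convex, and the hypothesis on $\overline L$ is not what makes the integration in Step~1 legitimate.
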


\begin{proof}
Let $z=(u,\mathbf z)\in Q$.  Integration of the convex cone
$\overline C_{a_0}^+$ along timelike curves from $y^-$ to $z$ and from $z$ to
$0$ gives
\[
 -\tau<u<0,
 \qquad
 |\mathbf z|\leq\kappa_0\min\{u+\tau,-u\}
 \leq\frac{\kappa_0\tau}{2}.
\]
After decreasing $\tau$, the compact Euclidean box defined by the corresponding
non-strict inequalities lies strictly inside the coordinate rectangle.  Thus the closure
of $Q$ cannot meet the boundary of the rectangle, and relative closure agrees
with Euclidean closure.
On the other hand,
\[
 \kappa_1(u-x^{-0})\geq\kappa_1a_\tau=\kappa_0\tau,
 \qquad
 \kappa_1(x^{+0}-u)\geq\kappa_1a_\tau=\kappa_0\tau.
\]
Both cone inequalities defining $L$ therefore hold with a margin at least
$\kappa_0\tau/2$.  The same non-strict bounds hold on $\overline Q$, so
$\overline Q\subset L$.  Once $\overline L$ lies in the coordinate box,
compactness gives \eqref{cen:eq:fixed-diamond}.  The last assertion follows
from the definitions of the two conical faces and their generating rays.
\end{proof}

\subsection{Proof of the local theorem}

\begin{proof}[Proof of Theorem~\ref{cen:thm:warped-center}]
Assume that there are an isometric embedding
$\iota:M_R\to(\widetilde M,\widetilde g)$ and a singularity-reaching timelike curve
with endpoint $p\in\partial^+\iota(M_R)$.  Restrict the embedding and the
terminal part of the curve to $M_{R_0}$.  The restricted domain is connected
and its image is open and proper.  The restricted curve still converges to
$p\notin\iota(M_{R_0})$, so
$p\in\partial^+\iota(M_{R_0})$ and the restriction is again a $C^0$
extension.  Relabel $R_0$ as $R$ and decrease $R$ further so that
Lemma~\ref{cen:lem:small-interior} applies.  The same argument shows that the
terminal curve and its boundary endpoint are unaffected by this further
restriction.

Apply Lemma~\ref{lem:future-boundary-graph} at $p$.  Choose the metric
perturbation so small that throughout the resulting chart every vector in
$C_{5/6}^+$ is future timelike, every vector in $C_{5/6}^-$ is past timelike,
and every future causal vector belongs to $\overline C_{5/8}^+$.  Set
\[
 R_{\varepsilon_0,\varepsilon_1}
 :=(-\varepsilon_0,\varepsilon_0)\times
   (-\varepsilon_1,\varepsilon_1)^{d-1},
\]
and let
\[
 \varphi:\widetilde U\longrightarrow
 R_{\varepsilon_0,\varepsilon_1}
\]
be this chart and write its future boundary as $x^0=h(\mathbf x)$.  For some
$s_*>0$, the negative vertical axis satisfies
\[
 (-s_*,0)\times\{0\}\subset
 \varphi(\iota(M_R)\cap\widetilde U).
\]
Define
\begin{equation}\label{cen:eq:axis-curve}
 \eta_s=\iota^{-1}\bigl(\varphi^{-1}(s,0)\bigr),
 \qquad -s_*<s<0.
\end{equation}
This is a $C^1$ future-directed timelike curve.  It is future-inextendible in
$M_R$, because its image converges to $p$.  The Cauchy property of the
$r$-levels and the chosen time orientation imply
\begin{equation}\label{cen:eq:axis-radius}
 r(\eta_s)\longrightarrow0\qquad(s\nearrow0).
\end{equation}

\smallskip
\noindent\emph{A fixed cone lens.}
Choose $\tau>0$ so small that $(-\tau/2,0)$ lies on a terminal portion of the
axis to which Lemma~\ref{cen:lem:trace-advancement} applies, and so that
the lens of Lemma~\ref{cen:lem:cone-lens}, with
$a_0=5/8$ and $a_1=6/7$, is compactly contained in the chart.  We identify
axis points with their $x^0$-coordinates and write
\[
 s^-=-\tau,\qquad s^+=-\tfrac12\tau,
 \qquad \eta_s=\iota^{-1}(\varphi^{-1}(s,0)).
\]
Lemma~\ref{cen:lem:cone-lens} gives
\begin{equation}\label{cen:eq:Q-in-L}
 \overline Q\Subset L,
 \qquad
 Q=I^-\bigl((0,0);R_{\varepsilon_0,\varepsilon_1}\bigr)
       \cap I^+\bigl((s^-,0);R_{\varepsilon_0,\varepsilon_1}\bigr).
\end{equation}

For each $s\in(s^-,0)$ one has
\begin{align}
 &\varphi^{-1}\left(
 I^-((s,0);R_{\varepsilon_0,\varepsilon_1})
 \cap I^+((s^-,0);R_{\varepsilon_0,\varepsilon_1})\right)
 \notag\\
 &\qquad=\iota\left(I^-(\eta_s;M_R)
              \cap I^+(\eta_{s^-};M_R)\right).
 \label{cen:eq:diamond-identity}
\end{align}
For the inclusion from left to right, take $z$ in the coordinate diamond.
Choose a past-directed timelike curve $\alpha$ in the coordinate rectangle
from $(s,0)$ to $z$.  Its pullback $\varphi^{-1}\circ\alpha$ begins below the
achronal boundary graph, so Lemma~\ref{lem:future-boundary-graph} keeps it in
$\iota(M_R)$; in particular, $z$ lies below the graph.  Next choose a
past-directed timelike curve $\beta$ in the coordinate rectangle from $z$ to
$(s^-,0)$.  The same lemma
keeps $\varphi^{-1}\circ\beta$ in $\iota(M_R)$.  Thus
$\varphi^{-1}(z)$ lies in the intrinsic diamond on the right-hand side of
\eqref{cen:eq:diamond-identity}.

Conversely, let $z$ belong to the intrinsic diamond and choose a timelike
curve $c$ from $\eta_{s^-}$ to $\eta_s$ through $z$.
Lemma~\ref{cen:lem:small-interior} gives a fixed-endpoint timelike homotopy
$H:[0,1]^2\to M_R$ from the axis segment to $c$.  Set
\[
 \mathcal A=\{a\in[0,1]:\iota(H(a,[0,1]))\subset\widetilde U\}.
\]
The set $\mathcal A$ contains zero and is open: for $a\in\mathcal A$,
compactness of $H(\{a\}\times[0,1])$ and continuity of $H$ give a neighborhood
of $a$ for which the images of the corresponding curves remain in
$\widetilde U$.  Suppose now that
$a_j\in\mathcal A$ and $a_j\to a$.  Every coordinate image
$\varphi\circ\iota\circ H(a_j,\cdot)$ is contained, including its endpoints,
in the closure of
\[
 Q_s=I^+((s^-,0))\cap I^-((s,0))\subset Q.
\]
The homotopy is uniformly continuous on the compact square.  Hence
$\iota\circ H(a_j,\cdot)$ converges uniformly to
$\iota\circ H(a,\cdot)$.  The set
$\varphi^{-1}(\overline Q)$ is compact, and in particular closed in
$\widetilde M$.  Passing to the limit pointwise therefore gives
\[
 \iota(H(a,[0,1]))
 \subset\varphi^{-1}(\overline Q)
 \subset\varphi^{-1}(L)\Subset\widetilde U.
\]
It follows that $a\in\mathcal A$ and hence that $\mathcal A$ is closed.  Thus
$\mathcal A=[0,1]$, so $c$ lies in the chart.  This proves
\eqref{cen:eq:diamond-identity}.

\smallskip
\noindent\emph{A compact separator.}
Define
\begin{equation}\label{cen:eq:separator}
 \mathcal E=\left[\left(\bigcup_{s^+<s<0}I^-(\eta_s)\right)
       \cap I^+(\eta_{s^-})\right]\setminus I^+(\eta_{s^+}),
 \qquad \mathcal K=\overline{\mathcal E}^{\,M_R}.
\end{equation}
Compactness follows from explicit radial bounds.  If $z\in\mathcal E$, then
$\eta_{s^-}\ll z$, and monotonicity of $r$ gives
\[
 r(z)<r(\eta_{s^-}).
\]
Integration of the causal estimate along a curve from $\eta_{s^-}$ to $z$ gives
\begin{equation}\label{cen:eq:separator-t-bound}
 |t(z)-t(\eta_{s^-})|
 \leq A(r(\eta_{s^-}))-A(r(z))
 \leq A(r(\eta_{s^-})).
\end{equation}
The last assertion of Lemma~\ref{cen:lem:trace-advancement}, applied to the
axis with initial point $\eta_{s^+}$, gives $\rho>0$, independent of
$s\in(s^+,0)$, such that
\[
 z\ll\eta_s,\quad r(z)<\rho
 \quad\Longrightarrow\quad z\in I^+(\eta_{s^+}).
\]
Therefore every $z\in\mathcal E$ satisfies $r(z)\geq\rho$.  Together with
\eqref{cen:eq:separator-t-bound} and compactness of $\Sigma$, this places
$\mathcal E$ in
the compact cylinder
\begin{equation}\label{cen:eq:separator-cylinder}
 [\rho,r(\eta_{s^-})]\times
 [t(\eta_{s^-})-A(r(\eta_{s^-})),
  t(\eta_{s^-})+A(r(\eta_{s^-}))]\times\Sigma.
\end{equation}
Consequently $\mathcal K$ is compact.  Equation
\eqref{cen:eq:diamond-identity} gives
$\iota(\mathcal E)\subset\varphi^{-1}(Q)$.  Since
$\varphi^{-1}(\overline Q)$ is compact and hence closed in $\widetilde M$,
continuity gives
$\iota(\mathcal K)\subset\varphi^{-1}(\overline Q)$.  Equivalently,
\begin{equation}\label{cen:eq:K-in-lens}
 \varphi(\iota(\mathcal K))\subset\overline Q\subset L.
\end{equation}

The set $\mathcal K$ separates the late axis from the chronological past of
$\eta_{s^-}$.  Indeed, let a past-directed timelike curve start at $\eta_s$,
where $s>s^+$, and end at a point of
$I^-(\eta_{s^-})$.  It starts in $I^+(\eta_{s^+})$ and ends outside this set.
At its first exit let $z$ denote the exit point.  Since $M_R$ is globally
hyperbolic, it is causally simple, and therefore
$z\in J^+(\eta_{s^+})$.  The relations
\[
 \eta_{s^-}\ll\eta_{s^+}\leq z\ll\eta_s
\]
and the push-up lemma imply $\eta_{s^-}\ll z$.  By construction
$z\notin I^+(\eta_{s^+})$.  Hence $z\in\mathcal E\subset\mathcal K$.

\smallskip
\noindent\emph{Translated terminal curves.}
For $\xi\in\mathbb R$, let
\[
 T_\xi(t,r,\omega)=(t+\xi,r,\omega).
\]
This is a time-orientation-preserving isometry of \eqref{cen:eq:general-metric}.
Compactness of $\mathcal K$ and \eqref{cen:eq:K-in-lens} imply that there is
$\lambda>0$ such that
\begin{equation}\label{cen:eq:translated-K}
 \varphi(\iota(T_\xi\mathcal K))\subset L\qquad(|\xi|\leq\lambda).
\end{equation}
The entire axis segment from $x^-$ to $(s^-,0)$ lies on the negative axis and
hence, by Lemma~\ref{lem:future-boundary-graph}, in
$\varphi(\iota(M_R)\cap\widetilde U)$.  Let $q^-\in M_R$ be defined by
$\varphi(\iota(q^-))=x^-$.  Since that axis segment is future timelike,
$q^-\in I^-(\eta_{s^-})$.  Shrinking $\lambda$ gives
\begin{equation}\label{cen:eq:translated-lower-vertex}
 T_{-\xi}q^-\in I^-(\eta_{s^-})
 \qquad(|\xi|\leq\lambda).
\end{equation}
Choose $s_0\in(s^+,0)$ so close to zero that $(s_0,0)\in L$.  Since
$F(x^0,\mathbf x)=x^0-h(\mathbf x)$ is negative there, continuity allows one
to shrink $\lambda$ once more so that
\begin{equation}\label{cen:eq:tube-initialization}
 \varphi(\iota(T_\xi\eta_{s_0}))\in L,
 \qquad
 F\bigl(\varphi(\iota(T_\xi\eta_{s_0}))\bigr)<0
 \qquad(|\xi|\leq\lambda).
\end{equation}

For this choice of $\lambda$,
\begin{equation}\label{cen:eq:translated-tube}
 \varphi(\iota(T_\xi\eta_s))\in L
 \qquad(s_0\leq s<0,\ |\xi|\leq\lambda).
\end{equation}
To prove this, fix $\xi$.  As long as the translated curve remains in the
chart, it cannot meet $F=0$, since that graph consists of boundary points whereas
$T_\xi\eta_s\in M_R$.  It therefore remains below the graph.  Suppose
$\iota(T_\xi\eta_s)$ leaves $\varphi^{-1}(L)$.  Since
$\overline L\Subset R_{\varepsilon_0,\varepsilon_1}$, the curve remains in
the chart up to its first exit time.  Continuity and the initial condition
\eqref{cen:eq:tube-initialization} show that the exit point has coordinate
$z\in\partial\overline L$.  Here we used
$\overline{\operatorname{Int}\overline L}=\overline L$, which holds because
$\overline L$ is a closed convex set with nonempty interior.  The exit point
is still the image of a point of $M_R$ and
therefore satisfies $F(z)<0$.

There is a past-directed timelike path in $\partial\overline L$ from $z$ to
$x^-$.  If $z$ lies on the lower conical
face, take the generator from $z$ to $x^-$.  If $z$ lies on the upper face but
not the lower face, follow its upper generator toward the past until it reaches
the seam, and then follow the lower generator to $x^-$.  If
$T=x^{+0}-x^{-0}$, a unit upper generator has the form
\[
 x^++\rho\bigl(-a_1,\sqrt{1-a_1^2}\,e\bigr),
 \qquad |e|=1,
\]
and reaches the seam at $\rho=T/(2a_1)$.  At that point the lower cone
inequality is an equality.  Thus both pieces remain on
$\partial\overline L$, and every nonzero tangent to these generator segments
has temporal Euclidean ratio $a_1=6/7>5/6$.  They are therefore past timelike
for $\widetilde g$.  The
lower vertex cannot be a first future exit because $x^0$ strictly increases
on a future timelike curve.  The upper vertex has $F(x^+)>0$ and cannot be
reached while the curve remains below the graph.  These observations cover
all possible first-exit points.

The boundary path remains in the chart and starts below the graph.  The last
assertion of Lemma~\ref{lem:future-boundary-graph} keeps it in the subgraph,
so it pulls back to $M_R$.  Apply $T_{-\xi}$ to the pullback.  We obtain a
past-directed timelike curve from some $\eta_s$, $s>s^+$, to
$T_{-\xi}q^-\in I^-(\eta_{s^-})$.  The separating property of $\mathcal K$
forces this curve to meet $\mathcal K$.  Before applying $T_{-\xi}$, the
pullback of the boundary path must therefore meet $T_\xi\mathcal K$.
Equivalently, the coordinate boundary path meets
$\varphi(\iota(T_\xi\mathcal K))$.  This contradicts
\eqref{cen:eq:translated-K}, because that path lies in
$\partial\overline L$ and hence avoids $L$.  This proves
\eqref{cen:eq:translated-tube}.

\smallskip
\noindent\emph{The contradiction on radial level sets.}
Let $\pi_{\rm sp}(x^0,\mathbf x)=\mathbf x$ be the spatial coordinate
projection.  Choose an open convex rectangular set $B_0$ in the spatial
coordinate domain such that
\[
 \pi_{\rm sp}(\overline L)\subset B_0,
 \qquad \overline{B_0}\Subset(-\varepsilon_1,\varepsilon_1)^{d-1}.
\]
Choose
\begin{equation}\label{cen:eq:flow-bottom-height}
 -\varepsilon_0<\ell<
 \min\left\{\min_{\mathbf x\in\overline{B_0}}h(\mathbf x),
             \min_{z\in\overline L}z^0\right\}.
\end{equation}
The compact bottom set
\[
 \mathcal B=\{(\ell,\mathbf x):\mathbf x\in\overline{B_0}\}
\]
lies in the subgraph.  Its pullback is compact in $M_R$, and hence
\begin{equation}\label{cen:eq:bottom-radial-bound}
 r\geq r_b>0\qquad\hbox{on }\iota^{-1}(\varphi^{-1}(\mathcal B)).
\end{equation}

For fixed $\mathbf x\in B_0$, the vertical curve
\[
 u\longmapsto\varphi^{-1}(u,\mathbf x),
 \qquad \ell\leq u<h(\mathbf x),
\]
is future timelike, lies in $\iota(M_R)$, and is future-inextendible there,
because it converges to the boundary point
$\varphi^{-1}(h(\mathbf x),\mathbf x)$.  By the
first part of Lemma~\ref{cen:lem:small-interior}, its $r$-coordinate decreases
from its value at the bottom to zero.  In view of
\eqref{cen:eq:bottom-radial-bound}, it therefore meets every
$S_j=\{r=r_j\}$ with $0<r_j<r_b$ exactly once.  The pullback of the vertical
vector is timelike and cannot be tangent to the spacelike hypersurface $S_j$;
equivalently, the derivative of $r$ along the vertical curve is nonzero at the
intersection.  Since a $C^1$ embedding between manifolds of the same dimension
is a local $C^1$ diffeomorphism, the function
\[
 (u,\mathbf x)\longmapsto
 r\bigl(\iota^{-1}(\varphi^{-1}(u,\mathbf x))\bigr)
\]
is $C^1$ on the subgraph.  The implicit-function theorem produces a local
$C^1$ graph over the spatial variables.  The uniqueness of the vertical
intersection makes these local graphs agree on overlaps and extends them over
all of $B_0$.  Thus
\begin{equation}\label{cen:eq:slice-graph}
 \varphi(\iota(S_j))\cap
 \{(u,\mathbf x):\mathbf x\in B_0,\ \ell<u<h(\mathbf x)\}
 =\{(H_j(\mathbf x),\mathbf x):\mathbf x\in B_0\}
\end{equation}
for a $C^1$ function $H_j:B_0\to\mathbb R$.

The functions $H_j$ satisfy a Lipschitz bound independent of $j$.  On the
compact subgraph slab over $\overline{B_0}$ there are constants $c,C>0$ such
that
\begin{equation}\label{cen:eq:chart-coefficient-bounds}
 \widetilde g_{00}\leq-c<0,
 \qquad |\widetilde g_{\mu\nu}|\leq C.
\end{equation}
For each spatial coordinate $i$, the vector
\[
 X_i=(\partial_iH_j)\partial_0+\partial_i
\]
is tangent to the spacelike hypersurface $S_j$.  Hence
\[
 0<\widetilde g(X_i,X_i)
 \leq-c|\partial_iH_j|^2+2C|\partial_iH_j|+C.
\]
Solving this quadratic inequality gives
\begin{equation}\label{cen:eq:uniform-slope-bound}
 |\partial_iH_j|\leq
 C_{\rm sl}:=\frac{C+\sqrt{C^2+cC}}{c}
\end{equation}
for every $i$ and $j$.  If $h^{(j)}$ denotes the metric induced on the graph,
then \eqref{cen:eq:chart-coefficient-bounds} and
\eqref{cen:eq:uniform-slope-bound} imply, with a constant independent of $j$,
\begin{equation}\label{cen:eq:uniform-induced-upper-bound}
 h^{(j)}_{ab}v^av^b\leq C_{\rm ind}|v|^2,
 \qquad \mathbf x\in B_0.
\end{equation}
Indeed, one may take
\[
 C_{\rm ind}=(d-1)C(C_{\rm sl}^2+2C_{\rm sl}+1).
\]
Joining two spatial points by the Euclidean segment in the convex set $B_0$
and lifting this segment to the graph gives
\begin{equation}\label{cen:eq:chart-slice-distance-bound}
 d_{S_j}(p,q)
 \leq C_0:=\sqrt{C_{\rm ind}}\,
       \operatorname{diam}_{\rm Euc}(B_0)
\end{equation}
whenever the coordinate images of $p,q\in S_j$ lie in $L$ and in the
below-graph slab displayed in \eqref{cen:eq:slice-graph}.

Choose $r_j\searrow0$ and $s_j\nearrow0$ so that
$r(\eta_{s_j})=r_j$.  For all large $j$, define
\[
 p_j=T_{-\lambda}\eta_{s_j},
 \qquad q_j=T_\lambda\eta_{s_j}.
\]
Equation \eqref{cen:eq:translated-tube} places
$\varphi(\iota(p_j))$ and $\varphi(\iota(q_j))$ in $L$.  They are below the
boundary graph and above the bottom level in
\eqref{cen:eq:flow-bottom-height}; hence they lie on the graph
\eqref{cen:eq:slice-graph}.  Equation \eqref{cen:eq:chart-slice-distance-bound} gives
\begin{equation}\label{cen:eq:distance-upper-center}
 d_{S_j}(p_j,q_j)\leq C_0.
\end{equation}

By contrast, the induced metric
\eqref{cen:eq:slice-metric} gives, for every piecewise-$C^1$ curve
$\chi=(t,\omega)$ in $S_j$ from $p_j$ to $q_j$,
\[
 L_{h_{r_j}}(\chi)
 =\int\sqrt{b(r_j)^2\dot t^{\,2}
       +c(r_j)^2|\dot\omega|_{\gamma_\Sigma}^2}\,ds
 \geq b(r_j)\int|\dot t|\,ds
 \geq2\lambda b(r_j).
\]
The curve with constant fiber coordinate and monotone $t$ realizes equality.
Therefore
\begin{equation}\label{cen:eq:distance-lower-center}
 d_{S_j}(p_j,q_j)=2\lambda b(r_j)
 \longrightarrow+\infty,
\end{equation}
contradicting \eqref{cen:eq:distance-upper-center}.  This proves the future
assertion.  Reversing the time orientation proves the time-reversed statement.

\end{proof}
\section{Canonical one-horizon Birmingham-Kottler spacetimes}
\label{sec:BK-global}

We construct the canonical one-horizon warped spacetime on a single global
Kruskal domain and classify the finite proper-time ends of its timelike
geodesics.  The terminal part of each curve is first placed in one of the four
Kruskal quadrants.  The static coordinate $t$ is introduced only afterward,
once all possible horizon crossings have occurred.

Throughout the section,
\begin{equation}\label{eq:BK-global-lapse}
 f(r)=k-\frac{2m}{r^{n-2}}+\alpha^2r^2,
 \qquad
 \alpha^2=-\frac{2\Lambda}{n(n-1)},
\end{equation}
where
\begin{equation}\label{eq:BK-global-parameters}
 n\ge3,\qquad m>0,\qquad \Lambda\le0,\qquad
 k\in\{-1,0,1\},
 \qquad k=1\quad\hbox{if }\Lambda=0.
\end{equation}
Until the field equation is considered, the fiber
$(\Sigma^{n-1},\gamma_\Sigma)$ is assumed merely closed and connected.  Its
Einstein condition is imposed in Proposition~\ref{prop:BK-Einstein-equation}.

\subsection{A global Kruskal domain}

\begin{proposition}[Global Kruskal coordinates]\label{prop:global-Kruskal}
Under \eqref{eq:BK-global-parameters}, the function $f$ has precisely one
positive zero $r_h$, and this zero is simple.  Set
\begin{equation}\label{eq:BK-surface-gravity}
 \kappa:=\frac12f'(r_h)>0.
\end{equation}
There is a smooth strictly decreasing diffeomorphism
\begin{equation}\label{eq:global-G-range}
 G:(0,\infty)\longrightarrow(G_\infty,G_0),
 \qquad G(r_h)=0,
\end{equation}
where $G_0\in(0,\infty)$ and
\begin{equation}\label{eq:global-G-endpoints}
 G_\infty=
 \begin{cases}
  -\infty,&\Lambda=0,\\
  -e^{2\kappa r_*^\infty}\in(-\infty,0),&\Lambda<0
 \end{cases}
\end{equation}
for a finite constant $r_*^\infty$.  If $r=G^{-1}$ and
\begin{equation}\label{eq:global-Kruskal-domain}
 \mathcal D:=\{(U,V)\in\mathbb R^2:G_\infty<UV<G_0\},
 \qquad
 \Mcan:=\mathcal D\times\Sigma,
\end{equation}
then
\begin{equation}\label{eq:global-Kruskal-metric}
 g=-F(UV)\,dU\,dV+r(UV)^2\gamma_\Sigma,
 \qquad
 F(x):=-\frac{f(r(x))}{\kappa^2x}\quad(x\ne0),
\end{equation}
extends to a smooth Lorentzian metric on all of
$\Mcan$.  Here and below
$dU\,dV=\frac12(dU\otimes dV+dV\otimes dU)$.  More precisely, if
\begin{equation}\label{eq:tortoise-regular-part}
 r_*(r)=\frac{1}{2\kappa}\log|r-r_h|+H(r)
\end{equation}
near $r_h$, with $H$ smooth, then
\begin{equation}\label{eq:F-horizon-value}
 F(0)=\frac{2}{\kappa}e^{-2\kappa H(r_h)}>0.
\end{equation}

The vector fields
\begin{equation}\label{eq:global-time-and-Killing}
 T:=\partial_U+\partial_V,
 \qquad
 K:=\kappa(-U\partial_U+V\partial_V)
\end{equation}
are respectively timelike and Killing.  We use $T$ to time-orient the
spacetime.  Along every future-directed timelike curve both $U$ and $V$ are
strictly increasing.  Along every past-directed timelike curve both are
strictly decreasing.
\end{proposition}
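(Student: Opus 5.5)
We begin with the lapse. Since $m>0$, $n\ge3$ and $\alpha^2\ge0$,
\[
 f'(r)=\frac{2(n-2)m}{r^{n-1}}+2\alpha^2r>0,
\]
so $f$ is strictly increasing on $(0,\infty)$. Moreover $f(r)\to-\infty$ as $r\searrow0$. As $r\to\infty$, $f\to+\infty$ when $\Lambda<0$, and $f\to k=1$ when $\Lambda=0$. Hence there is exactly one zero $r_h$, it is simple, and $\kappa=\tfrac12f'(r_h)>0$.

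We then define the tortoise coordinate $r_*(r)=\int^r ds/f(s)$ on $(0,r_h)$ and on $(r_h,\infty)$ separately. Writing $1/f(s)=\frac{1}{2\kappa(s-r_h)}+\text{smooth}$ near $r_h$ gives the expansion \eqref{eq:tortoise-regular-part}, where $H$ is smooth across $r_h$ once the two additive constants are matched. We set
\[
 G(r)=-\operatorname{sgn}(r-r_h)\,e^{2\kappa r_*(r)}=-(r-r_h)e^{2\kappa H(r)}.
\]
This second expression is smooth through $r_h$, and it gives $G'(r_h)=-e^{2\kappa H(r_h)}<0$. Away from $r_h$ one has $G'=2\kappa G/f$, which is negative on both sides because $G$ and $f$ have opposite signs. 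So $G$ is a strictly decreasing diffeomorphism onto its image.

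The endpoints are computed as follows. As $r\to0$, $r_*$ tends to a finite limit because $1/f\sim-r^{n-2}/(2m)$ is integrable, so $G_0\in(0,\infty)$. As $r\to\infty$, $1/f\sim 1/(\alpha^2r^2)$ is integrable when $\Lambda<0$, so $r_*\to r_*^\infty$ is finite. When $\Lambda=0$, $1/f\to1$, so $r_*\to\infty$ and $G_\infty=-\infty$.

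For the metric, we put $UV=G(r)$ and use $F(x)=-f(r(x))/(\kappa^2x)$. In each quadrant, with the usual choices $U=\mp e^{-\kappa(t-r_*)}$ and $V=\pm e^{\kappa(t+r_*)}$ (and the analogous choices in the interior), one has
\[
 dU\,dV=\kappa^2 UV\,(-dt^2+dr_*^2)=\kappa^2UV f^{-2}(-f^2dt^2+dr^2),
\]
which reproduces the block metric. The key regularity point is that $F$ extends smoothly to $x=0$. This holds because $F(G(r))=f(r)/\bigl(\kappa^2(r-r_h)e^{2\kappa H(r)}\bigr)$, and $f(r)/(r-r_h)$ is smooth and positive with value $2\kappa$ at $r_h$. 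This gives \eqref{eq:F-horizon-value}. Positivity of $F$ everywhere follows from the sign analysis above, so $g$ is Lorentzian with null coordinates $U,V$.

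Finally, we verify the statements about $T$ and $K$.

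1. $K$ is Killing, since its flow $(U,V)\mapsto(e^{-\kappa s}U,e^{\kappa s}V)$ preserves $UV$ and $dU\,dV$.
2. $g(T,T)=-F<0$, so $T$ is timelike.
3. For a causal vector $X$, $g(X,X)=-F\,X^UX^V+r^2|\cdot|^2\le0$ forces $X^UX^V\ge0$, with strict inequality if $X$ is timelike.
4. Since $g(X,T)=-\tfrac F2(X^U+X^V)<0$ for future $X$, both components are positive for future timelike $X$.

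Along a piecewise-$C^1$ timelike curve, $U$ and $V$ are therefore strictly monotone, increasing in the future direction and decreasing in the past direction. We expect the main obstacle to be the smooth extension of $F$ across the horizon, in particular the careful matching of the constants in $H$ on the two sides of $r_h$. The remaining steps are routine.
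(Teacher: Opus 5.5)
Your proposal is correct and follows essentially the same route as the paper: a common regular part $H$ for the tortoise coordinate on both sides of $r_h$, the formula $G(r)=-(r-r_h)e^{2\kappa H(r)}$ with $G'=2\kappa G/f<0$ away from $r_h$, and integrability of $1/f$ at the two ends to obtain $G_0$ and $G_\infty$. It also agrees with the paper in extending $F$ smoothly through the smooth positive quotient $f(r)/(r-r_h)$, and in using the Killing flow and the cone inequality $F X^UX^V\ge r^2|X_\Sigma|_{\gamma_\Sigma}^2$ for the assertions about $K$, $T$, and the monotonicity of $U,V$. The only item the paper records that you omit is that $\mathcal D$ is star-shaped about the origin (because $G_\infty<0<G_0$), so $\Mcan$ is connected, which is needed later for it to qualify as a spacetime in Definition~\ref{def:C0-spacetime}.
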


\begin{proof}
Differentiating \eqref{eq:BK-global-lapse} gives
\begin{equation}\label{eq:f-prime-positive}
 f'(r)=2m(n-2)r^{-(n-1)}+2\alpha^2r>0
 \qquad(r>0).
\end{equation}
Moreover, $f(r)\to-\infty$ as $r\searrow0$.  At the other end,
$f(r)\to1$ if $\Lambda=0$ (where $k=1$), whereas
$f(r)\to+\infty$ if $\Lambda<0$.  The intermediate value theorem and
\eqref{eq:f-prime-positive} therefore give a unique positive zero $r_h$ and
$f'(r_h)>0$.

We normalize the tortoise coordinate as follows.  On a small interval about
$r_h$, the function
\begin{equation}\label{eq:regularized-tortoise-integrand}
 \chi(r):=\frac1{f(r)}-\frac{1}{2\kappa(r-r_h)}
\end{equation}
has a smooth continuation through $r_h$.  Indeed,
$f(r)=2\kappa(r-r_h)+O((r-r_h)^2)$.  Choose
\begin{equation}\label{eq:H-definition}
 H(r)=H(r_h)+\int_{r_h}^{r}\chi(s)\,ds
\end{equation}
near the root, and define $r_*$ on each component of
$(0,\infty)\setminus\{r_h\}$ by \eqref{eq:tortoise-regular-part}, continuing
it by $r_*'=f^{-1}$.  The same constant $H(r_h)$ is used on both sides.
Consequently the regular parts match to every order at $r_h$.

Define
\begin{equation}\label{eq:global-G}
 G(r):=-\operatorname{sgn}(r-r_h)e^{2\kappa r_*(r)}
 \qquad(r\ne r_h),
\end{equation}
and put $G(r_h)=0$.  In the neighborhood where
\eqref{eq:tortoise-regular-part} holds, this is the nonsingular formula
\begin{equation}\label{eq:G-smooth-formula}
 G(r)=-(r-r_h)e^{2\kappa H(r)}.
\end{equation}
It follows that $G$ is smooth at $r_h$ and
$G'(r_h)=-e^{2\kappa H(r_h)}<0$.  Away from the root,
\begin{equation}\label{eq:G-derivative}
 G'(r)=\frac{2\kappa G(r)}{f(r)}<0,
\end{equation}
because $G$ and $f$ have opposite signs.  Hence $G'$ never vanishes.

At the singular end,
\begin{equation}\label{eq:tortoise-center-asymptotic}
 \frac1{f(r)}=-\frac{1}{2m}r^{n-2}
    \bigl(1+O(r^{n-2})\bigr),
\end{equation}
so $r_*$ has a finite limit $r_*^0$ and
$G(r)\to G_0:=e^{2\kappa r_*^0}>0$.  If $\Lambda<0$, then
$f(r)=\alpha^2r^2(1+O(r^{-2}))$ as $r\to\infty$; hence $r_*$ has a finite
limit $r_*^\infty$ and $G$ has the finite negative limit in
\eqref{eq:global-G-endpoints}.  If $\Lambda=0$, then
$f^{-1}(r)=1+O(r^{-(n-2)})$ and $r_*(r)\to+\infty$ (for $n=3$ the error
integrates to $O(\log r)$, which does not affect the conclusion).  Thus
$G(r)\to-\infty$.  Together with \eqref{eq:G-derivative}, these endpoint
limits prove \eqref{eq:global-G-range} and show that the inverse $r=r(x)$ is
smooth.

Since $G_\infty<0<G_0$, the domain $\mathcal D$ is star-shaped with respect
to the origin.  Indeed, if $(U,V)\in\mathcal D$ and $0\leq s\leq1$, then
$(sU)(sV)=s^2UV$ lies between $UV$ and $0$, and hence remains in
$(G_\infty,G_0)$.  Thus $\mathcal D$ is connected.  Since $\Sigma$ is
connected as well, $\Mcan=\mathcal D\times\Sigma$ is connected.

To verify smoothness of the metric at $x=0$, put
\begin{equation}\label{eq:f-factor-at-horizon}
 q(r):=
 \begin{cases}
  f(r)/(r-r_h),&r\ne r_h,\\
  2\kappa,&r=r_h.
 \end{cases}
\end{equation}
The function $q$ is smooth and positive.  Equations
\eqref{eq:G-smooth-formula} and \eqref{eq:f-factor-at-horizon} give
\begin{equation}\label{eq:F-smooth-formula}
 -\frac{f(r)}{\kappa^2G(r)}
   =\frac{q(r)}{\kappa^2}e^{-2\kappa H(r)}.
\end{equation}
The right-hand side is smooth and positive, and at $r=r_h$ it equals
\eqref{eq:F-horizon-value}.  Since $r$ is smooth as a function of $G$, this
proves that $F$ is smooth and positive on $(G_\infty,G_0)$.

To identify \eqref{eq:global-Kruskal-metric} with the block metric away from
the horizon, put
\begin{equation}\label{eq:null-block-coordinates}
 u=t-r_*,\qquad v=t+r_*.
\end{equation}
In any one of the four quadrants choose fixed signs
$\sigma_U,\sigma_V\in\{\pm1\}$ and set
\begin{equation}\label{eq:UV-from-uv}
 U=\sigma_Ue^{-\kappa u},\qquad
 V=\sigma_Ve^{\kappa v},
 \qquad \sigma_U\sigma_V=\operatorname{sgn}G(r).
\end{equation}
Then $UV=G(r)$,
$dU=-\kappa U\,du$, and $dV=\kappa V\,dv$.  Since
\begin{equation}\label{eq:block-null-calculation}
 -f\,dt^2+f^{-1}\,dr^2=-f\,du\,dv
   =-F(UV)\,dU\,dV,
\end{equation}
the metric agrees with the Birmingham-Kottler block metric wherever
$f\ne0$.  The two quadrants with $UV<0$ give the two exterior blocks
$r>r_h$; the two quadrants with $UV>0$ give the two dynamic blocks
$0<r<r_h$.  The axes $U=0$ and $V=0$ are the entire horizon branches,
and their intersection is the bifurcation fiber $\{U=V=0\}\times\Sigma$.

The flow
\begin{equation}\label{eq:global-Killing-flow}
 \Phi_s(U,V,\omega)=(e^{-\kappa s}U,e^{\kappa s}V,\omega)
\end{equation}
is defined for every $s\in\mathbb R$, preserves $UV$, $dU\,dV$, and the
fiber metric, and has infinitesimal generator $K$.  Thus $K$ is a complete
Killing field.  Also
$g(T,T)=-F<0$, so $T$ is timelike.

Finally, let
$X=\dot U\partial_U+\dot V\partial_V+X_\Sigma$ be timelike.  From
\eqref{eq:global-Kruskal-metric},
\begin{equation}\label{eq:UV-timelike-inequality}
 F\dot U\dot V>r^2|X_\Sigma|_{\gamma_\Sigma}^2\ge0.
\end{equation}
Thus $\dot U$ and $\dot V$ have the same nonzero sign.  Moreover,
$g(T,X)=-\frac F2(\dot U+\dot V)$.  The inequality $g(T,X)<0$ defining the
future cone is therefore equivalent to $\dot U+\dot V>0$, and hence to
$\dot U>0$ and $\dot V>0$.  The past-directed statement follows by changing
both signs.
\end{proof}

\begin{lemma}[Terminal quadrant]\label{lem:bk:terminal-quadrant}
Let $c:[0,b)\to\Mcan$ be timelike, where
$0<b\le\infty$.  There is $\tau_0<b$ such that the image of
$c|_{[\tau_0,b)}$ lies in one open Kruskal quadrant.  Consequently its tail
lies in exactly one nonhorizon block, either an exterior block $r>r_h$ or a
dynamic block $0<r<r_h$.
\end{lemma}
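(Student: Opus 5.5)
The plan is to use the strict monotonicity of $U$ and $V$ along timelike curves from Proposition~\ref{prop:global-Kruskal}. By reversing the parametrization if needed, we may assume $c$ is future-directed; the past-directed case is identical with reversed signs. Then $U\circ c$ and $V\circ c$ are strictly increasing on $[0,b)$.

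A strictly increasing real function changes sign at most once. Hence there are at most two parameter values at which $c$ meets the horizon set $\{U=0\}\cup\{V=0\}$: at most one where $U=0$ and at most one where $V=0$. Choose $\tau_0<b$ larger than both of these values. On $[\tau_0,b)$ the coordinates $U$ and $V$ never vanish, and by continuity their signs are constant. The tail therefore lies in one open quadrant $\{\sigma_U U>0,\ \sigma_V V>0\}$.

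The second assertion follows from the identification in Proposition~\ref{prop:global-Kruskal}. Quadrants with $UV<0$ are exterior blocks with $r>r_h$, and quadrants with $UV>0$ are dynamic blocks with $0<r<r_h$. Uniqueness is immediate because the quadrants are disjoint.

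The only point needing care is the regularity of $c$. We take timelike curves to be piecewise $C^1$ with timelike one-sided tangents, so monotonicity follows by integrating $\dot U>0$ and $\dot V>0$ on each $C^1$ piece. Since the lemma only concerns tails, no further difficulty is expected. This step is the main, though mild, obstacle.
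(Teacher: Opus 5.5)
Your proposal is correct and follows the paper's proof essentially verbatim. The common steps are: strict monotonicity of $U\circ c$ and $V\circ c$ from Proposition~\ref{prop:global-Kruskal}, at most one zero for each, a tail beyond these zeros on which both signs are fixed, and identification of the block through $UV=G(r)$.

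One point of wording needs fixing. ``Reversing the parametrization'' would move the end at $b$ to the start, so it is not a valid reduction. What you actually need, and what your remark about reversed signs already supplies, is that a timelike curve has a single time orientation on its connected parameter interval. Then $U\circ c$ and $V\circ c$ are both strictly increasing or both strictly decreasing, and the argument is the same in either case.
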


\begin{proof}
A timelike curve has one time orientation on every connected parameter
interval.  By Proposition~\ref{prop:global-Kruskal}, each of $U\circ c$
and $V\circ c$ is strictly monotone.  Each can therefore vanish at most once.
A timelike curve cannot contain an interval in either null axis; this also
follows directly from \eqref{eq:UV-timelike-inequality}.  Choose $\tau_0$ after the (at most
two) zeros.  The signs of both $U\circ c$ and $V\circ c$ are then fixed, so
the tail lies in one open quadrant.  The last assertion follows from
$UV=G(r)$ and the sign of $G$.
\end{proof}

\subsection{The field equation and the geodesic first integrals}

\begin{proposition}[Vacuum Einstein equation]\label{prop:BK-Einstein-equation}
Assume in addition that
\begin{equation}\label{eq:Einstein-fiber-condition}
 \operatorname{Ric}_{\gamma_\Sigma}=(n-2)k\gamma_\Sigma.
\end{equation}
Then the metric \eqref{eq:global-Kruskal-metric} satisfies
\begin{equation}\label{eq:BK-Einstein-equation}
 \operatorname{Ric}_g=\frac{2\Lambda}{n-1}g
\end{equation}
on all of $\Mcan$.
\end{proposition}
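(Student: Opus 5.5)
The plan is to compute the Ricci tensor of the block metric \eqref{eq:BK-global-block} on each nonhorizon block and then extend by continuity. The Einstein condition is a closed, local condition and the metric is smooth on all of $\Mcan$. Both sides of \eqref{eq:BK-Einstein-equation} are therefore smooth tensor fields on $\Mcan$. The horizon set $\{UV=0\}\times\Sigma$ has empty interior. So it suffices to verify the identity on the four open quadrants.

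By \eqref{eq:block-null-calculation}, each quadrant is isometric to the block metric $-f\,dt^2+f^{-1}dr^2+r^2\gamma_\Sigma$ on a region with $f\neq0$. In the dynamic blocks $f<0$, and $t$ is spacelike there, but the computation is formally identical.

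On a block I would use the standard warped-product formulas for $g=\hat g+r^2\gamma_\Sigma$, where $\hat g=-f\,dt^2+f^{-1}dr^2$ is two-dimensional and the warping function $r$ is a coordinate on the base. With $p=n-1=\dim\Sigma$, O'Neill's formulas give
\[
 \Ric_g(X,Y)=\Ric_{\hat g}(X,Y)-\frac{p}{r}\hat\nabla^2r(X,Y),
\]
together with $\Ric_g(X,V)=0$ for base $X$ and fiber $V$, and
\[
 \Ric_g(V,W)=\Ric_{\gamma_\Sigma}(V,W)-\Bigl(r\hat\Delta r+(p-1)|\hat\nabla r|^2\Bigr)\gamma_\Sigma(V,W).
\]
For the two-dimensional base, $\Ric_{\hat g}=-\tfrac12 f''\hat g$, $|\hat\nabla r|^2=f$, $\hat\nabla^2 r=\tfrac12 f'\hat g$, and $\hat\Delta r=f'$. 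Each of these is a short Christoffel computation, which I would check directly in $(t,r)$ coordinates.

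Inserting these, the base part becomes $\bigl(-\tfrac12f''-\tfrac{p}{2r}f'\bigr)\hat g$. With $f=k-2mr^{-(n-2)}+\alpha^2r^2$ one finds $f'=2m(n-2)r^{-(n-1)}+2\alpha^2r$ and $f''=-2m(n-2)(n-1)r^{-n}+2\alpha^2$. The $m$-terms cancel, since $-\tfrac12(-2m(n-2)(n-1))r^{-n}-\tfrac{n-1}{2}\cdot2m(n-2)r^{-n}=0$. The $\alpha^2$-terms give $-\alpha^2-(n-1)\alpha^2=-n\alpha^2$. Because $-n\alpha^2=2\Lambda/(n-1)$, the base part matches.

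For the fiber part, the hypothesis \eqref{eq:Einstein-fiber-condition} gives the coefficient $(n-2)k-rf'-(n-2)f$ times $\gamma_\Sigma$. Expanding, the $k$-terms cancel. The $m$-terms give $-2m(n-2)r^{-(n-2)}+2m(n-2)r^{-(n-2)}=0$. The $\alpha^2$-terms give $-2\alpha^2r^2-(n-2)\alpha^2r^2=-n\alpha^2r^2$. This is $\tfrac{2\Lambda}{n-1}r^2$, exactly the required multiple of $g|_\Sigma=r^2\gamma_\Sigma$. The mixed components vanish.

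The only step needing care is getting the warped-product formula with the correct signs and the dimension constant $p=n-1$. This is where the Einstein fiber normalization $(n-2)k$ enters, since $p-1=n-2$. The continuity extension across the horizon is routine, using the smoothness established in Proposition~\ref{prop:global-Kruskal}.
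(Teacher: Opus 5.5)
Your proposal is correct and follows essentially the same route as the paper. Both proofs apply O'Neill's warped-product Ricci identities on the nonhorizon blocks, using the base identities $|\nabla r|_h^2=f$, $\operatorname{Hess}_h r=\tfrac{f'}{2}h$, $\Delta_h r=f'$ and $\operatorname{Ric}_h=-\tfrac{f''}{2}h$, cancel the mass terms in both the base and fiber components (with $p-1=n-2$ matching the normalization $(n-2)k$), and extend across the horizon by density of $\{f\neq0\}$ and smoothness of both sides in Kruskal coordinates.
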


\begin{proof}
We give the calculation on the open dense set $f\ne0$.  Write
\begin{equation}\label{eq:base-fiber-decomposition}
 g=h+r^2\gamma_\Sigma,
 \qquad h=-f\,dt^2+f^{-1}\,dr^2,
 \qquad d_\Sigma:=\dim\Sigma=n-1.
\end{equation}
The nonzero Christoffel symbols of $h$ in the $(t,r)$ coordinates are
\begin{equation}\label{eq:base-Christoffels}
 \Gamma^t{}_{tr}=\Gamma^t{}_{rt}=\frac{f'}{2f},
 \qquad
 \Gamma^r{}_{tt}=\frac{ff'}2,
 \qquad
 \Gamma^r{}_{rr}=-\frac{f'}{2f}.
\end{equation}
Direct substitution in the definitions of Hessian and Ricci curvature yields
\begin{equation}\label{eq:base-curvature-identities}
 |\nabla r|_h^2=f,
 \qquad \Delta_h r=f',
 \qquad \operatorname{Hess}_h r=\frac{f'}2h,
 \qquad \operatorname{Ric}_h=-\frac{f''}2h.
\end{equation}
The warped-product Ricci identities \cite[Chapter~7]{ONeill} now read
\begin{align}
 \operatorname{Ric}_g(X,Y)
 &=\operatorname{Ric}_h(X,Y)
   -d_\Sigma\,r^{-1}\operatorname{Hess}_h r(X,Y),
 &&X,Y\in T\mathcal D,\label{eq:warped-Ricci-base}\\
 \operatorname{Ric}_g(A,B)
 &=\operatorname{Ric}_{\gamma_\Sigma}(A,B)
   -\{r\Delta_h r+(d_\Sigma-1)|\nabla r|_h^2\}\gamma_\Sigma(A,B),
 &&A,B\in T\Sigma,\label{eq:warped-Ricci-fiber}\\
 \operatorname{Ric}_g(X,A)&=0.\label{eq:warped-Ricci-mixed}
\end{align}

The derivatives of $f$ are
\begin{equation}\label{eq:f-first-second-derivatives}
 f'=2m(n-2)r^{-(n-1)}+2\alpha^2r,
 \qquad
 f''=-2m(n-2)(n-1)r^{-n}+2\alpha^2.
\end{equation}
The mass terms cancel in the base equation:
\begin{equation}\label{eq:base-cancellation}
 f''+\frac{n-1}{r}f'=2n\alpha^2.
\end{equation}
Hence \eqref{eq:warped-Ricci-base} becomes
\begin{equation}\label{eq:base-Einstein-calculation}
 \operatorname{Ric}_g|_{T\mathcal D}
 =-\frac12\left(f''+\frac{n-1}{r}f'\right)h
 =-n\alpha^2h=\frac{2\Lambda}{n-1}h.
\end{equation}
On the fiber, \eqref{eq:Einstein-fiber-condition} and
\eqref{eq:warped-Ricci-fiber} give
\begin{align}
 \operatorname{Ric}_g|_{T\Sigma}
 &=\{(n-2)k-rf'-(n-2)f\}\gamma_\Sigma\notag\\
 &=\bigl\{(n-2)k
       -2m(n-2)r^{-(n-2)}-2\alpha^2r^2\notag\\
 &\hspace{4.2em}-(n-2)k
       +2m(n-2)r^{-(n-2)}-(n-2)\alpha^2r^2\bigr\}\gamma_\Sigma\notag\\
 &=-n\alpha^2r^2\gamma_\Sigma
 =\frac{2\Lambda}{n-1}r^2\gamma_\Sigma.
 \label{eq:fiber-Einstein-calculation}
\end{align}
Equations \eqref{eq:base-Einstein-calculation},
\eqref{eq:fiber-Einstein-calculation}, and
\eqref{eq:warped-Ricci-mixed} prove \eqref{eq:BK-Einstein-equation} away
from the horizon.  Both sides of \eqref{eq:BK-Einstein-equation} are
smooth tensor fields in the Kruskal coordinates, and the nonhorizon set is
dense.  The identity therefore holds on the two horizon axes and on the
bifurcation fiber as well.
\end{proof}

\begin{lemma}[Geodesic first integrals]
\label{lem:BK-first-integrals}
Let $\zeta:I\to\Mcan$ be a unit-speed timelike
geodesic.  The two quantities
\begin{equation}\label{eq:EJ-def}
 E:=-g(K,\dot\zeta),
 \qquad
 J^2:=r^4|\dot\omega|_{\gamma_\Sigma}^2
\end{equation}
are constant on $I$.  In a nonhorizon quadrant define
\begin{equation}\label{eq:t-from-Kruskal}
 t=\frac{1}{2\kappa}\log\left|\frac VU\right|;
\end{equation}
then $K=\partial_t$ and
\begin{align}
 E&=f\dot t,\label{eq:BK-energy-equation}\\
 \dot r^2&=P_{E,J}(r),
 \qquad
 P_{E,J}(r):=E^2-f(r)\left(1+\frac{J^2}{r^2}\right),
 \label{eq:BK-radial-equation}\\
 \ddot r&=\frac12P_{E,J}'(r)
 =-\frac12f'(r)\left(1+\frac{J^2}{r^2}\right)
   +\frac{f(r)J^2}{r^3}.
 \label{eq:BK-radial-acceleration}
\end{align}
The last two identities extend through the horizons.  No Killing field on
$\Sigma$ is required.
\end{lemma}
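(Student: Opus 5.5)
The plan is to obtain $E$ from the Killing field $K$ and $J^2$ from the warped-product connection. Both constancy statements are proved directly in Kruskal coordinates, so that they hold across the horizons. Afterwards we pass to the static coordinates in a nonhorizon quadrant to get the explicit formulas.

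For $E$, constancy is the standard Noether argument. Since $K$ is Killing by Proposition~\ref{prop:global-Kruskal} and $\zeta$ is a geodesic,
\[
 \tfrac{d}{d\tau}g(K,\dot\zeta)=g(\nabla_{\dot\zeta}K,\dot\zeta)=0
\]
by antisymmetry of $\nabla K$.

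For $J^2$, we write $\zeta=(x(\tau),\omega(\tau))$ with $x\in\mathcal D$ and use the warped-product connection formulas for $g=h_{\mathcal D}+r^2\gamma_\Sigma$ \cite[Chapter~7]{ONeill}. The fiber component of the geodesic equation reads
\[
 \nabla^{\Sigma}_{\dot\omega}\dot\omega+\frac{2}{r}\frac{d r}{d\tau}\,\dot\omega=0 ,
\]
where $\nabla^\Sigma$ is the Levi-Civita connection of $\gamma_\Sigma$. Taking the $\gamma_\Sigma$-inner product with $\dot\omega$ gives
\[
 \tfrac{d}{d\tau}|\dot\omega|^2_{\gamma_\Sigma}=-\frac{4\dot r}{r}|\dot\omega|^2_{\gamma_\Sigma},
\]
hence $\tfrac{d}{d\tau}\bigl(r^4|\dot\omega|^2_{\gamma_\Sigma}\bigr)=0$. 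This uses only the Levi-Civita connection of $\gamma_\Sigma$, not any isometry of $\Sigma$, which justifies the final remark of the lemma. Because $r$ is a smooth positive function on all of $\mathcal D$, the computation is global.

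In a nonhorizon quadrant we use \eqref{eq:UV-from-uv}: $\log|V/U|=\kappa(u+v)=2\kappa t$, which gives \eqref{eq:t-from-Kruskal}. The flow \eqref{eq:global-Killing-flow} shifts $t$ by $s$ and preserves $r$ and $\omega$, so $K=\partial_t$. Then $E=-g(\partial_t,\dot\zeta)=f\dot t$. The unit-speed condition in block form is
\[
 -f\dot t^2+f^{-1}\dot r^2+r^2|\dot\omega|^2=-1 .
\]
Multiplying by $f$, substituting $E$ and $J^2$, and rearranging gives \eqref{eq:BK-radial-equation}.

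For \eqref{eq:BK-radial-acceleration}, differentiate \eqref{eq:BK-radial-equation}: $2\dot r\ddot r=P'(r)\dot r$. This yields $\ddot r=\tfrac12 P'$ wherever $\dot r\neq0$. I would instead derive it from the $r$-component of the geodesic equation, using the Christoffel symbols \eqref{eq:base-Christoffels} plus the fiber term $\Gamma^r_{AB}=-rf\gamma_{AB}$:
\[
 \ddot r=-\tfrac{ff'}{2}\dot t^2+\tfrac{f'}{2f}\dot r^2+rf|\dot\omega|^2 .
\]
Substituting $E$, $J$ and $\dot r^2=P$ gives the stated formula and avoids division by $\dot r$.

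Extending through the horizons is the main obstacle, since $t$ is undefined there. The plan is to express everything smoothly in Kruskal terms. We have $r=r(UV)$, so $\dot r=r'(UV)(\dot UV+U\dot V)$ is smooth along $\zeta$. Also
\[
 E=\tfrac{\kappa F}{2}\bigl(-U\dot V+V\dot U\bigr)
\]
up to sign conventions, computed from \eqref{eq:global-Kruskal-metric}, and this is smooth. Both sides of \eqref{eq:BK-radial-equation} and \eqref{eq:BK-radial-acceleration} are therefore continuous functions of $\tau$ along $\zeta$. By Lemma~\ref{lem:bk:terminal-quadrant}, $\zeta$ meets the horizon axes only at isolated parameter values. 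The identities hold off these values, so they extend by continuity.
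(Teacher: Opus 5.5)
Your proposal is correct and follows essentially the same route as the paper. It uses the Killing field for $E$, the fiber component of the warped-product geodesic equation for $J^2$, and the unit-speed identity for \eqref{eq:BK-radial-equation}; it then uses the $r$-component of the geodesic equation (equivalent to the paper's Euler--Lagrange equation \eqref{eq:radial-Euler-Lagrange}) for \eqref{eq:BK-radial-acceleration} without dividing by $\dot r$, and finishes with continuity across the at most two horizon times. One small correction to the citation: the isolatedness of those horizon times comes from the strict monotonicity of $U$ and $V$ in Proposition~\ref{prop:global-Kruskal}, which is used in the proof of Lemma~\ref{lem:bk:terminal-quadrant}, not from that lemma's statement about the tail.
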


\begin{proof}
Since $K$ is Killing and $\zeta$ is affinely parametrized,
\begin{equation}\label{eq:Killing-first-integral-calculation}
 \frac d{d\tau}g(K,\dot\zeta)
 =g(\nabla_{\dot\zeta}K,\dot\zeta)
  +g(K,\nabla_{\dot\zeta}\dot\zeta)=0.
\end{equation}
Thus $E$ is constant.

The global metric is a warped product of the two-dimensional Kruskal base and
$(\Sigma,\gamma_\Sigma)$, with warping function $r$.  The vertical component
of the geodesic equation is
\begin{equation}\label{eq:fiber-geodesic-equation}
 \frac{D^{\gamma_\Sigma}\dot\omega}{d\tau}
   +2\frac{\dot r}{r}\dot\omega=0.
\end{equation}
Taking its $\gamma_\Sigma$ inner product with $\dot\omega$ gives
\begin{align}
 \frac d{d\tau}\bigl(r^4|\dot\omega|_{\gamma_\Sigma}^2\bigr)
 &=4r^3\dot r|\dot\omega|_{\gamma_\Sigma}^2
   +2r^4\left\langle
       -2\frac{\dot r}{r}\dot\omega,\dot\omega
     \right\rangle_{\gamma_\Sigma}=0.
 \label{eq:J-conservation-calculation}
\end{align}
This proves conservation of $J^2$ and also shows why no fiber isometry is
needed.

On $f\ne0$, the relation $K=\partial_t$ gives
$E=-g(\partial_t,\dot\zeta)=f\dot t$.  The unit-speed identity is
\begin{equation}\label{eq:unit-timelike-block}
 -1=-f\dot t^2+f^{-1}\dot r^2+r^2|\dot\omega|_{\gamma_\Sigma}^2.
\end{equation}
Substituting $f\dot t=E$ and
$r^4|\dot\omega|^2=J^2$ in \eqref{eq:unit-timelike-block} gives
\eqref{eq:BK-radial-equation}.

To retain the formula at turning points, we derive the acceleration without
dividing by $\dot r$.  The $r$ Euler-Lagrange equation of
\begin{equation}\label{eq:geodesic-Lagrangian}
 \mathcal L=\frac12\left(
   -f\dot t^2+f^{-1}\dot r^2+r^2|\dot\omega|_{\gamma_\Sigma}^2
 \right)
\end{equation}
is
\begin{equation}\label{eq:radial-Euler-Lagrange}
 \ddot r-\frac{f'}{2f}\dot r^2
   +\frac{ff'}2\dot t^2-fr|\dot\omega|_{\gamma_\Sigma}^2=0.
\end{equation}
Using \eqref{eq:BK-energy-equation},
\eqref{eq:BK-radial-equation}, and $|\dot\omega|^2=J^2/r^4$ in
\eqref{eq:radial-Euler-Lagrange} cancels the terms containing $E^2/f$ and
gives \eqref{eq:BK-radial-acceleration}.  The scalar $r$ is smooth on the
Kruskal spacetime, and the right-hand sides of \eqref{eq:BK-radial-equation} and
\eqref{eq:BK-radial-acceleration} are smooth for $r>0$.  A timelike curve
cannot contain a nontrivial interval in a horizon, so its nonhorizon parameter
values are dense.  Continuity extends both identities through every horizon
event.
\end{proof}

\begin{remark}[Horizon values of the first integrals]
\label{rem:bk:horizon-first-integrals}
At $r=r_h$, equation \eqref{eq:BK-radial-equation} gives
$\dot r^2=E^2$.  At an ordinary horizon point, $E\ne0$.  For example, on
$U=0$, $V\ne0$,
\begin{equation}\label{eq:E-on-U-axis}
 E=-g(\kappa V\partial_V,\dot\zeta)
   =\frac{\kappa F(0)V}{2}\dot U,
\end{equation}
and \eqref{eq:UV-timelike-inequality} implies $\dot U\ne0$.  The calculation
on $V=0$, $U\ne0$ is identical.  Consequently a horizon event with $E=0$
can occur only at the bifurcation fiber $U=V=0$.
\end{remark}

\subsection{Finite proper-time ends}

\begin{proposition}[Classification of finite timelike geodesic ends]
\label{prop:BK-geodesic-ends}
Let $\Sigma$ be closed, and let
$\zeta:[0,b)\to\Mcan$ be a unit-speed timelike
geodesic with $b<\infty$.  Exactly one of the following alternatives holds:
\begin{enumerate}[label=\textup{(\roman*)},leftmargin=2.2em]
\item\label{end-regular} $\zeta(\tau)$ converges, as $\tau\nearrow b$, to a
      point of $\Mcan$; the limiting point may lie
      in a nonhorizon block, on one horizon branch, or on the bifurcation
      fiber;
\item\label{end-center} a terminal segment lies in a dynamic quadrant and
      \begin{equation}\label{eq:finite-end-center}
       r(\zeta(\tau))\longrightarrow0
       \qquad(\tau\nearrow b).
      \end{equation}
\end{enumerate}
In particular, every geodesic with finite proper-time domain that eventually
leaves each compact subset of $\Mcan$ satisfies
\ref{end-center}; regular, horizon, and asymptotic endpoints are all excluded.
\end{proposition}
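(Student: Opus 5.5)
By Lemma~\ref{lem:bk:terminal-quadrant}, after discarding an initial segment we may assume that $\zeta([\tau_0,b))$ lies in one open Kruskal quadrant. Along this tail $U$ and $V$ are strictly monotone, so they have limits $U_b,V_b\in[-\infty,\infty]$, and $r(\zeta(\tau))=r(UV)$. The plan is to use the first integrals of Lemma~\ref{lem:BK-first-integrals} to show that $r$ and the Kruskal coordinates have finite limits unless $r\to0$. The fiber component is then handled separately.

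\textbf{Step 1: $r$ has a limit.} I would show that $\dot r$ is bounded on the tail, uniformly on compact $r$-ranges. Equation \eqref{eq:BK-radial-equation} gives $\dot r^2=P_{E,J}(r)$, and $P_{E,J}$ is continuous on $(0,\infty)$. There are two consequences.
\begin{itemize}
\item[(a)] In the exterior quadrants we have $f>0$, so $\dot r^2\le E^2$. Hence $r$ is Lipschitz and, since $b<\infty$, stays bounded.
\item[(b)] In a dynamic quadrant $r<r_h$ is automatically bounded.
\end{itemize}
For the existence of $\lim r$ I would argue that $\ddot r=\frac12P'_{E,J}(r)$ is bounded while $r$ stays in a compact subset of $(0,\infty)$. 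Then $\dot r$ is uniformly continuous there, and $r$ converges, with a limit $r_b\in[0,\infty)$. Alternatively, the limit exists directly because $UV$ is a monotone-product function in a fixed quadrant: both $|U|$ and $|V|$ are monotone, and in dynamic quadrants their product is monotone. In the exterior, $r$ is bounded on a finite interval because it is Lipschitz.

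\textbf{Step 2: the case $r_b>0$.} Here I would show that all coordinates converge to a point of $\Mcan$. On the tail $r$ ranges over a compact $[r_1,r_2]\subset(0,\infty)$, so $|\dot\omega|_{\gamma_\Sigma}=J/r^2$ is bounded. Therefore $\omega$ is Lipschitz and, since $\Sigma$ is complete, $\omega$ converges. For the Kruskal coordinates, the unit-speed condition $-F\dot U\dot V+r^2|\dot\omega|^2=-1$ together with the conserved quantity $E=-g(K,\dot\zeta)=\frac{\kappa F}{2}(U\dot V-V\dot U)\cdot(\pm)$ gives two equations for $\dot U,\dot V$. On a compact $r$-range, $F$ is bounded above and below. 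I expect to solve these for $\dot U/U$ and $\dot V/V$, or directly for $\dot U,\dot V$, in terms of $E,J,r$, and obtain bounds such as $|\dot U|\le C(1+|U|)$. Integrating over the finite interval $[\tau_0,b)$ then gives bounded, hence convergent, monotone $U$ and $V$. If the limit $(U_b,V_b)$ satisfies $U_bV_b=G(r_b)\in(G_\infty,G_0)$, it lies in $\mathcal D$, which is alternative~(i). This step also excludes an asymptotic end $r\to\infty$ in the $\Lambda<0$ case, since $r$ was shown to be bounded.

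\textbf{Step 3: the case $r_b=0$.} Since $r<r_h$ eventually, the tail lies in a dynamic quadrant, which is alternative~(ii). The two alternatives are mutually exclusive because $r$ is continuous and positive on $\Mcan$. For the final assertion, a geodesic with a limit point in $\Mcan$ does not leave the compact neighborhood of that point, so alternative~(i) is excluded for such a geodesic.

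\textbf{Main obstacle.} The main obstacle is Step~2: obtaining uniform control of $U$ and $V$ near the horizon axes. There the static $t$ degenerates, and $\dot U$, $\dot V$ must be bounded through the Kruskal form of $E$ together with the unit-speed relation, using Remark~\ref{rem:bk:horizon-first-integrals}. A Grönwall-type bound $|\dot U|+|\dot V|\le C(1+|U|+|V|)$, valid wherever $r\in[r_1,r_2]$, is what I would aim for. It suffices because the proper-time interval is finite.
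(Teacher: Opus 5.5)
\textbf{Verdict: the proposal has a genuine gap in the horizon case $r_b=r_h$.}

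Much of your outline is sound and follows the paper. This includes the terminal quadrant, the bound $|\dot r|\le|E|$ in an exterior quadrant, monotonicity of $r$ in a dynamic quadrant, the Lipschitz fiber trace, and the case $r_b=0$. In the exterior, the Lipschitz bound on a finite interval already gives convergence of $r$; the bound on $\ddot r$ is not what does it. For $r_b\neq r_h$ your Kruskal version also works. There $\dot U/U=-\kappa(E-\dot r)/f$ and $\dot V/V=\kappa(E+\dot r)/f$ are bounded once $|f|$ is bounded below, which is the paper's bound on $\dot t$ in other coordinates.

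The gap is the horizon case $r_b=r_h$. You flag it correctly, but the Gr\"onwall estimate you propose there is false. Off the horizon one has
\[
 \dot U=\frac{E-\dot r}{\kappa FV},\qquad
 \dot V=-\frac{E+\dot r}{\kappa FU},\qquad
 (E-\dot r)(E+\dot r)=f\Bigl(1+\frac{J^2}{r^2}\Bigr).
\]
So $E$, $J$ and the base point do not control the size of $(\dot U,\dot V)$. In an exterior quadrant there are two future unit velocities, distinguished by the sign of $\dot r$, and on one of them the coefficients blow up at the horizon. For instance, take $U=-\epsilon$, $V=1$ and the radial unit vector with $\dot U=\epsilon/F$, $\dot V=1/\epsilon$. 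It has $E=\tfrac{\kappa}{2}(F+\epsilon)$, which stays of order one as $\epsilon\to0$. Meanwhile $|\dot V|\to\infty$ and $|U|+|V|$ stays bounded; this is an outgoing geodesic skimming the future horizon. Hence no bound $|\dot U|+|\dot V|\le C(1+|U|+|V|)$ with $C=C(E,J,r_1,r_2)$ holds on a region $r\in[r_1,r_2]\ni r_h$, and the Gr\"onwall step cannot be carried out as stated.

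\textbf{What is missing is branch selection by the direction of approach.}
\begin{itemize}
\item If $E\neq0$, then $\dot r^2=E^2+O(r-r_h)$. So $r$ is strictly monotone on a late interval and $\dot r\to s|E|$. The sign $s$ is forced by whether $r$ decreases to $r_h$ (exterior) or increases to it (dynamic quadrant).
\item By the product identity above, exactly one of $E\mp\dot r$ is then $O(f)$. Hence exactly one of $\dot u=(E-\dot r)/f$ and $\dot v=(E+\dot r)/f$ is bounded along the tail.
\item That null coordinate converges, so the corresponding Kruskal coordinate has a finite nonzero limit. The other Kruskal coordinate tends to $0$ because $UV=G(r)\to0$. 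This is the content of the paper's estimate \eqref{eq:regular-null-derivative}.
\item If $E=0$, which forces a dynamic quadrant, your regional bound also fails: there $|\dot U|=\sqrt{|U/V|}\,\sqrt{(1+J^2/r^2)/F}$. What rescues the argument is that $\dot t=0$ makes $|V/U|=e^{2\kappa t}$ constant along the curve. Then $UV\to0$ forces $U,V\to0$, and the limit lies on the bifurcation fiber.
\end{itemize}
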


\begin{proof}
Choose either time orientation for $\zeta$; reversing it merely reverses the
monotonicities below.  Lemma~\ref{lem:bk:terminal-quadrant}
gives $\tau_0<b$ such that
$\zeta([\tau_0,b))$ lies in one open quadrant.  We use the block coordinates
$t,r$ only on this terminal interval.  Let $E$ and $J\ge0$ be the constants
of Lemma~\ref{lem:BK-first-integrals}.

Suppose first that the terminal quadrant is exterior, so $f>0$ and $r>r_h$.
The radial equation gives
\begin{equation}\label{eq:exterior-radial-Lipschitz}
 0\le\dot r^2
 =E^2-f(r)\left(1+\frac{J^2}{r^2}\right)\le E^2.
\end{equation}
In particular $E\ne0$, and $r$ is $|E|$-Lipschitz.  Because the parameter
interval has finite length, this bound makes $r$ Cauchy.  Hence
\begin{equation}\label{eq:exterior-r-limit}
 r_b:=\lim_{\tau\nearrow b}r(\tau)\in[r_h,\infty).
\end{equation}
Here $r_b$ is a finite real number, since a Cauchy function with values in
$\mathbb R$ cannot converge only in the extended real line.  Thus the bound
\eqref{eq:exterior-radial-Lipschitz} already excludes both
asymptotically flat and anti-de Sitter infinity at finite proper time, without
the use of an asymptotic coordinate.

If instead the terminal quadrant is dynamic, then $f<0$ and $0<r<r_h$.
Here
\begin{equation}\label{eq:dynamic-radial-strictness}
 \dot r^2=E^2+|f(r)|\left(1+\frac{J^2}{r^2}\right)>0.
\end{equation}
Thus $\dot r$ has a fixed sign on the connected terminal interval, $r$ is
strictly monotone, and
\begin{equation}\label{eq:dynamic-r-limit}
 r_b:=\lim_{\tau\nearrow b}r(\tau)\in[0,r_h].
\end{equation}

Suppose that
\begin{equation}\label{eq:regular-r-limit}
 r_b\in(0,r_h)\cup(r_h,\infty).
\end{equation}
Then, on a sufficiently late interval, $|f|\ge c>0$ and $r\ge c_r>0$.
Equations \eqref{eq:BK-energy-equation} and \eqref{eq:EJ-def} yield
\begin{equation}\label{eq:t-omega-bounds-regular}
 |\dot t|\le\frac{|E|}{c},
 \qquad
 |\dot\omega|_{\gamma_\Sigma}\le\frac{J}{c_r^2}.
\end{equation}
Hence $t$ is Cauchy, and the fiber trace has finite length.  Completeness of
the closed fiber gives limits $t_b\in\mathbb R$ and $\omega_b\in\Sigma$.
Therefore
$\zeta(\tau)\to(t_b,r_b,\omega_b)$ in its terminal block, which is
alternative \ref{end-regular}.

The remaining positive case is $r_b=r_h$, where the singular coordinate $t$
must be avoided at the limiting point.  Since $r\ge r_h/2$ for all
sufficiently large $\tau$, the
fiber estimate in
\eqref{eq:t-omega-bounds-regular} still shows that
$\omega(\tau)\to\omega_b\in\Sigma$.

Assume first that $E\ne0$.  Since
\begin{equation}\label{eq:P-horizon-expansion}
 P_{E,J}(r)=E^2-f(r)\left(1+\frac{J^2}{r^2}\right)
            =E^2+O(r-r_h),
\end{equation}
there is $s\in\{\pm1\}$ such that
\begin{equation}\label{eq:rdot-horizon-expansion}
 \dot r=s|E|+O(r-r_h).
\end{equation}
In particular, $|\dot r|\geq|E|/2$ on a sufficiently late interval.  The
function $r$ is therefore strictly monotone there and may be used as a
parameter.  Put
$\epsilon_E:=E/(s|E|)\in\{\pm1\}$.  Since
$f(r)=2\kappa(r-r_h)+O((r-r_h)^2)$, equations
\eqref{eq:BK-energy-equation} and
\eqref{eq:rdot-horizon-expansion} give
\begin{equation}\label{eq:regular-null-derivative}
 \frac{dt}{dr}-\frac{\epsilon_E}{f(r)}
 =\frac1{f(r)}\left(\frac E{\dot r}-\epsilon_E\right)=O(1).
\end{equation}
If $\epsilon_E=1$, then
$u=t-r_*$ has a finite limit.  Since $r_*\to-\infty$ at the horizon,
$v=u+2r_*\to-\infty$.  In the fixed quadrant formula
\eqref{eq:UV-from-uv}, $U$ therefore has a finite nonzero limit and $V\to0$.
If $\epsilon_E=-1$, then $v=t+r_*$ has a finite limit,
$u=v-2r_*\to+\infty$, and hence $V$ has a finite nonzero limit while
$U\to0$.  In both cases the limiting point belongs to one of the horizon
axes in $\Mcan$.

If $E=0$, the exterior inequality
\eqref{eq:exterior-radial-Lipschitz} is impossible; hence the approach is from
a dynamic quadrant.  Equation
\eqref{eq:BK-energy-equation} gives $\dot t=0$ throughout that quadrant, and
so
\begin{equation}\label{eq:constant-Kruskal-ratio}
 \left|\frac VU\right|=e^{2\kappa t}
\end{equation}
is a fixed positive number.  Since $UV=G(r)\to0$, equation
\eqref{eq:constant-Kruskal-ratio} implies $U\to0$ and $V\to0$.  Thus
$\zeta$ converges to $(0,0,\omega_b)$ on the bifurcation fiber.  This again is
alternative \ref{end-regular}.

The only case left by \eqref{eq:exterior-r-limit} and
\eqref{eq:dynamic-r-limit} is $r_b=0$, which is alternative
\ref{end-center}.  If \ref{end-regular} holds, choose a relatively compact
coordinate neighborhood of the limit.  A terminal segment of $\zeta$ lies in
its compact closure and therefore cannot eventually leave every compact
subset.  This proves the last assertion.
\end{proof}

The first integrals also give the proper-time rate at the central end.  The
next corollary is not needed for the inextendibility argument.

\begin{corollary}[Proper-time asymptotics at the singularity]
\label{cor:bk:center-proper-time-asymptotics}
Under alternative \ref{end-center} of
Proposition~\ref{prop:BK-geodesic-ends}, one has $\dot r<0$ on a terminal
interval.  If $J=0$, then
\begin{equation}\label{eq:center-proper-time-radial}
 b-\tau=\frac{2}{n\sqrt{2m}}r(\tau)^{n/2}
 \bigl(1+O(r(\tau)^{n-2})\bigr).
\end{equation}
If $J>0$, put $\delta=\min\{2,n-2\}$; then
\begin{equation}\label{eq:center-proper-time-angular}
 b-\tau=\frac{2}{(n+2)J\sqrt{2m}}
 r(\tau)^{(n+2)/2}\bigl(1+O(r(\tau)^\delta)\bigr).
\end{equation}
\end{corollary}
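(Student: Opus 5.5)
The plan is to derive everything from the first integrals of Lemma~\ref{lem:BK-first-integrals} on the terminal dynamic quadrant, where $f<0$ and $0<r<r_h$.

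\textbf{Sign of $\dot r$.} By \eqref{eq:dynamic-radial-strictness}, $\dot r^2>0$ throughout the terminal interval. Hence $\dot r$ has a fixed sign there. Since $r\to0$ while $r>0$, that sign must be negative, so $\dot r<0$ on a terminal interval.

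\textbf{Proper time as an integral in $r$.} On the terminal interval I would use $r$ as the parameter and write
\[
 b-\tau=\int_0^{r(\tau)}\frac{ds}{\sqrt{P_{E,J}(s)}},
\]
where $P_{E,J}(s)=E^2+|f(s)|\bigl(1+J^2s^{-2}\bigr)$. Near $s=0$ one has
\[
 |f(s)|=2ms^{-(n-2)}\bigl(1-\tfrac{k}{2m}s^{n-2}-\tfrac{\alpha^2}{2m}s^n\bigr)=2ms^{-(n-2)}\bigl(1+O(s^{n-2})\bigr).
\]
The remaining work is to extract the leading power of $P_{E,J}$ and control the relative errors.

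\textbf{Case $J=0$.} Here $P=2ms^{-(n-2)}\bigl(1+O(s^{n-2})\bigr)$, because $E^2$ is also $O(s^{n-2})$ relative to $2ms^{-(n-2)}$. Therefore
\[
 P^{-1/2}=(2m)^{-1/2}s^{(n-2)/2}\bigl(1+O(s^{n-2})\bigr).
\]
Integrating $s^{(n-2)/2}$ gives $\frac{2}{n}r^{n/2}$, and integrating the error term gives $O\bigl(r^{n/2+n-2}\bigr)$. This yields \eqref{eq:center-proper-time-radial}.

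\textbf{Case $J>0$.} The dominant term is now $2mJ^2s^{-n}$. Relative to it, the corrections are:
\begin{itemize}
 \item the "$1$" in $1+J^2s^{-2}$, which contributes $O(s^2)$;
 \item the correction inside $|f|$, which contributes $O(s^{n-2})$;
 \item $E^2$, which contributes $O(s^n)$.
\end{itemize}
Hence
\[
 P=2mJ^2s^{-n}\bigl(1+O(s^\delta)\bigr),\qquad \delta=\min\{2,n-2\}.
\]
Then $P^{-1/2}=(2m)^{-1/2}J^{-1}s^{n/2}\bigl(1+O(s^\delta)\bigr)$. Integrating from $0$ gives the factor $\frac{2}{n+2}r^{(n+2)/2}$ and a relative error $O(r^\delta)$, which proves \eqref{eq:center-proper-time-angular}.

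\textbf{Main obstacle.} The only delicate point is bookkeeping. One must check that a relative error $O(s^\delta)$ in the integrand survives integration as a relative error $O(r^\delta)$. This holds because $\int_0^r s^{p+\delta}\,ds\big/\int_0^r s^p\,ds=O(r^\delta)$ for $p>-1$. One must also check that the expansion of $(1+x)^{-1/2}$ is valid uniformly for small $s$.
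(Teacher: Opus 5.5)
Your proposal is correct and takes essentially the same route as the paper. You fix the sign of $\dot r$ from \eqref{eq:dynamic-radial-strictness} and $r\to0$, write $b-\tau=\int_0^{r(\tau)}P_{E,J}(s)^{-1/2}\,ds$, and expand the integrand using $-f(s)=2ms^{-(n-2)}\bigl(1+O(s^{n-2})\bigr)$, with $2ms^{-(n-2)}$ leading when $J=0$ and $2mJ^2s^{-n}$ leading when $J>0$. Your explicit error bookkeeping (the $E^2$ term, the ``$1$'' term and the $k,\alpha^2$ corrections giving $\delta=\min\{2,n-2\}$, and the fact that a relative error $O(s^\delta)$ integrates to $O(r^\delta)$) spells out what the paper leaves implicit.
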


\begin{proof}
Write $D=-f$.  Near $r=0$,
\begin{equation}\label{eq:D-center-asymptotic-global}
 D(r)=2mr^{-(n-2)}\bigl(1+O(r^{n-2})\bigr).
\end{equation}
Since $r\to0$ and $r$ is decreasing, the radial equation is
\begin{equation}\label{eq:center-rdot-negative}
 \dot r=-\sqrt{E^2+D(r)\left(1+\frac{J^2}{r^2}\right)}.
\end{equation}
For $J=0$, equations \eqref{eq:D-center-asymptotic-global} and
\eqref{eq:center-rdot-negative} give
\begin{equation}\label{eq:center-P-J-zero}
 \frac1{|\dot r|}=\frac{1}{\sqrt{2m}}r^{(n-2)/2}
 \bigl(1+O(r^{n-2})\bigr).
\end{equation}
For $J>0$, the term $DJ^2/r^2$ is leading, and
\begin{equation}\label{eq:center-P-J-positive}
 \frac1{|\dot r|}=\frac{1}{J\sqrt{2m}}r^{n/2}
 \bigl(1+O(r^\delta)\bigr).
\end{equation}
In either case,
\begin{equation}\label{eq:center-time-integral}
 b-\tau=\int_0^{r(\tau)}\frac{ds}{
 \sqrt{E^2+D(s)(1+J^2/s^2)}}.
\end{equation}
Integrating \eqref{eq:center-P-J-zero} or
\eqref{eq:center-P-J-positive} proves the stated formulae.
\end{proof}

\subsection{Global \texorpdfstring{$C^0$}{C0}-inextendibility}

\begin{theorem}[\texorpdfstring{$C^0$}{C0}-inextendibility of the canonical one-horizon warped spacetime]
\label{thm:BK-C0-inextendible}
Let the parameters satisfy \eqref{eq:BK-global-parameters}, and let
$(\Sigma^{n-1},\gamma_\Sigma)$ be any closed connected Riemannian manifold.
Then the canonical spacetime
$(\Mcan,g)$ defined by
\eqref{eq:global-Kruskal-domain}-\eqref{eq:global-Kruskal-metric} is
$C^0$-inextendible.

If, in addition, \eqref{eq:Einstein-fiber-condition} holds, then this is the
canonical one-horizon vacuum Birmingham-Kottler spacetime and it satisfies
\eqref{eq:BK-Einstein-equation}.  No assumptions of roundness, homogeneity,
orientability, or simple connectivity are imposed in the
$C^0$-inextendibility statement.
\end{theorem}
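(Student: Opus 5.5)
The proof is by contradiction. Suppose $\iota:\Mcan\to(\widetilde M,\widetilde g)$ is a $C^0$ extension. Lemma~\ref{lem:boundary-maximizer} applies, since $\Mcan$ is smooth, connected and time-oriented by $T$. After possibly reversing both time orientations, it gives a future-directed unit-speed timelike geodesic $\eta:[0,b)\to\Mcan$ with $b<\infty$ and $\iota(\eta(\tau))\to p\in\partial\iota(\Mcan)$. Moreover, $\eta$ eventually leaves every compact subset of $\Mcan$.

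Proposition~\ref{prop:BK-geodesic-ends} then excludes alternative~\ref{end-regular}. Hence a terminal segment of $\eta$ lies in one dynamic quadrant $Q$, and $r(\eta(\tau))\to0$. The restriction $\eta|_{[\tau_0,b)}$, extended by $p$, is a timelike curve in $\widetilde M$ whose half-open part lies in $\iota(\Mcan)$. Relative to the time orientation of $\widetilde M$ in force, $p$ is therefore a future or past boundary point approached by a timelike curve along which $r\to0$.

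Next I identify the dynamic quadrant with the local model \eqref{cen:eq:general-metric}. On $Q\times\Sigma$ the block coordinates $(t,r)$ of \eqref{eq:t-from-Kruskal} give
\[
 g=-(-f)^{-1}\dd r^2+(-f)\dd t^2+r^2\gamma_\Sigma ,
\]
with $r\in(0,r_h)$ and $t\in\R$. This is the form \eqref{cen:eq:general-metric} with $R=r_h$ and
\[
 a=(-f)^{-1/2},\qquad b=(-f)^{1/2},\qquad c=r.
\]
The time orientation of $Q$ either makes $r$ decrease toward the future, or it does so after reversal. Theorem~\ref{cen:thm:warped-center} covers both orientations. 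In the first case a future boundary point is excluded; in the second a past boundary point approached by a past-directed curve with $r\to0$ is excluded. The restriction $\iota|_{Q\times\Sigma}$ is itself a $C^0$ extension of the block: its image is open and proper, and $p$ lies in its boundary because $p$ is a limit of the tail and $p\notin\iota(\Mcan)$. Definition~\ref{def:boundary-plus} only requires a timelike curve in the extension ending at $p$, and the tail of $\eta$ supplies one. Theorem~\ref{cen:thm:warped-center} thus forbids exactly this configuration, and the contradiction proves the theorem.

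It remains to check the hypotheses of Theorem~\ref{cen:thm:warped-center} near $r=0$. From $-f(r)=2mr^{-(n-2)}(1+O(r^{n-2}))$ the following hold.
\begin{itemize}
\item $b=(-f)^{1/2}\to\infty$ as $r\searrow0$.
\item $a/b=(-f)^{-1}\sim r^{n-2}/(2m)$, which is integrable at $0$.
\item $a/c=r^{-1}(-f)^{-1/2}\sim r^{(n-4)/2}/\sqrt{2m}$, which is integrable because $(n-4)/2>-1$ for $n\ge3$.
\end{itemize}
For the monotonicity I would write
\[
 \Bigl(\frac bc\Bigr)^2=\frac{-f}{r^2}=-\frac{k}{r^2}+\frac{2m}{r^n}+\alpha^2 .
\]
Its derivative is $2kr^{-3}-2mn\,r^{-n-1}$. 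For $k\le0$ this is negative for all $r>0$. For $k=1$ it is negative when $r^{n-2}<mn$, which holds for small $r$. So $b/c$ is nonincreasing on some $(0,R_0)$.

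I expect the main obstacle to be the bookkeeping in the last step rather than any estimate: making sure the boundary point produced by the Minguzzi--Suhr maximizer really is a future (or past) boundary point of the restricted block extension in the sense required by Theorem~\ref{cen:thm:warped-center}, with orientations matched. I would settle this by noting that $\widetilde M$'s time orientation restricts to that of $\Mcan$, and that on $Q$ the sign of $\dot r$ along the tail, fixed by \eqref{eq:dynamic-radial-strictness}, determines which of the two assertions of Theorem~\ref{cen:thm:warped-center} to invoke. Finally, the Einstein condition plays no role in inextendibility. The second paragraph of the theorem then follows directly from Proposition~\ref{prop:BK-Einstein-equation}.
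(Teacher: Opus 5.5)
Your proposal is correct and follows essentially the same route as the paper: Lemma~\ref{lem:boundary-maximizer} supplies the finite boundary-approaching geodesic, and Proposition~\ref{prop:BK-geodesic-ends} forces its tail into a dynamic quadrant with $r\to0$. Theorem~\ref{cen:thm:warped-center} is then applied to the restricted embedding with $a=(-f)^{-1/2}$, $b=(-f)^{1/2}$, $c=r$, after the same integrability and monotonicity checks. The differences are cosmetic: you use the whole dynamic quadrant ($R=r_h$, with the hypotheses verified on a smaller $(0,R_0)$) where the paper shrinks $R$ first, and you keep both orientation cases open, whereas the paper notes directly that a future-directed $\eta$ with $r\to0$ makes decreasing $r$ future-directed on the block, so only the future assertion is needed.
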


\begin{proof}
Proposition~\ref{prop:global-Kruskal} constructs the smooth spacetime for
an arbitrary closed Riemannian fiber.  Suppose that it admitted a $C^0$
extension
\begin{equation}\label{eq:putative-global-extension}
 \iota:(\Mcan,g)
 \longrightarrow(\widetilde M,\widetilde g).
\end{equation}
By Lemma~\ref{lem:boundary-maximizer}, after reversing both time orientations
if necessary, there are $b<\infty$, a future-directed unit-speed timelike
geodesic $\eta:[0,b)\to\Mcan$, and
$p\in\partial\iota(\Mcan)$ such that
\begin{equation}\label{eq:boundary-geodesic-limit}
 \iota(\eta(\tau))\longrightarrow p
 \quad\hbox{as }\tau\nearrow b,
\end{equation}
and $\eta$ eventually leaves every compact subset of
$\Mcan$.

Proposition~\ref{prop:BK-geodesic-ends} gives
$r(\eta(\tau))\to0$.  By Lemma~\ref{lem:bk:terminal-quadrant}, a
terminal segment of $\eta$ lies in one dynamic Kruskal quadrant.  Equation
\eqref{eq:dynamic-radial-strictness} then shows that $\dot r<0$ on that
segment.  Choose $R\in(0,r_h)$ so small that
$k-mn r^{-(n-2)}<0$ for $0<r<R$, and then choose $\tau_R<b$ such that
$r(\eta(\tau))<R$ for $\tau\ge\tau_R$.  Within the terminal quadrant, set
\begin{equation}\label{eq:terminal-BK-interior}
 M_R=(0,R)_r\times\mathbb R_t\times\Sigma,
 \qquad
 g=-D(r)^{-1}dr^2+D(r)dt^2+r^2\gamma_\Sigma,
\end{equation}
where
\begin{equation}\label{eq:D-for-global-center}
 D(r)=-f(r)=\frac{2m}{r^{n-2}}-k-\alpha^2r^2>0,
\end{equation}
and equip $M_R$ with the time orientation for which decreasing $r$ is
future-directed.

The coefficients in the notation of Theorem~\ref{cen:thm:warped-center} are
\begin{equation}\label{bk:eq:center-coefficients}
 a(r)=D(r)^{-1/2},\qquad b(r)=D(r)^{1/2},\qquad c(r)=r.
\end{equation}
Since
\begin{equation}\label{bk:eq:center-D-asymptotic}
 D(r)=2mr^{-(n-2)}\bigl(1+O(r^{n-2})\bigr),
\end{equation}
we have $b(r)\to\infty$ and
\begin{align}
 \int_0^r\frac{a(s)}{b(s)}\,ds
 &=\frac{r^{n-1}}{2m(n-1)}\bigl(1+O(r^{n-2})\bigr),
 \label{bk:eq:center-A-check}\\
 \int_0^r\frac{a(s)}{c(s)}\,ds
 &=\frac{2}{(n-2)\sqrt{2m}}r^{(n-2)/2}
   \bigl(1+O(r^{n-2})\bigr).
 \label{bk:eq:center-B-check}
\end{align}
Finally,
\begin{equation}\label{bk:eq:center-ratio-check}
 \left(\frac{b}{c}\right)'(r)
 =\left(\frac{\sqrt D}{r}\right)'
 =\frac{k-mn r^{-(n-2)}}{r^2\sqrt D}<0
 \qquad(0<r<R).
\end{equation}
Equations \eqref{bk:eq:center-A-check}-\eqref{bk:eq:center-ratio-check}
verify all hypotheses of the local theorem on $M_R$.

The restriction $\iota_R:=\iota|_{M_R}$ is an isometric embedding onto an
open subset of $\widetilde M$.  The manifold $M_R$ is connected, and
$\iota_R$ remains a $C^1$ time-orientation-preserving isometric embedding.
Its image is proper because the point $p$ is not in $\iota_R(M_R)$, while
\eqref{eq:boundary-geodesic-limit} and the terminal segment
$\eta|_{[\tau_R,b)}$ show that
$p\in\partial^+\iota_R(M_R)$.  Thus $\iota_R$ is itself a $C^0$ extension in
the sense of Definition~\ref{def:C0-spacetime}.  Finally, $dr$ is timelike
throughout the connected dynamic block, so its sign is constant on each time
cone.  The curve $\eta$ is future-directed there and satisfies $\dot r<0$;
hence the inherited future orientation on $M_R$ is precisely the orientation
for which decreasing $r$ is future-directed, as required in
Theorem~\ref{cen:thm:warped-center}.  Applied to
\eqref{eq:terminal-BK-interior} and \eqref{eq:D-for-global-center}, the
local theorem, Theorem~\ref{cen:thm:warped-center}, excludes precisely
such a future boundary point.  This contradiction proves
$C^0$-inextendibility.

If \eqref{eq:Einstein-fiber-condition} holds, the field equation follows from
Proposition~\ref{prop:BK-Einstein-equation}.  The inextendibility argument itself
used only closedness and connectedness of the fiber.
\end{proof}

\begin{corollary}[Tangherlini spacetime]
\label{cor:ST-internal}
For every $n\ge3$ and $m>0$, the canonical
Schwarzschild-Tangherlini spacetime is $C^0$-inextendible.
\end{corollary}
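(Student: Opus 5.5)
The plan is to deduce Corollary~\ref{cor:ST-internal} directly from Theorem~\ref{thm:BK-C0-inextendible} by specializing the parameters.

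\emph{Choice of parameters.} I take $\Lambda=0$, so that $\alpha=0$ and \eqref{eq:BK-global-parameters} forces $k=1$. I take $(\Sigma,\gamma_\Sigma)$ to be the round unit sphere $S^{n-1}$. This fiber is closed and connected, and it satisfies $\Ric_{\gamma_\Sigma}=(n-2)\gamma_\Sigma$, which is \eqref{eq:Einstein-fiber-condition} with $k=1$. The lapse then becomes $f(r)=1-2m r^{-(n-2)}$, which is the Schwarzschild--Tangherlini lapse. Proposition~\ref{prop:BK-Einstein-equation} gives $\Ric_g=0$.

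\emph{Identification with the canonical spacetime.} The one point needing care is that $(\Mcan,g)$ really is the canonical (maximal Kruskal) Schwarzschild--Tangherlini spacetime. Proposition~\ref{prop:global-Kruskal} supplies exactly this, since $G_\infty=-\infty$ when $\Lambda=0$:
\[
 \mathcal D=\{UV<G_0\}.
\]
This is the standard Kruskal domain: two exterior blocks and two dynamic blocks joined across the bifurcation sphere, with metric $-F(UV)\,dU\,dV+r^2\gamma_{S^{n-1}}$. For $n=3$ this is the usual Kruskal form, up to the normalization of $U,V$, which is an isometry. So I would state that "canonical" means this Kruskal extension, and note that any other standard presentation is isometric to it by the usual rescaling of the null coordinates.

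\emph{Conclusion.} Theorem~\ref{thm:BK-C0-inextendible} then yields $C^0$-inextendibility for every $n\ge3$ and $m>0$. I expect no real obstacle. The only potential subtlety is the convention-dependent identification of "canonical", and that is settled by the explicit construction in Proposition~\ref{prop:global-Kruskal}.
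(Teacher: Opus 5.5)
Your proposal is correct and matches the paper's proof: specialize Theorem~\ref{thm:BK-C0-inextendible} to $\Lambda=0$, $k=1$, and the round unit $\mathbb S^{n-1}$ fiber, which is closed, connected, and satisfies $\Ric_{\gamma_\Sigma}=(n-2)\gamma_\Sigma$. Your extra remark that $(\Mcan,g)$ with $G_\infty=-\infty$ is the standard Kruskal extension is a reasonable clarification of the word ``canonical''; the paper leaves it implicit.
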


\begin{proof}
Set $\Lambda=0$, $k=1$, and
$(\Sigma,\gamma_\Sigma)=(\mathbb S^{n-1},d\Omega_{n-1}^2)$ in
Theorem~\ref{thm:BK-C0-inextendible}.
\end{proof}

\section{Conclusions and Future Work}\label{sec:conclusion}

The local theorem excludes continuous extension through a warped spacelike
singularity using causal variation, compactness of the fiber, integrability of
the longitudinal and angular causal budgets, monotonicity of the relative warp
factors, and growth of the longitudinal warp factor.  The radial compression is the step
that removes the need for a transitive isometry group on the fiber.  For the
canonical one-horizon metric, the global Kruskal coordinates and the geodesic
first integrals show that every finite timelike geodesic escaping compact
subsets reaches this singularity.  These two arguments give
$C^0$-inextendibility of the full canonical spacetime.

The roles of the assumptions are separate.  The parameter range in
\eqref{eq:BK-global-parameters} gives a single simple horizon, while
$m>0$ determines the singular asymptotic
$D(r)\sim2mr^{-(n-2)}$.  The condition $n\ge3$ makes the angular variation
integrable, and closedness of $\Sigma$ gives both completeness of finite-length
fiber traces and the compactness used in the separator argument.  The Einstein
condition \eqref{eq:Einstein-fiber-condition} identifies the metric as a
vacuum Birmingham-Kottler solution; it is not used in the geodesic-end
classification or in the local $C^0$ obstruction.

Several natural extensions require new global input.  Charged solutions and
positive cosmological constant introduce further dynamic and static blocks;
degenerate horizons no longer admit the simple Kruskal normalization used
here.  For nonsymmetric fibers, multiple horizons and noncanonical horizon
identifications also raise the possibility of repeated horizon crossings.
Controlling those crossings, and determining when they can accumulate, is a
natural next problem.  A second direction is to replace the monotonicity of
$b/c$ by a sharp quantitative condition that still permits the terminal-trace
compression.
\section*{Acknowledgements}
The author is deeply grateful to his family for their steady support and
encouragement.

\section*{Statements and Declarations}

\noindent\textbf{Funding.}
This work is funded by the Indonesian Endowment Fund for Education (LPDP), on
behalf of the Indonesian Ministry of Higher Education, Science and
Technology, and managed under the EQUITY Program (Contract No.
4298/B3/DT.03.08/2025).

\medskip
\noindent\textbf{Data availability.}
No datasets were generated or analyzed for this work.

\end{document}